\DeclareMathOperator{\C}{\mathcal{C}}
\DeclareMathOperator{\V}{\mathbb{V}}
\newtheorem{theorem}{Theorem}[section]
\newtheorem{lemma}[theorem]{Lemma}
\newtheorem{example}[theorem]{Example}
\newtheorem{corollary}[theorem]{Corollary}
\newtheorem{definition}[theorem]{Definition}
\newtheorem{proposition}[theorem]{Proposition}
\newtheorem{remark}[theorem]{Remark}
\newtheorem{construction}[theorem]{Construction}
\newcommand{\fqn}{\mathbb{F}_{q^n}}
\newcommand{\fqk}{\mathbb{F}_{q^k}}
\newcommand{\cA}{{\mathcal A}}
\newcommand{\cC}{{\mathcal C}}
\newcommand{\cG}{{\mathcal G}}
\newcommand{\cP}{{\mathcal P}}
\newcommand{\F}{{\mathbb F}}
\newcommand{\fq}{{\mathbb F}_{q}}
\newcommand{\N}{\mathrm{N}}
\newcommand{\Orb}{\mathrm{Orb}}
\newcommand{\im}{\mathrm{Im}}
\title{Asymptotically optimal cyclic subspace codes}
\author{Chiara Castello and Paolo Santonastaso}
\date{}
\begin{document}

\begin{abstract}
Subspace codes, and in particular cyclic subspace codes, have gained significant attention in recent years due to their applications in error correction for random network coding. In this paper, we introduce a new technique for constructing cyclic subspace codes with large cardinality and prescribed minimum distance. Using this new method, we provide new constructions of cyclic subspace codes in the Grassmannian $\mathcal{G}_q(n,k)$ of all $k$-dimensional $\F_q$-subspaces of an $n$-dimensional vector space over $\fq$, when $k\mid n$ and $n/k$ is a composite number, with minimum distance $2k-2$ and large size. We prove that the resulting codes have sizes larger than those obtained from previously known constructions with the same parameters. Furthermore, we show that our constructions of cyclic subspace codes asymptotically reach the Johnson type bound II for infinite values of $n/k$.
\end{abstract}

\maketitle
{\textbf{Keywords}: Subspace code; cyclic subspace code; Sidon space}

{\textbf{MSC2020}: Primary 11T71. Secondary  11T99; 94B99.}

\section{Introduction}

The work \cite{koetter2008coding} of Koetter and Kschischang sparked significant interest in subspace codes, primarily due to their efficiency in the error-correction in random network coding, a decoding scheme introduced in the seminal paper \cite{ahlswede2001perfect} by Ahlswede, Cai, Li, and Yeung. Since then, extensive research has focused on the theory of subspace codes, aiming to improve bounds, constructions, and decoding algorithms (cf.  \cite{greferath2018network}). 

The natural ambient space in which to define a subspace code is an $n$-dimensional vector space $\V$ over $\F_q$, where $\F_q$ denotes the finite field with $q$ elements, where $q$ is a power of a prime. The set $\mathcal{P}_q(\V)$ of all the $\fq$-subspaces of $\V$ forms a metric space when endowed with the \emph{subspace distance}, defined as 
\[
d(U,V)=\dim_{\fq}(U+V)-\dim_{\fq}(U\cap V),
\]
for any $U,V\in \mathcal{P}_q(\V)$. A subspace code $\C$ is a subset of the metric space $(\cP_q(\V),d)$ and as usual the \emph{minimum distance} of $\cC$ is defined as
\[ d(\cC)=\min\{ d(U,V) \colon U,V \in \cC, U\ne V \}. \]  Consider the Grassmannian $\cG_q(n,k)$, i.e. the set of the $k$-dimensional $\fq$-subspaces of $\V$. Subspace codes in $\cG_q(n,k)$ are refereed as \emph{constant-dimension subspace codes}. One of the most studied class is given by \emph{cyclic subspace codes} introduced in \cite{etzion2011error}. Consider $\V=\F_{q^n}$ as $n$-dimensional vector space over its subfield $\F_q$, a code $\cC \subseteq \mathcal{G}_q(n,k)$ is said to be \emph{cyclic} if for every $\alpha \in \F_{q^n}^*$ and every $V \in \cC$ then $\alpha V \in \cC$. This is equivalent to say that 
\[
\cC=\bigcup_{i=1}^s\mathrm{Orb}_{\fqn^*}(V_i)\subseteq \cG_q(n,k)
\]
for some $\fq$-subspaces $V_1,\ldots,V_s\in\cG_q(n,k)$, where $\Orb_{\F_{q^n}}(V_i)=\{\alpha V_i:\alpha \in \F_{q^n}^*\}$ is the orbit of $V_i$ under the action of the Singer cycle defined by the multiplicative group of $\F_{q^n}$. For this reason, $\cC$ is called \emph{cyclic subspace code}. In the case in which $s=1$, and so
\[
\cC=\mathrm{Orb}_{\fqn^*}(V)=\lbrace \alpha V\colon \alpha\in\fqn^*\rbrace,
\]
for some $\fq$-subspace $V$ of $\fqn$, $\cC$ is also called \emph{one-orbit cyclic subspace code} and $V$ is said to be a \emph{representative} of the orbit; whereas if $s>1$, then $\cC$ is also called \emph{multi-orbit cyclic subspace code}. In the next, we write $\mathrm{Orb}(V)$ instead of $\mathrm{Orb}_{\fqn^*}(V)$ when the Singer cycle acting on $V$ is clear from the context.\\
It is well known that if $\mathrm{Orb}(V)\subseteq\cG_q(n,k)$ has maximum possible distance, i.e. $2k$, then $k$ is a divisor of $n$ and $V$ is a shift of the subfield $\fqk$. These codes are known as spread codes (see e.g. \cite{manganiello2008spread}). Their downside is their small size: they contain $(q^n-1)/(q^k-1)$ codewords, which is the smallest size of any cyclic subspace code generated by a $k$-dimensional subspace. On the other hand, the largest size of a one-orbit cyclic subspace code is $(q^n-1)/(q-1)$, and codes attaining this size are said to be \emph{full-length}. 
In \cite{trautmann2013cyclic} the authors conjectured the existence of full length one-orbit cyclic subspace codes in $\cG_q(n,k)$ with minimum distance $2k-2$, for every positive integers $n,k$ such that $1<k\leq n/2$, where the latter condition on $k$ can be motivated by a duality argument (see e.g. \cite[Proposition 4.2]{castello2025quasi}). The conjecture was solved for most of the cases in \cite{roth2017construction}, by making use of an algebraic description of these codes known as Sidon spaces. 

Since this connection with coding theory was highlighted, the study of Sidon spaces has increased significantly, and several explicit constructions arose (see e.g. \cite{ben2016subspace,santonastaso2022linearized, roth2017construction,castello2023constructions,castello2025generalized,zhang2022new, zhang2023new,feng2021new,li2023cyclic,wu2025new}). 
Over the years, constructions of multi-orbit cyclic subspace codes have been provided by combining one-orbit cyclic subspace codes in $\cG_q(rk,k)$ with minimum distance $2k-2$, in a way that cardinality increases without reducing the minimum distance (see e.g. \cite{liu2023new,niu2020several,niu2024new,zhang2022constructions,zhang2023constructions,zhang2023further,zhang2025three,zhao2019characterization,zullo2023multi}).


A central problem in the theory of subspace codes is determining the maximum possible size $\cA_q(n,d,k)$ of a subspace code in $\cG_q(n,k)$ of minimum distance $d$. In recent years, various techniques and methods have been employed to construct larger subspace codes and, consequently, to improve the known lower bounds for $\cA_q(n,d,k)$; we refer to \cite[Chapter 29]{huffman2021concise} and \cite{ kurz2024constructions} for a recent overview. The known upper bounds on $\cA_q(n,d,k)$ are typically the $q$-analogs of the classical bounds established for constant-weight codes. We refer to \cite[Section 3]{cossidente2023combining} and \cite[Section 3.2.1]{heinlein2017tables} for a complete overview and a comparison of the known upper bounds on constant-dimension subspace codes. Among them we mention the \textit{Singleton bound for subspace codes} (\cite[Theorem 9]{koetter2008coding}), \textit{Sphere-packing bound} (\cite[Theorem 6]{koetter2008coding}), \textit{Anticode bound} (\cite[Theorem 1']{ahlswede2001perfect},\cite[Theorem 5.2]{wang2003linear}), \textit{Johnson type bound II} (\cite[Theorem 3]{xia2009johnson},\cite[Theorem 6]{khaleghi2009subspace},\cite[Theorem 4]{etzion2011error})).
The Johnson type bound 
 II for constant dimension subspace codes, (cf. \cite[Theorem 6]{etzion2011error}, \cite[Theorem 7]{khaleghi2009subspace} and \cite[Corollary 3]{xia2009johnson}) reads as:

\begin{equation} \label{eq:johnsonbound}
\cA_q(n,d,k)\leq J_q(n,d,k):=\left\lfloor\frac{q^n-1}{q^k-1}\left\lfloor\frac{q^{n-1}-1}{q^{k-1}-1}\left\lfloor\dots\left\lfloor\frac{q^{n-k+d/2+1}-1}{q^{d/2+1}-1}\left\lfloor\frac{q^{n-k+d/2}-1}{q^{d/2}-1}\right\rfloor\right\rfloor\dots\right\rfloor\right\rfloor\right\rfloor.
\end{equation}

By initially assuming that $k\leq n/2$, it can be proved that the Johnson type bound II improves all upper bounds mentioned so far (see e.g. \cite[Section 4]{kurz2024constructions}) and subsequently, noting that $\mathcal{A}_q(n,d,k)=\mathcal{A}_q(n,d,n-k)$, this holds true in full generality (see e.g. \cite{kschischang2021network}).

A family of codes $\mathcal{F}$ is said to be \emph{asymptotically optimal in \(n\)} if there exists a sequence of codes $(\mathcal{C}_h)_{h \in \mathbb{N}} \subseteq \mathcal{F}$ such that each
$
\mathcal{C}_h \subseteq \mathcal{G}_q(n_h, k_h)
$
is a constant-dimension subspace code with minimum distance \( d_h \), and
\[
\lim_{h \rightarrow \infty} \frac{|\mathcal{C}_h|}{J_q(n_h, d_h, k_h)} = 1.
\]
Analogously, a family of codes $\mathcal{F}$ is said to be \emph{asymptotically optimal in \(q\)} if there exists a sequence of codes $(\mathcal{C}_{q_h})_{h \in \mathbb{N}} \subseteq \mathcal{F}$ such that each
\[
\mathcal{C}_{q_h} \subseteq \mathcal{G}_{q_h}(n, k)
\]
is a constant-dimension subspace code with fixed parameters \( n \), \( k \), and minimum distance \( d \), and
\[
\lim_{h \rightarrow \infty} \frac{|\mathcal{C}_{q_h}|}{J_{q_h}(n, d, k)} = 1.
\]
We will use families of codes whose sizes depend both on $q$ and $n$. Therefore, when we will evaluate the asymptotic behavior of the considered codes, we will consider the limit as $n$ (or $q$) goes to infinity. Also, we will consider the following standard asymptotic notation: given two functions $f(h)$ and $g(h)$, depending on a positive parameter $h$, if $f(h)$ is \textit{asymptotically equal} to $g(h)$, i.e. \\ $\lim\limits_{h \rightarrow +\infty}(f(h)/g(h))=1$, we write 
\[
f(h) \sim g(h), \ \ \ \mbox{as }h \rightarrow + \infty.
\]

Moreover, if $g(h)$ grows faster than $f(h)$, meaning that $\lim\limits_{h \rightarrow +\infty}(f(h)/g(h))=0$ we denote this as 
$f(h)=o_h(g(h))$.

A first construction of asymptotically optimal subspace codes was proposed in~\cite{silva2008rank}, based on lifting rank-metric codes. Since then, several other constructions of asymptotically optimal subspace codes have been developed using a variety of techniques. For a comprehensive overview, we refer the reader to~\cite{kurz2024constructions}. However, in the literature, no cyclic subspace codes are known to be asymptotically optimal, except for the case $n=3k$ and $d=2k-2$, as we will see soon. 
For constant dimension subspace codes in $\cG_q(n,k)$ with minimum distance $d=2k-2$,  \eqref{eq:johnsonbound} reads as follows: 

\begin{equation} \label{eq:johnsonboundoptimal}
\cA_q(n,2k-2,k)\leq J_q(n,2k-2,k):=\left\lfloor\frac{q^n-1}{q^k-1}\left\lfloor\frac{q^{n-1}-1}{q^{k-1}-1}\right\rfloor\right\rfloor.
\end{equation}


In Table \ref{table:constructionscyclic}, we resume the parameters of the largest known constructions of cyclic subspace codes in $\mathcal{G}_q(n,k)$ and minimum distance $2k-2$. In particular, in fourth column we consider the asymptotic behavior of these sizes $S_i(n,k,q)$, for $i=1,\dots, 5$, as $n$ (or $q$) tends to infinity. Note that in all these constructions we have that $k \mid n$, therefore $n$ goes to infinity as soon as $k$ goes to infinity.
\begin{table}[htp]
	\centering
	\medskip
\small
\begin{tabular}{|c|c|c|c|c|} 
	\hline
&	Parameters & Size & Asymptotic &Reference\\
&	& & behavior &\\
	\hline
    & & & &\\
1) & $n=2k,$ & $S_1(2k,k,q)=\left\lfloor \frac{q-1}{2}\right\rfloor\frac{q^{2k}-1}{q-1}$ &$\sim \frac{1}{2}q^{2k}$ &\cite[Lemma 38]{roth2017construction} \\
& $q>2$& & &\\
\hline 
& & & &\\
2) & $n=4k$ & $S_2(4k,k,q)=\lfloor \frac{q^k-2}{2} \rfloor (q^k-1)(q^{4k}-1)$ & $\sim \frac{1}{2}q^{6k}$ &\cite[Theorem 3.3]{yu2024two} \\
& & & &\\
\hline
& & & &\\
3) & $n=rk,$ & $S_3(rk,k,q)=q^k\frac{q^{(\ell+1)k}-1}{q^k-1}(q^{rk}-1),$ &$\sim q^{(\lfloor\frac{r-1}{2}\rfloor+r)k}$& \cite[Theorem 3.5]{zhang2024large} \\
& $r \geq 3$ & $\ell=\left(\left\lceil \frac{r}{2}\right\rceil-2\right)$ & &\\
\hline 
& & & &\\
4) & $n=rk,$ & $S_4(rk,k,q)=h \left((q^k-1)^h(q^{rk}-1)+\frac{(q^k-1)^{h-1}(q^{rk}-1)}{q-1}\right)$ &$\sim q^{(\frac{3r-1}{2})k}$& \cite[Theorem 2.2]{yu2024two} \\
& $ r=2h+1, \ h \geq 2$ & & &\\
\hline 
& & & &\\
5) & $n=rk,$ & $S_5(rk,k,q)=hq^k (q^k-1)^{h-1}(q^{rk}-1)+\frac{q^{rk}-1}{q^k-1}$ &$\sim q^{(\frac{3r-1}{2})k}$ &\cite[Theorem 2.2]{han2024new} \\
 & $r=2h+1, \ h \geq 2$  & & &\\
\hline
\end{tabular}
\caption{Largest known constructions of cyclic subspace codes in $\mathcal{G}_q(rk,k)$ with minimum distance $2k-2$}\label{table:constructionscyclic}
\end{table}

As mentioned before, we observe that Construction 3) for $r=3$ is the only known construction of cyclic subspace code with minimum distance $2k-2$ whose size asymptotically reaches the Johnson type bound II for $k$ (or $q$) which tends to infinity. Also, the sizes of Construction 1) and 2) in Table \ref{table:constructionscyclic} asymptotically approach the Johnson type bound II within a factor of $1/2+o_k(1)$ (or $1/2+o_q(1)$) and this is the best asymptotic behavior for the size of the known constructions of cyclic subspace codes with the same parameters.

The general idea shared by the constructions mentioned in Table \ref{table:constructionscyclic} consists of taking unions of one-orbit cyclic subspace codes in the same fixed Grassmannian $\cG_q(rk,k)$, with minimum distance $2k-2$, whose representatives satisfy suitable conditions which guarantee that the minimum distance of the obtained code does not decrease. 

\textbf{Our contribution.} 
In this paper, we introduce a new method for constructing large cyclic subspace codes in $\mathcal{G}_q(rk,k)$ with minimum distance $2k-2$, for any composite integer $r$. This time, this new approach consists in combining cyclic subspace codes contained in different Grassmannians. More precisely, starting from a proper tower of field extensions $\F_q < \F_{q^{n_1}} < \F_{q^{n_2}}$, we define an operation $\odot$ that \textit{nests} cyclic subspace codes of $\mathcal{G}_q(n_1,k)$ into cyclic subspace codes of $\mathcal{G}_q(n_2,n_1)$, producing a cyclic subspace code in $\mathcal{G}_q(n_2,k)$. When applied to large cyclic subspace codes with the best possible minimum distance, this operation yields new large cyclic subspace codes with the best minimum distance, as well. As a result, we are able to construct new families of cyclic subspace codes with larger cardinalities than those of previously known constructions and the best possible minimum distance, for all the parameters considered.\\
Surprisingly, this technique leads to the explicit construction of cyclic subspace codes in $\mathcal{G}_q(rk,k)$, with minimum distance $2k-2$, whose sizes asymptotically reach the Johnson type bound II as $k$ or $q$ tends to infinity, for infinitely many values of the integer $r$.
Furthermore, in the case in which the only prime divisors of $r$ are $2$ and $3$, we provide constructions of cyclic subspace codes whose sizes asymptotically approach the Johnson type bound II within a factor of $1/2^e + o_k(1)$, where $e$ is the 2-adic valuation of $r$.


\textbf{Structure of the paper.}
The paper is organized as follows. 
Section \ref{sec:nested} introduces the new method to construct multi-orbit codes of $\mathcal{G}_q(rk,k)$ for $r$ composite. 
In Section \ref{sec:specialconstr} we will use this new method to present special constructions of cyclic subspace codes in $\mathcal{G}_q(rk,k)$ with minimum distance $2k-2$, for every composite integer $r$. Section \ref{sec:comparison} is devoted to the comparison of the sizes of the new constructions with the ones of Table \ref{table:constructionscyclic} and the Johnson type bound II. Finally, in Section \ref{sec:conclusions} we leave some open problems and comments.

\section{Nested constructions}
\label{sec:nested}
In this section, we introduce a new technique to construct multi-orbit codes of $\cP_q(n)$ for $n$ composite, by using an operation, that we will denote by $\odot$.  
We will proceed step by step in defining the operation $\odot$, starting with its action on two single subspaces up to how it acts among two multi-orbit codes. We will then investigate the size and the minimum distance of the codes obtained via this operation. Finally, we will show how to extend the operation $\odot$ between more than two multi-orbit codes.

\subsection{Operation $\odot$ between two one-orbit cyclic subspace codes}
Assume that $n = rm$, for some positive integers $r,m>1$. Let $V_1$ be an $\F_q$-subspace of $\F_{q^m}$, $V_2$ an $m$-dimensional $\fq$-subspace of $\fqn$ and $\Phi: \F_{q^m} \to \F_{q^n}$ be an injective $\fq$-linear map such that $V_2 := \im(\Phi)$. Define
\[
V_2 \odot_{\Phi} V_1 := \{\Phi(u) \colon u \in V_1\} \subseteq \F_{q^n}.
\]
\begin{remark}
Note that the operation $\odot_{\Phi}$ depends on the $\fq$-linear map $\Phi$. However, in the next we will write $\odot$ instead of $\odot_{\Phi}$ to lighten the notation.\\
Also, $V_2 \odot V_1= \im(\Phi \vert_{V_1})\subseteq V_2$, where $\Phi \vert_{V_1}$ denotes the restriction of $\Phi$ to $V_1$. Moreover, since $\Phi$ is injective, $\dim_{\F_q}(V_1)=\dim_{\F_q}(\Phi(V_1))=\dim_{\F_q}(V_2 \odot V_1)$.
\end{remark}

We now present an example to illustrate the previous definition. Precisely, we define $V_2\odot V_1\in\cG_q(12,3)$, obtained from $\fq$-subspaces $V_1\in\cG_q(6,3)$ and $V_2\in\cG_q(12,6)$.

\begin{example}
    Let $r=2$, $m=6$ and $n=rm=12$. Let \[
    V_1=\lbrace u+(u+u^q)\eta\colon u\in\F_{q^3}\rbrace\subseteq\F_{q^{6}},
    \]
    where $\eta\in\F_{q^{6}}\setminus\F_{q^3}$. Consider the $\fq$-linear map $\Phi:\F_{q^6}\rightarrow\F_{q^{12}}$ such that $\Phi(v)=v+v^q\gamma$ for any $v\in\F_{q^6}$ where $\gamma\in\F_{q^{12}}\setminus \F_{q^6}$. Let
    \[
    V_2=\im(\Phi)=\lbrace v+v^q\gamma\colon v\in\F_{q^6}\rbrace\subseteq\F_{q^{12}}.
    \]
    Note that the map $\Phi$ is $\fq$-linear and injective. Then we can define
    \[
    \begin{aligned}
        V_2\odot V_1&=\im(\Phi\vert_{V_1})=\lbrace v+v^q\gamma\colon v\in V_1\rbrace\\
        &=\lbrace u+(u+u^q)\eta+u^q\gamma+(u^q+u^{q^2})\eta^q\gamma\colon u\in \F_{q^3}\rbrace\subseteq V_2\subseteq \F_{q^{12}}.
    \end{aligned}
    \]
    $\hfill \lozenge$
\end{example}


The following lemma allows us to define the operation $\odot$ between the orbit of a subspace $V_2\in\cG_q(n,m)$ and the orbit of a subspace $V_1\in\cG_q(m,k)$.

First, for every $\beta\in\fqn^*$, let $\beta V_2:=\lbrace \beta v \colon v\in V_2\rbrace$ and define 
\[
(\beta V_2)\odot V_1:=\im((\beta\Phi)\vert_{V_1}).
\]

\begin{lemma}
\label{lemma:betaout}
For any $\beta\in\fqn$, it holds
\[
(\beta V_2)\odot V_1=\beta (V_2\odot V_1).
\]
\end{lemma}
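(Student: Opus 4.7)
The plan is simply to unravel the two definitions and verify that they give the same set. Recall that by hypothesis $V_2 = \im(\Phi)$ for an injective $\F_q$-linear map $\Phi : \F_{q^m} \to \F_{q^n}$. For any $\beta \in \F_{q^n}$, consider the map $\beta\Phi : \F_{q^m} \to \F_{q^n}$ defined by $u \mapsto \beta \Phi(u)$. Since multiplication by $\beta$ is an $\F_q$-linear endomorphism of $\F_{q^n}$, the composition $\beta \Phi$ is still $\F_q$-linear, and clearly $\im(\beta \Phi) = \beta V_2$. Moreover, if $\beta \neq 0$, multiplication by $\beta$ is an $\F_q$-linear bijection, so $\beta \Phi$ is injective as well; hence $(\beta V_2) \odot V_1$ is well-defined via $\beta \Phi$, and the case $\beta = 0$ trivially gives $\{0\}$ on both sides.

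First I would apply the definition of $\odot$ to the pair $(\beta V_2, V_1)$, using $\beta \Phi$ as the underlying injective $\F_q$-linear map whose image is $\beta V_2$. This gives directly
\[
(\beta V_2) \odot V_1 \;=\; \im\bigl((\beta \Phi)\vert_{V_1}\bigr) \;=\; \{\beta \Phi(u) \colon u \in V_1\}.
\]
Next I would compute the right-hand side: by definition $V_2 \odot V_1 = \{\Phi(u) \colon u \in V_1\}$, and scaling the whole set by $\beta$ yields
\[
\beta(V_2 \odot V_1) \;=\; \{\beta \Phi(u) \colon u \in V_1\}.
\]
Comparing the two expressions proves the equality.

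There is no real obstacle; the only point worth flagging is that the operation $\odot$ is tied to a choice of linear map $\Phi$ (as pointed out in the remark), so the main conceptual content of the lemma is that replacing $\Phi$ with $\beta \Phi$ is the natural way to represent $\beta V_2$, after which the desired equality becomes tautological. This will justify, in later sections, that when one passes from a single pair of subspaces to orbits under the Singer cycle, the operation $\odot$ interacts well with the $\F_{q^n}^*$-action.
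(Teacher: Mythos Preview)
Your proof is correct and follows exactly the paper's approach: the paper defines $(\beta V_2)\odot V_1:=\im((\beta\Phi)\vert_{V_1})$ just before the lemma and then simply writes $(\beta V_2)\odot V_1=(\beta\Phi)(V_1)=\beta(\Phi(V_1))=\beta(V_2\odot V_1)$. Your version is merely more explicit, and you additionally handle the degenerate case $\beta=0$, which the paper does not address.
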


\begin{proof}
The assertion immediately follows by observing that
\[
(\beta V_2)\odot V_1=(\beta\Phi)(V_1)=\beta(\Phi(V_1))=\beta(V_2\odot V_1).
\]
\end{proof}

Let $V_1 \in\cG_q(m,k)$ and $\C_{2}=\mathrm{Orb}(V_2)\subseteq \cG_q(n,m)$. Let $\Phi:\F_{q^m}\rightarrow\fqn$ be an $\fq$-linear map such that $V_2=\im(\Phi)$. Then by \Cref{lemma:betaout} we have
\[
\C_{2}\odot V_1:=\{(\beta V_2) \odot V_1 \colon \beta \in \fqn^* \}=\mathrm{Orb}({V_2}\odot V_1)\subseteq \cG_q(n,k).
\]
Similarly, if we consider $\cC_1=\mathrm{Orb}(V_1)$, then
\[
\cC_2\odot \cC_1:= \left\{(\beta V_2) \odot (\alpha V_1) \colon \alpha \in \F_{q^m}^*,\beta \in \F_{q^n}^*\right\} =\bigcup_{\alpha\in\F_{q^m}^*}\mathrm{Orb}(V_2\odot \alpha V_1)\subseteq\mathcal{G}_q(n,k).
\]
We remark that $\mathcal{C}_2=\mathrm{Orb}_{\fqn^*}(V_2)\subseteq\cG_q(n,m)$ and $\mathcal{C}_1=\mathrm{Orb}_{\F_{q^m}^*}(V_1)\subseteq\cG_q(m,k)$. However, in the next, we omit in the notation the Singer cycle acting on $V_1\in\cG_q(n,m)$ and $V_2\in\cG_q(m,k)$ because it will be clear from the context. \\
\\
The previous observation enables us to provide the following definitions.

\begin{definition} \label{def:fullone+fullone}

    Let $n=rm$ and $k\leq m$. Let $V_1 \in\cG_q(m,k)$ and $V_2\in \cG_q(n,m)$. Consider $\Phi:\F_{q^m}\rightarrow\fqn$ be an injective $\fq$-linear map such that $V_2=\im(\Phi)$.
    Denoted by $\C_{2}=\mathrm{Orb}(V_2)\subseteq \cG_q(n,m)$, we define
\[
\C_{2}\odot V_1:=\mathrm{Orb}({V_2}\odot V_1)\subseteq \cG_q(n,k).
\]
Moreover, denoted by $\cC_1=\mathrm{Orb}(V_1)\subseteq\cG_q(m,k)$, we define
\[
\cC_2\odot \cC_1:=\bigcup_{\alpha\in\F_{q^m}^*}\mathrm{Orb}(V_2\odot \alpha V_1)\subseteq\mathcal{G}_q(n,k).
\]
\end{definition}

Note that $\cC_2\odot \cC_1$ is a multi-orbit cyclic subspace code. 

Recall that for $V\in\cG_q(n,k)$, the \emph{stabilizer} of $V$ is defined as
\[
\mathrm{Stab}(V)=\lbrace \alpha\in\fqn^*\mid \alpha V=V\rbrace.
\]
In other words, the $\mathrm{Stab}(V) \cup \{0\}$ is the largest subfield $\F_{q^t}$ of $\fqn$ such that $V$ is an $\F_{q^t}$-subspace of $\fqn$. In \cite{otal2017cyclic}, it is proved the following lemma which connects the cardinality of a cyclic subspace code with the stabilizer of its representative.

\begin{lemma} [see \textnormal{\cite[Theorem 1]{otal2017cyclic}}]
\label{lem:cardinalityifffqt}
Let $V\in\cG_q(n,k)$. The code $\mathrm{Orb}(V)$ has cardinality $\frac{q^n-1}{q^t-1}$ if and only if $\mathrm{Stab}(V)=\F_{q^t}^*$. 
\end{lemma}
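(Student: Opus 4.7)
The plan is to deduce the lemma from the orbit–stabilizer theorem applied to the natural action of $\F_{q^n}^*$ on $\cG_q(n,k)$ given by $\alpha\cdot V=\alpha V$. Since this action is well-defined (multiplication by a nonzero scalar is an $\F_q$-linear bijection of $\fqn$), the theorem yields
\[
|\mathrm{Orb}(V)|=\frac{|\F_{q^n}^*|}{|\mathrm{Stab}(V)|}=\frac{q^n-1}{|\mathrm{Stab}(V)|}.
\]
Thus the only nontrivial content is to show that $|\mathrm{Stab}(V)|=q^t-1$ forces $\mathrm{Stab}(V)=\F_{q^t}^*$, and vice versa.

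The key step is to verify the assertion stated in the paragraph preceding the lemma, namely that $\mathrm{Stab}(V)\cup\{0\}$ is a subfield of $\fqn$. First, closure under multiplication and inverses is immediate: if $\alpha V=V$ and $\beta V=V$ then $(\alpha\beta)V=\alpha(\beta V)=V$, and $\alpha^{-1}V=\alpha^{-1}(\alpha V)=V$. The subtle part is closure under addition. For $\alpha,\beta\in\mathrm{Stab}(V)$ with $\alpha+\beta\neq 0$, every $v\in V$ satisfies $(\alpha+\beta)v=\alpha v+\beta v\in V+V=V$, so $(\alpha+\beta)V\subseteq V$. Since multiplication by $\alpha+\beta$ is an $\F_q$-linear bijection of $\fqn$, it preserves $\F_q$-dimension, so $\dim_{\F_q}((\alpha+\beta)V)=\dim_{\F_q}V=k$; the containment is therefore an equality, giving $\alpha+\beta\in\mathrm{Stab}(V)$. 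The case $\alpha+\beta=0$ is absorbed by the adjoined zero.

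Once this is established, $\mathrm{Stab}(V)\cup\{0\}$ is a subfield of $\fqn$, hence equals $\F_{q^t}$ for some $t\mid n$, and in particular $|\mathrm{Stab}(V)|=q^t-1$. Combining this with the orbit–stabilizer formula gives
\[
|\mathrm{Orb}(V)|=\frac{q^n-1}{q^t-1}.
\]
From here both implications follow immediately: if $\mathrm{Stab}(V)=\F_{q^t}^*$ the cardinality formula reads off directly, while if $|\mathrm{Orb}(V)|=(q^n-1)/(q^t-1)$ then $|\mathrm{Stab}(V)|=q^t-1$, and since $\mathrm{Stab}(V)\cup\{0\}$ is a subfield of $\fqn$ of cardinality $q^t$ it must coincide with $\F_{q^t}$, forcing $\mathrm{Stab}(V)=\F_{q^t}^*$.

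There is no real obstacle in this proof; the only step requiring any care is the closure of $\mathrm{Stab}(V)\cup\{0\}$ under addition, where one must remember to invoke the dimension argument to upgrade the containment $(\alpha+\beta)V\subseteq V$ to equality. Everything else is a direct application of the orbit–stabilizer theorem and the classification of subfields of $\fqn$.
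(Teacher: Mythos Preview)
Your proof is correct; the paper does not supply its own argument for this lemma but cites \cite[Theorem 1]{otal2017cyclic}, and your orbit--stabilizer approach is precisely the standard proof. One very small point: to conclude that the subfield $\mathrm{Stab}(V)\cup\{0\}$ has the form $\F_{q^t}$ (rather than merely $\F_{p^s}$ for $p$ the characteristic), you implicitly use that $\F_q^*\subseteq\mathrm{Stab}(V)$, which holds because $V$ is an $\F_q$-subspace; this is immediate but worth stating.
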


By Lemma \ref{lem:cardinalityifffqt}, $\mathrm{Orb}(V)$ has the maximum possible cardinality if and only if $\mathrm{Stab}(V)=\fq^*$. In this case, $\mathrm{Orb}(V)$ is a \emph{full-length one-orbit cyclic subspace code}. 
In analogy with the case of one-orbit cyclic subspace codes, we say that a multi-orbit cyclic subspace code $\cC=\bigcup_{i=1}^s\mathrm{Orb}(V_i)$ has \emph{full-length} or that $\cC=\bigcup_{i=1}^s\mathrm{Orb}(V_i)$ is a \textit{full-length multi-orbit cyclic subspace code} if for any $i\in\{1,\ldots,s\}$, $\mathrm{Orb}(V_i)$ is a full-length one-orbit cyclic subspace code, i.e., for any $i\in\{1,\ldots,s\}$, $\beta V_i=V_i$ implies that $\beta\in\fq^*$.

The following result provides sufficient conditions to establish when the codes $\cC_2 \odot V_1$ and $\cC_2 \odot \cC_1$ have full-length.

\begin{proposition} \label{prop:fullone+fullone}
    Let $n=rm$ and $k\leq m$. Let $\cC_1=\mathrm{Orb}(V_1) \subseteq\cG_q(m,k)$ and $\cC_2=\mathrm{Orb}(V_2)\subseteq \cG_q(n,m)$. Consider an injective $\fq$-linear map $\Phi:\F_{q^m}\rightarrow\fqn$ such that $V_2=\im(\Phi)$.
If $\cC_2$ has full-length and $d(\cC_2)>2(m-k)$, then 
$\C_{2}\odot V_1$ 
and $\cC_2\odot\cC_1$ have full-length.
\end{proposition}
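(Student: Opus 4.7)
The plan is to reduce both claims to one simple observation: any $\beta\in\fqn^*$ that stabilizes the $k$-dimensional $\fq$-subspace $W:=V_2\odot V_1=\Phi(V_1)\subseteq V_2$ must already stabilize $V_2$. Once this is shown, the full-length hypothesis on $\cC_2$ forces $\beta\in\fq^*$, so $\mathrm{Stab}(W)=\fq^*$, and Lemma \ref{lem:cardinalityifffqt} immediately yields that $\mathrm{Orb}(W)=\cC_2\odot V_1$ is full-length.

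To prove the reduction I would argue as follows. Note first that $W=\Phi(V_1)$ has $\fq$-dimension $k$ because $\Phi$ is injective and $\fq$-linear, and $W\subseteq V_2=\im(\Phi)$ by construction. If $\beta W=W$, then $W\subseteq V_2$ and $W=\beta W\subseteq \beta V_2$, so
\[
\dim_{\fq}(V_2\cap \beta V_2)\geq \dim_{\fq}(W)=k,
\]
and therefore $d(V_2,\beta V_2)=2m-2\dim_{\fq}(V_2\cap\beta V_2)\leq 2(m-k)$. If $\beta V_2$ were distinct from $V_2$, it would be another codeword of $\cC_2$ and the hypothesis $d(\cC_2)>2(m-k)$ would give the opposite strict inequality, a contradiction. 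Hence $\beta V_2=V_2$, i.e.\ $\beta\in\mathrm{Stab}(V_2)=\fq^*$.

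For the second part, the statement about $\cC_2\odot\cC_1=\bigcup_{\alpha\in\F_{q^m}^*}\mathrm{Orb}(V_2\odot \alpha V_1)$ then follows by applying exactly the same reduction orbit by orbit: for each $\alpha\in\F_{q^m}^*$ the subspace $\alpha V_1$ lies in $\cG_q(m,k)$, so $V_2\odot\alpha V_1=\Phi(\alpha V_1)$ is again a $k$-dimensional $\fq$-subspace of $V_2$, and the very same argument shows that each $\mathrm{Orb}(V_2\odot\alpha V_1)$ is full-length; by definition this makes $\cC_2\odot\cC_1$ a full-length multi-orbit cyclic subspace code. There is no genuine obstacle: the only thing to spot is the double inclusion $W\subseteq V_2\cap\beta V_2$, after which the dimension bound is immediate and the assumption $d(\cC_2)>2(m-k)$ is precisely tailored to rule out $\beta\notin\mathrm{Stab}(V_2)$.
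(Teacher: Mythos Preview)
Your proof is correct and follows essentially the same approach as the paper: both arguments hinge on the observation that if $\beta W=W$ then $W\subseteq V_2\cap\beta V_2$, giving $\dim_{\fq}(V_2\cap\beta V_2)\geq k$, and then combine the full-length hypothesis on $\cC_2$ with the bound $d(\cC_2)>2(m-k)$ to force $\beta\in\fq^*$. The only cosmetic difference is that the paper frames it as a direct contradiction (assume $\beta\notin\fq$, derive $d(\cC_2)\leq 2(m-k)$), whereas you first deduce $\beta V_2=V_2$ and then invoke $\mathrm{Stab}(V_2)=\fq^*$; the logical content is identical.
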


\begin{proof}
By Lemma \ref{lem:cardinalityifffqt}, $\cC_2\odot V_1=\mathrm{Orb}(V_2\odot V_1)\subseteq \mathcal{G}_q(n,k)$ has full-length if and only if $\mathrm{Stab}(V_2 \odot V_1)=\fq^*$. Therefore, suppose by contradiction  there exists $\beta \in\fqn\setminus\fq$ such that $\beta(V_2\odot V_1)=V_2\odot V_1$, i.e. $\beta\Phi(V_1)=\Phi(V_1)$. Note that $\beta\Phi(V_1)=\Phi(V_1)\subseteq V_2\cap\beta V_2$. By hypothesis $\Phi$ is injective, and so $\dim_{\fq}(V_2\cap\beta V_2)\geq \dim_{\fq}(V_1)$. Since $\beta\notin\fq$ and $\C_{2}$ has full-length (and hence $V_2 \neq \beta V_2$), we have 
\[
d(\C_{2})\leq d(V_2,\beta V_2)\leq 2\dim_{\fq}(V_2)-2\dim_{\fq}(V_1)=2(m-k)
\]
that contradicts our hypotheses.\\
Recall that 
\[
\cC_2 \odot \cC_1 = \bigcup_{\alpha\in\F_{q^m}^*}\mathrm{Orb}(V_2\odot \alpha V_1)\subseteq\mathcal{G}_q(n,k). 
\]
Since $\C_{2}$ has full-length and \[d(\cC_2)>2(m-k)=2(\dim_{\fq}(V_2)-\dim_{\fq}(V_1))=2(\dim_{\fq}(V_2)-\dim_{\fq}(\alpha V_1))\] for every $\alpha \in \F_{q^m}^*$, by the previous case, we get that every $\cC_2\odot \alpha V_1=\Orb(V_2 \odot \alpha V_1)$ has full-length. Therefore, $\C_{2} \odot \C_{1}$ has full-length, as well. 
\end{proof}

We present an illustrative example to show how \Cref{prop:fullone+fullone} can be applied.

\begin{example}
\label{ex:RRToneorbits}
     Let $r=3,m=3k$ and $n = 9k$ with $k>1$. Let $\gamma_1 \in \F_{q^{3k}}$ be such that $\F_{q^k}(\gamma_1)=\F_{q^{3k}}$. Consider the subspace
\[
V_{1}=\{v+v^q\gamma_1 \colon v \in \F_{q^k} \} \subseteq \mathcal{G}_q(3k,k),
\]
and consider the code 
$
\cC_1=\Orb(V_{1}) \subseteq  \mathcal{G}_q(3k,k).$
Let $\gamma_2 \in \F_{q^{9k}}$ be such that $\F_{q^{3k}}(\gamma_2)=\F_{q^{9k}}$. Consider the subspace
\[
V_{2}=\{u+u^q\gamma_2 \colon u \in \F_{q^{3k}} \} \subseteq \mathcal{G}_q(9k,3k)
\]
and define the code $
\C_{2}=\Orb(V_{2}) \subseteq \mathcal{G}_q(9k,3k)$. By \cite[Theorem 12 and Lemma 34]{roth2017construction}, $\cC_2$ is a full-length orbit code with    $d(\cC_2)=2(3k)-2$.\\
Also, the map 
\[
\begin{array}{llll}
\Phi_{2}: & \F_{q^{3 k}} & \longrightarrow & \F_{q^{9k}}\\
& u & \longmapsto & u+u^q\gamma_2
\end{array}
\]
is an injective $\F_q$-linear map
and $
V_{2}=\im(\Phi_{2}).$
Therefore, we can consider 
\[
\C_{2} \odot \C_{1}=\bigcup_{\alpha \in \F_{q^{3k}}^*} \Orb (V_{2} \odot \alpha V_{1}).
\]
Precisely,
\[
\cC_2\odot\cC_1=\bigcup_{\alpha\in\F_{q^{3k}}^*}\bigcup_{\beta\in\F_{q^{9k}}^*}\lbrace \beta\alpha (v+v^q\gamma_1)+\beta\alpha^q (v+v^q\gamma_1)^q\gamma_2\colon v\in\F_{q^{k}}\rbrace\subseteq \cG_q(9k,k).
\]
Since $\cC_2$ has full-length and $d(\cC_2)=2(3k)-2> 2(\dim_{\fq}(V_2)-\dim_{\fq}(V_1))=4k$, by Proposition \ref{prop:fullone+fullone}, $\cC_2\odot \cC_1$ is a full-length multi-orbit cyclic subspace code.

$\hfill \lozenge$
\end{example}

\subsection{Operation $\odot$ between two multi-orbit cyclic subspace codes}
We point out that the operation $\odot$ allows us to construct a multi-orbit cyclic subspace code $\C_2 \odot \C_1$ starting from two one-orbit cyclic subspace codes $\C_1$ and $\C_2$. Now, in the same spirit of \Cref{def:fullone+fullone}, thanks to Lemma \ref{lemma:betaout}, we can define the operation $\odot$ between multi-orbit cyclic subspace codes, which will produce large multi-orbit codes.

\begin{definition} \label{def:multiorb+multiorb}
    Let $n=rm$, with $r,m >1$. Let $1<k\leq m$ and $s_1,s_2\geq 1$ be positive integers.  Consider
\[
\C_1=\bigcup_{h_1=1}^{s_1} \Orb(V_{1,h_1})\subseteq \mathcal{G}_q(m,k),
\]
and 
\[
\C_2=\bigcup_{h_2=1}^{s_2} \Orb(V_{2,h_2})\subseteq \mathcal{G}_q(n,m), 
\]
multi-orbit cyclic subspace codes, where $V_{2,h_2}=\im(\Phi_{2,h_2})$ for some injective $\F_q$-linear map
\[
\Phi_{2,h_2}: \F_{q^{m}} \longrightarrow \F_{q^{n}}, 
\]
for every $h_2 \in \{1,\ldots,s_2\}$.
Define $\C_2 \odot \C_{1}$ as 
\[
\begin{array}{rl}
\cC_2 \odot \cC_1  := & \bigcup\limits_{h_2=1}^{s_2} \bigcup\limits_{h_1=1}^{s_1} \Orb(V_{2,h_2}) \odot \Orb(V_{1,h_1})
\\
= & \bigcup\limits_{h_2=1}^{s_2}\bigcup\limits_{h_1=1}^{s_1}\bigcup\limits_{\alpha\in\F_{q^{m}}^*}\Orb(V_{2,h_2} \odot  \alpha V_{1,h_1})\\  \subseteq & \mathcal{G}_q(n,k). 
\end{array}
\]
\end{definition}

Note that $\cC_2\odot\cC_1$ is a multi-orbit cyclic subspace code in $\mathcal{G}_q(n,k)$. In a similar manner to \Cref{prop:fullone+fullone}, in the following proposition, we provide sufficient conditions to determine when the operation $\odot$ on multi-orbit cyclic subspace codes produces full-length codes.

\begin{proposition} \label{prop:fulllenghtmulti}
Let $n=rm$ with $r,m >1$ and let $1<k\leq m$. Consider $\cC_1\subseteq \cG_q(m,k)$ and $\cC_2\subseteq\cG_q(n,m)$ as in \Cref{def:multiorb+multiorb}.
If $\cC_2$ has full-length and $d(\cC_2)>2(m-k)$, then $\cC_2\odot \cC_1$ is a full-length multi-orbit cyclic subspace code.
\end{proposition}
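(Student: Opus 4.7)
The plan is to reduce the multi-orbit statement to the one-orbit case already proved in Proposition \ref{prop:fullone+fullone}. By Definition \ref{def:multiorb+multiorb},
\[
\cC_2\odot\cC_1=\bigcup_{h_2=1}^{s_2}\bigcup_{h_1=1}^{s_1}\bigcup_{\alpha\in\F_{q^{m}}^*}\Orb(V_{2,h_2}\odot\alpha V_{1,h_1}),
\]
so, by the definition of full-length for a multi-orbit cyclic subspace code, it suffices to show that each constituent orbit $\Orb(V_{2,h_2}\odot\alpha V_{1,h_1})$ is a full-length one-orbit cyclic subspace code.

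First I would unpack the hypotheses on $\cC_2$ at the level of its orbit representatives. Saying that $\cC_2$ has full-length amounts to saying that each $\Orb(V_{2,h_2})$ is itself a full-length one-orbit code, i.e. $\mathrm{Stab}(V_{2,h_2})=\F_q^*$ for every $h_2\in\{1,\dots,s_2\}$. Moreover, since $d(\cC_2)$ is a minimum over all pairs of codewords in $\cC_2$, for each fixed $h_2$ we have
\[
d(\Orb(V_{2,h_2}))\geq d(\cC_2)>2(m-k).
\]
Thus every single-orbit code $\Orb(V_{2,h_2})\subseteq\cG_q(n,m)$ meets the hypotheses of Proposition \ref{prop:fullone+fullone}.

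Next, for fixed $h_1\in\{1,\dots,s_1\}$ and $\alpha\in\F_{q^m}^*$, the subspace $\alpha V_{1,h_1}$ is a $k$-dimensional $\F_q$-subspace of $\F_{q^m}$. Applying Proposition \ref{prop:fullone+fullone} with $V_2=V_{2,h_2}$ and $V_1=\alpha V_{1,h_1}$ (using the injective $\F_q$-linear map $\Phi_{2,h_2}$ attached to $V_{2,h_2}$) yields that $\Orb(V_{2,h_2}\odot\alpha V_{1,h_1})$ is a full-length one-orbit cyclic subspace code in $\cG_q(n,k)$. Taking the union over $h_2,h_1,\alpha$ then gives that $\cC_2\odot\cC_1$ is full-length.

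There is no real obstacle: the argument is essentially bookkeeping, ensuring that the quantitative hypotheses on the minimum distance and the full-length property pass from the multi-orbit code $\cC_2$ to each of its constituent orbits $\Orb(V_{2,h_2})$, which is immediate since $d(\cC_2)$ bounds the minimum distance of each subcode from below and the full-length condition for a union is imposed orbit-by-orbit. The only place where a tiny care is needed is noting that the operation $\odot$ was defined using a chosen injective $\F_q$-linear map $\Phi_{2,h_2}$ for each $V_{2,h_2}$, but since the argument in Proposition \ref{prop:fullone+fullone} is carried out for an arbitrary such $\Phi$, it applies uniformly for each $h_2$.
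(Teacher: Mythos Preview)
Your proof is correct and follows essentially the same approach as the paper: both reduce to Proposition \ref{prop:fullone+fullone} by observing that each $\Orb(V_{2,h_2})$ inherits full-length from $\cC_2$ and satisfies $d(\Orb(V_{2,h_2}))\geq d(\cC_2)>2(m-k)$. The only cosmetic difference is that the paper groups the orbits as $\Orb(V_{2,h_2})\odot\Orb(V_{1,h_1})$ and invokes the second part of Proposition \ref{prop:fullone+fullone}, whereas you further decompose over $\alpha$ and invoke the first part; the content is identical.
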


\begin{proof}
Recall that 
\[
\cC_2 \odot \cC_1  =  \bigcup_{h_2=1}^{s_2} \bigcup_{h_1=1}^{s_1} \Orb(V_{2,h_2}) \odot \Orb(V_{1,h_1}).
\]
To show that $\C_2 \odot \C_1$ has full-length, by definition we need to prove that $\Orb(V_{2,h_2}) \odot \Orb(V_{1,h_1})$ has full-length, for any $h_1 \in \{1,\ldots,s_1\}$ and $h_2 \in \{1,\ldots,s_2\}$. By Proposition \ref{prop:fullone+fullone}, it is sufficient to show that  $\Orb(V_{2,h_2})$ has full length and $d(\Orb(V_{2,h_2}))>2(m-k)$. Note that, since $\C_2=\bigcup_{h_2=1}^{s_2} \Orb(V_{2,h_2})$ has full-length, we have that by definition $\Orb(V_{2,h_2})$ has full-length for every $h_2\in \{1,\ldots,s_2\}$. Moreover,
\[
d(\Orb(V_{2,h_2})) \geq d(\cC_2)>2(m-k),
\]
for every $h_2\in \{1,\ldots,s_2\}$. Thus by \Cref{prop:fullone+fullone} the assertion follows.
\end{proof}

We give an example of how \Cref{prop:fulllenghtmulti} can be applied.

\begin{example}
\label{ex:Largeoneorbits}
Let $n = 9k$, with $k>1$. We will construct multi-orbit cyclic subspace codes in $\mathcal{G}_q(9k,k)$ by using codes in $\mathcal{G}_q(3k,k)$ and $\mathcal{G}_q(9k,3k)$. Let us start by defining the codes that we will employ. Let $\gamma_1 \in \F_{q^{3k}}$ be such that $\F_{q^k}(\gamma_1)=\F_{q^{3k}}$. For any $a_1 \in \F_{q^k}$, consider the subspace
\[
V_{1,a_1}=\{v+(v^q+a_1v)\gamma_1 \colon v \in \F_{q^k} \} \subseteq \mathcal{G}_q(3k,k),
\]
and define the code 
\[
\cC_1=\bigcup_{a_1 \in \F_{q^k}} \Orb(V_{1,a_1})=\bigcup_{\alpha \in \F_{q^{3k}}^*} \bigcup_{a_1 \in \F_{q^k}} (\alpha V_{1,a_1}) \subseteq  \mathcal{G}_q(3k,k).
\]
 We can now use a code of the same shape of $\C_1$ but in $\mathcal{G}_q(3\overline{k},\overline{k})$, with $\overline{k}=3k$. Precisely, let $\gamma_2 \in \F_{q^{9k}}$ be such that $\F_{q^{3k}}(\gamma)=\F_{q^{9k}}$. For any $a_2 \in \F_{q^{3k}}$, consider the subspace
\[
V_{2,a_2}=\{u+(u^q+a_2u)\gamma_2 \colon u \in \F_{q^{3k}} \} \subseteq \mathcal{G}_q(9k,3k),
\]
and define the code 
\[
\cC_2=\bigcup_{a_2 \in \F_{q^{3k}}} \Orb(V_{2,a_2})=\bigcup_{\alpha \in \F_{q^{9k}}^*} \bigcup_{a_2 \in \F_{q^{3k}}} (\alpha V_{2,a_2}) \subseteq \mathcal{G}_q(9k,3k).
\]
Note that, by \cite[Lemma 3.4]{zhang2024large}(without considering the orbits defined by different $\omega^i$'s), $\cC_2$ has minimum distance $d(\cC_2)=2(3k)-2$. Also, $\Orb(V_{2,a_2})$ has minimum distance $2(3k)-2$ and $\Orb(V_{2,a_2})$ has full-length (see \cite[Lemma 3.1]{zhang2024large} and \cite[Lemma 34]{roth2017construction}), and so by definition $\cC_2$ has full-length, as well.
For every $a_2 \in \F_{q^{3k}}$, the map 
\[
\begin{array}{llll}
\Phi_{2,a_2}: & \F_{q^{3 k}} & \longrightarrow & \F_{q^{9k}}\\
& u & \longmapsto & u+(u^q+a_2u)\gamma_2
\end{array}
\]
is an injective $\F_q$-linear map
such that 
\[
V_{2,a_2}=\im(\Phi_{2,a_2}).
\]
Therefore, we can define 
\[
\C_2 \odot \C_{1}=\bigcup_{a_2 \in \F_{q^{3k}}} \bigcup_{\alpha \in \F_{q^{3k}}^*} \bigcup_{a_1 \in \F_{q^k}} \Orb (V_{2,a_2} \odot (\alpha V_{1,a_1}))\subseteq \cG_q(9k,k),
\]
where
\[
    V_{2,a_2} \odot (\alpha V_{1,a_1})=\{ (\alpha v+\alpha (v^q+a_1v)\gamma_1) +(\alpha^{q}v^q+\alpha^q(v^q+a_1v)^{q}\gamma_1^q+a_2(\alpha v+\alpha (v^q+a_1v)\gamma_1))\gamma_2\colon v\in\F_{q^{k}}\}.
\]
Since $\cC_2$ has full-length and  $d(\cC_2)=2(3k)-2> 2(3k-k)=4k$, by Proposition \ref{prop:fulllenghtmulti}, $\cC_2\odot \cC_1$ is a full-length multi-orbit cyclic subspace code.

   $\hfill \lozenge$
\end{example}

\subsection{Parameters of $\cC_2\odot\cC_1$}
\noindent In what follows, we study the parameters of codes obtained by the operation $\odot$. Specifically, we will see how to get information on the cardinality and on the minimum distance of the codes obtained by using $\odot$.

Before going to this we recall that in \cite{roth2017construction}, the authors highlighted an equivalence between Sidon spaces of dimension $k$ and one-orbit cyclic subspace codes with minimum distance $2k-2$. A \emph{Sidon space} is an $\fq$-subspace $S$ of $\fqn$ such that if $ab=cd$, then $\lbrace a\fq,c\fq\rbrace=\lbrace b\fq,d\fq\rbrace$. Roth, Raviv and Tamo in \cite{roth2017construction} pointed out that every $k$-dimensional Sidon space in $\fqn$ is equivalent to a one-orbit cyclic subspace code of cardinality $\frac{q^n-1}{q-1}$ and minimum distance $2k-2$.

\begin{theorem} [see \textnormal{\cite[Lemma 34]{roth2017construction}}]
\label{thm:sidoniffoptimal}
Let $U \in\cG_q(n,k)$. The code $\mathrm{Orb}(U)$ has minimum distance $2k-2$ and size $\frac{q^n-1}{q-1}$  if and only if $U$ is a Sidon space, i.e.
\[
\dim_{\fq}(U\cap\alpha U)\leq 1\quad \text{ for any }\alpha\in\fqn\setminus\fq.
\]
\end{theorem}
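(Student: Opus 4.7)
I would rely on two standard facts throughout: for two $k$-dimensional $\fq$-subspaces $V,W$ of $\fqn$, the subspace distance satisfies $d(V,W)=2k-2\dim_{\fq}(V\cap W)$; and for the orbit $\Orb(U)$, scaling gives $d(\alpha U,\beta U)=d(U,\gamma U)$ with $\gamma=\beta\alpha^{-1}$. Combined, the minimum distance of $\Orb(U)$ equals $2k-2\mu$, where $\mu:=\max\{\dim_{\fq}(U\cap\gamma U)\colon \gamma\in\fqn^*,\ \gamma U\neq U\}$. Beyond this, I would invoke Lemma \ref{lem:cardinalityifffqt}, which characterizes the cardinality $(q^n-1)/(q-1)$ as being equivalent to $\mathrm{Stab}(U)=\fq^*$. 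Throughout, one may assume $k\geq 2$, since for $k=1$ the statement is vacuous.

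For the forward implication, suppose $\Orb(U)$ has size $(q^n-1)/(q-1)$ and minimum distance $2k-2$. By Lemma \ref{lem:cardinalityifffqt}, $\mathrm{Stab}(U)=\fq^*$, so every $\alpha\in\fqn\setminus\fq$ satisfies $\alpha U\neq U$; the distance hypothesis then forces $\dim_{\fq}(U\cap\alpha U)\leq 1$, which is precisely the Sidon condition.

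For the converse, suppose $U$ is Sidon. Any $\alpha\in\fqn\setminus\fq$ satisfies $\dim_{\fq}(U\cap\alpha U)\leq 1<k$, so $\alpha U\neq U$, and hence $\mathrm{Stab}(U)=\fq^*$, giving the desired cardinality. Every $\gamma\in\fqn^*$ with $\gamma U\neq U$ therefore lies outside $\fq$, so $\dim_{\fq}(U\cap\gamma U)\leq 1$, yielding $d(\Orb(U))\geq 2k-2$. To exclude the value $2k$, I would exhibit an explicit witness: pick $\fq$-linearly independent $u_1,u_2\in U$ and set $\gamma:=u_2 u_1^{-1}$. Linear independence forces $\gamma\notin\fq$, while $u_2=\gamma u_1\in U\cap\gamma U$ shows $\dim_{\fq}(U\cap\gamma U)\geq 1$; the Sidon property forces equality, and thus $d(U,\gamma U)=2k-2$.

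No significant obstacle arises: the argument is essentially a dictionary between the algebraic Sidon condition and the geometric orbit-code parameters, mediated by Lemma \ref{lem:cardinalityifffqt} and the identity $d(V,W)=2k-2\dim_{\fq}(V\cap W)$. The only point needing minor care is that the orbit code attains the exact minimum distance $2k-2$ rather than merely the lower bound, which is dispatched by the last step.
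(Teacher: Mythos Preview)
Your proof is correct. The paper does not actually supply a proof of this theorem; it is quoted as a known result from \cite[Lemma 34]{roth2017construction}, so there is no in-paper argument to compare against. Your argument is the standard one: it reduces the orbit parameters to the stabilizer via Lemma~\ref{lem:cardinalityifffqt} and to the intersection dimensions via $d(U,\gamma U)=2k-2\dim_{\fq}(U\cap\gamma U)$, and the witness $\gamma=u_2u_1^{-1}$ for two $\fq$-independent vectors $u_1,u_2\in U$ cleanly shows the minimum distance is exactly $2k-2$ rather than $2k$.
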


\begin{remark}
\label{rmk:ifd=2k-2thenOrbisfull} 
 We observe that if $V\in\cG_q(n,k)$, with $k>1$, and  $d(\mathrm{Orb}(V))=2k-2$, then $\mathrm{Orb}(V)$ has necessarily full-length. Indeed, if $d(\mathrm{Orb}(V))=2k-2$, then by Theorem \ref{thm:sidoniffoptimal}, $V$ is a Sidon space, i.e. $\dim_{\fq}(V\cap\alpha V)\leq 1$ for any $\alpha\in\fqn\setminus\fq$. This means that if $\alpha\in\mathrm{Stab}(V)$ and by contradiction we suppose $\alpha\notin\fq$, on one hand we have $V=\alpha V$ and so $\dim_{\fq}(V\cap\alpha V)=k$. On the other hand $\dim_{\fq}(V\cap\alpha V)\leq 1$ for any $\alpha\in\fqn\setminus\fq$, a contradiction since $k>1$. 
\end{remark}

Finally, we introduce the following notation for $\cC_2\odot \cC_1$ before going to the proof of the main theorem of this subsection.

Let $n=rm$ with $r,m>1$ and let $k>1$ and $s_1,s_2\geq 1$ be positive integers. Consider $s_1$ distinct one-orbit cyclic subspace codes $\Orb(V_{1,h_1})\subseteq\cG_q(m,k)$ for $h_1\in\lbrace 1,\dots,s_1\rbrace$ and $s_2$ distinct one-orbit cyclic subspace codes $\Orb(V_{2,h_2})\subseteq\cG_q(n,m)$ for $h_2\in\lbrace 1,\dots,s_2\rbrace$. Let
\begin{equation}
\label{eq:C1}
\cC_1=\bigcup_{h_1=1}^{s_1} \Orb(V_{1,h_1})=\{U_{1,1},\ldots,U_{1,t_1}\} \subseteq \mathcal{G}_q(m,k),
\end{equation}
and 
\begin{equation}
\label{eq:C2}
\cC_2=\bigcup_{h_2=1}^{s_2} \Orb(V_{2,h_2})=\{U_{2,1}
,\ldots,U_{2,t_2}\} \subseteq \mathcal{G}_q(n,m).
\end{equation}
Since the orbits are distinct, their intersection is trivial and so $\lvert\cC_1\rvert=t_1=\displaystyle\sum_{h_1=1}^{s_1}\lvert\mathrm{Orb}(V_{1,h_1})|$ and $\lvert\cC_2\rvert=t_2=\displaystyle\sum_{h_2=1}^{s_2}\lvert\mathrm{Orb}(V_{2,h_2})|$.\\
 For every $h_2 \in \{1,\ldots,s_2\}$, let \[
\Phi_{2,h_2}: \F_{q^{m}} \longrightarrow \F_{q^{n}} 
\] be an injective $\F_q$-linear map such that $V_{2,h_2}=\im(\Phi_{2,h_2})$. Then defining \[
\cC=\cC_2 \odot \cC_1= \bigcup_{h_2=1}^{s_2}\bigcup_{h_1=1}^{s_1}\bigcup_{\alpha\in\F_{q^{m}}^*}\Orb(V_{2,h_2} \odot  \alpha V_{1,h_1})=\bigcup_{h_2=1}^{s_2}\bigcup_{j_1=1}^{t_1}\Orb(V_{2,h_2} \odot U_{1,j_1})
\subseteq\cG_q(n,k),\]
we observe that by Lemma \ref{lemma:betaout} we may also write $\cC$ as
\begin{equation}
\label{eq:C2dotC1}
\cC=\cC_2 \odot \cC_1=\bigcup_{j_2=1}^{t_2}\bigcup_{j_1=1}^{t_1}(U_{2,j_2} \odot U_{1,j_1})\subseteq \mathcal{G}_q(n,k).
\end{equation}

In the following remark, we observe that if $d(\cC)=2k-2\ell$ then the intersection of any two distinct codewords is bounded by $\ell$.

\begin{remark}
\label{rmk:intersectionbounded}
Note that if the minimum distance of $\cC=\bigcup_{h=1}^s\mathrm{Orb}(V_h)=\lbrace U_1,\dots, U_t\rbrace$ is $2k-2\ell$, then for any $U_i,U_j\in\cC$ with $i\neq j$ it results
\[
2k-2\ell=d(\cC)\leq d(U_i,U_j)=2k-2\dim_{\fq}(U_i\cap U_j)
\]
and so
\[
\dim_{\fq}(U_i\cap U_j)\leq \ell.
\]
\end{remark}

The following theorem tells us which is the minimum distance and the cardinality of $\cC_2\odot \cC_1$, by starting from the cardinalities and the minimum distances of $\cC_2$ and $\cC_1$.

\begin{theorem} \label{th:nested2}
Let $\cC_1\subseteq\cG_q(m,k)$ and $\cC_2\subseteq\cG_q(n,m)$ as in \eqref{eq:C1} and \eqref{eq:C2}, respectively, and assume that $d(\C_1)=2k-2\ell$ and $d(\C_2)=2m-2\ell'$. The code $\cC_2\odot\cC_1$ is a multi-orbit cyclic subspace code with minimum distance at least $2k-2\max\{\ell, \ell'\}$. Also, if $\max\{\ell,\ell'\}<k$, then $\C_2 \odot \C_1$ has size $\lvert \cC_2\odot \cC_1\rvert=\lvert\cC_2\rvert\lvert\cC_1\rvert$ and if $\max\{\ell, \ell'\}=\ell$, then $d(\cC_2\odot \cC_1)=2k-2\ell.$
\end{theorem}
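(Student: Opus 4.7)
The main idea is to analyze pairwise intersections of codewords in $\cC = \cC_2 \odot \cC_1 = \bigcup_{j_2=1}^{t_2}\bigcup_{j_1=1}^{t_1}(U_{2,j_2} \odot U_{1,j_1})$ by splitting on whether the two codewords come from the same $U_{2,j_2}\in\cC_2$ or from distinct ones. Throughout, I will use the fact that for each $h_2$, the map $\Phi_{2,h_2}$ is $\F_q$-linear and injective, and that by Lemma~\ref{lemma:betaout} every codeword of $\cC$ can be written as $\beta\,\Phi_{2,h_2}(U)$ for some $\beta\in\F_{q^n}^*$, $h_2\in\{1,\dots,s_2\}$, and some $U\in\cC_1$; in particular, every codeword $U_{2,j_2}\odot U_{1,j_1}$ is an $\F_q$-subspace of $U_{2,j_2}$ of dimension $k$.

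\textbf{Distance bound.} Fix two distinct codewords $U_{2,j_2}\odot U_{1,j_1}$ and $U_{2,j_2'}\odot U_{1,j_1'}$. If $j_2=j_2'$, write $U_{2,j_2}=\beta\,V_{2,h_2}$ for the appropriate $\beta\in\F_{q^n}^*$ and $h_2$; then both codewords equal $\beta\,\Phi_{2,h_2}(U_{1,j_1})$ and $\beta\,\Phi_{2,h_2}(U_{1,j_1'})$, and injectivity of $\Phi_{2,h_2}$ gives
\[
(U_{2,j_2}\odot U_{1,j_1})\cap (U_{2,j_2}\odot U_{1,j_1'}) = \beta\,\Phi_{2,h_2}(U_{1,j_1}\cap U_{1,j_1'}),
\]
which has $\F_q$-dimension at most $\ell$ by Remark~\ref{rmk:intersectionbounded} applied to $\cC_1$. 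If instead $j_2\neq j_2'$, then since each codeword is contained in its corresponding $U_{2,\cdot}$, the intersection is contained in $U_{2,j_2}\cap U_{2,j_2'}$, whose dimension is at most $\ell'$ by Remark~\ref{rmk:intersectionbounded} applied to $\cC_2$. In both cases the intersection has dimension at most $\max\{\ell,\ell'\}$, so the subspace distance is at least $2k-2\max\{\ell,\ell'\}$, proving the distance bound.

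\textbf{Size.} If $\max\{\ell,\ell'\}<k$, the lower bound on the minimum distance obtained above is strictly positive. In particular, whenever $(j_1,j_2)\neq(j_1',j_2')$, the two codewords $U_{2,j_2}\odot U_{1,j_1}$ and $U_{2,j_2'}\odot U_{1,j_1'}$ are distinct $k$-dimensional subspaces; otherwise their distance would be $0$. Hence the natural surjection from $\{1,\dots,t_2\}\times\{1,\dots,t_1\}$ onto $\cC$ is injective, yielding $|\cC_2\odot\cC_1|=t_2 t_1=|\cC_2|\,|\cC_1|$.

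\textbf{Sharpness when $\ell\geq \ell'$.} It remains to exhibit a pair achieving distance exactly $2k-2\ell$. Since $d(\cC_1)=2k-2\ell$, there exist $U_{1,j_1},U_{1,j_1'}\in\cC_1$ with $\dim_{\F_q}(U_{1,j_1}\cap U_{1,j_1'})=\ell$. Fix any $U_{2,j_2}\in\cC_2$ and, as in the first case of the distance argument, apply injectivity of the corresponding $\Phi_{2,h_2}$ to obtain
\[
\dim_{\F_q}\bigl((U_{2,j_2}\odot U_{1,j_1})\cap(U_{2,j_2}\odot U_{1,j_1'})\bigr)=\dim_{\F_q}(U_{1,j_1}\cap U_{1,j_1'})=\ell,
\]
so these two codewords are at distance exactly $2k-2\ell$. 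Combined with the lower bound, this gives $d(\cC_2\odot\cC_1)=2k-2\ell$. The main subtlety (rather than an obstacle) is simply being careful that $\Phi_{2,h_2}$ being $\F_q$-linear and injective is what lets us transport dimensions of intersections of subspaces in $\cG_q(m,k)$ faithfully into $\cG_q(n,k)$; everything else is bookkeeping.
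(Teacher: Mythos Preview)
Your proof is correct and follows essentially the same approach as the paper: the same case split on $j_2=j_2'$ versus $j_2\neq j_2'$, the same use of injectivity of $\Phi_{2,h_2}$ to transport $\dim_{\F_q}(U_{1,j_1}\cap U_{1,j_1'})$, the same containment argument $U_{2,j_2}\odot U_{1,j_1}\subseteq U_{2,j_2}$ for the other case, and the same sharpness witness taken from a pair in $\cC_1$ realizing $d(\cC_1)$.
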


\begin{proof}
We note that since the minimum distances of $\C_1$ and $\C_2$ are $2k-2\ell$ and $2m-2\ell'$, respectively, by Remark \ref{rmk:intersectionbounded} we have
\begin{equation} \label{eq:cond1codoptimal}
\dim_{\F_q}(U_{1,j_1} \cap U_{1,j_1'})  \leq \ell,
\end{equation}
for every $j_1,j_1' \in \{1,\ldots,t_1\}$, with $j_1 \neq j_1'$, and that 
\begin{equation} \label{eq:cond2codoptimal}
\dim_{\F_q}(U_{2,j_2} \cap U_{2,j_2'})  \leq \ell',
\end{equation}
for every $j_2,j_2' \in \{1,\ldots,t_2\}$, with $j_2 \neq j_2'$ .\\
We prove that when $(j_1,j_2)\neq (j_1',j_2')$, then 
    \begin{equation} \label{eq:optimal2nested}
     \dim_{\F_q}\left((U_{2,j_2} \odot U_{1,j_1}) \cap ( U_{2,j_2'}\odot U_{1,j_1'})\right) \leq \begin{cases}
         \ell' & \text{ if } j_2 \neq j_2',\\
         \ell & \text{ if } j_2 = j_2'.
     \end{cases}
    \end{equation}
    We divide the discussion into two cases.

    \noindent \textbf{(Case $j_2 \neq j_2'$.)} First, note that $U_{2,j_2} \odot U_{1,j_1}$ is an $\F_q$-subspace of $U_{2,j_2}$, for every $j_2 \in \{1,\ldots,t_2\}$ and for every $j_1\in \{1,\ldots,t_1\}$. 
    Therefore, by using \eqref{eq:cond2codoptimal}, we have 
    \[
    \dim_{\F_q}((U_{2,j_2} \odot U_{1,j_1}) \cap  (U_{2,j_2'} \odot U_{1,j_1'})) \leq \dim_{\F_q}(U_{2,j_2} \cap  U_{2,j_2'})  \leq \ell'. 
    \]
\noindent \textbf{(Case $j_2 = j_2'$.)} Since we are assuming that $(j_1,j_2)\neq (j_1',j_2')$, then $j_1 \neq j_1'$. Now, $U_{2,j_2} \in \C_2=\bigcup_{h_2=1}^{s_2} \Orb(V_{2,h_2})$, so there exists $\alpha \in \F_{q^n}^*$ such that $U_{2,j_2}=\alpha V_{2,h_2}$, for some $h_2 \in \{1,\ldots,s_2\}$ and $V_{2,h_2}=\im(\Phi_{2,h_2})$. Therefore $U_{2,j_2}=\alpha \im(\Phi_{2,h_2})=\im(\alpha \Phi_{2,h_2})$. As a consequence, we have \[
U_{2,j_2} \odot U_{1,j_1}=\alpha\Phi_{2,h_2}(U_{1,j_1})\]
and 
\[ U_{2,j_2} \odot U_{1,j_1'}=\alpha\Phi_{2,h_2}(U_{1,j_1'}).\]
Since $\Phi_{2,h_2}$ is injective for every $h_2=1,\dots, j_2$, we get that for every $j_2\in\lbrace 1, \dots, t_2\rbrace$ and $j_1\in\lbrace 1, \dots, t_1\rbrace$, 
\[
\begin{array}{rl}
    \dim_{\F_q}((U_{2,j_2} \odot U_{1,j_1}) \cap  (U_{2,j_2} \odot U_{1,j_1'})) & =\dim_{\F_q}(\alpha \Phi_{2,h_2}(U_{1,j_1} \cap  U_{1,j_1'})) \\
    & =\dim_{\F_q}(U_{1,j_1} \cap  U_{1,j_1'}) \\
    &\leq \ell, 
    \end{array}
    \]
where the last inequality follows from \eqref{eq:cond1codoptimal}. This proves \eqref{eq:optimal2nested}. As a consequence, when $(j_1,j_2)\neq (j_1',j_2')$, we get $\dim_{\F_q}((U_{2,j_2} \odot U_{1,j_1}) \cap  (U_{2,j_2} \odot U_{1,j_1'})) \leq \max\{\ell,\ell'\}$ and, since $\C_2 \odot \C_1 \subseteq \cG_q(n,k)$, by Remark \ref{rmk:intersectionbounded}, we obtain the first part of the assertion.  \\ Now, by \eqref{eq:C2dotC1},
\[
\lvert \cC_2 \odot \cC_1 \rvert= \left\lvert \bigcup_{j_2=1}^{t_2}\bigcup_{j_1=1}^{t_1}(U_{2,j_2} \odot U_{1,j_1}) \right\rvert. 
\]
Hence, if $\max\{\ell,\ell'\}<k$, then we have by \eqref{eq:optimal2nested} that the subspaces $U_{2,j_2} \odot U_{1,j_1} \in \mathcal{G}_q(n,k)$ are all distinct and so
\[\lvert \C_2 \odot \C_1 \rvert =t_1t_2=\lvert \C_2 \rvert\lvert \C_1 \rvert.\]
Finally, assume that $\max\{\ell,\ell'\}=\ell$. Therefore, $d(\C_2 \odot \C_1) \geq 2k-2\ell$. On the other hand, $d(\C_1)=2k-2\ell$, implying that there exist $U_{1,j_1}, U_{1,j_1'} \in \C_1$ such that $\dim_{\F_q}(U_{1,j_1} \cap U_{1,j_1'}) = \ell$. Now, let $U_{2,j_2} = \im(\alpha \Phi_{2,h_2}) \in \C_2$, for $h_2\in\lbrace 1, \dots, s_2\rbrace$, with $\alpha \in \F_{q^n}^*$. As a consequence,
\[
\begin{array}{rl}
    \dim_{\F_q}((U_{2,j_2} \odot U_{1,j_1}) \cap  (U_{2,j_2} \odot U_{1,j_1'})) & =\dim_{\F_q}(\alpha \Phi_{2,h_2}(U_{1,j_1} \cap  U_{1,j_1'})) \\
    & =\dim_{\F_q}(U_{1,j_1} \cap  U_{1,j_1'}) \\
    &= \ell, 
    \end{array},
    \]
    proving that $d(\C_2 \odot \C_1)=2k-2\ell$.
\end{proof}

As a consequence of the previous result, we can prove that when the cyclic subspace codes $\C_1\subseteq \cG_q(m,k)$ and $\C_2\subseteq \cG_q(n,m)$ have minimum distances $2k-2$ and $2m-2$, respectively, then the code $\C_2 \odot \C_1\subseteq\cG_q(n,k)$ is a cyclic subspace code with minimum distance $2k-2$. Moreover, in this case if $\C_2$ has full-length, then we get that $\C_2 \odot \C_1$ has full-length as well.

\begin{corollary}
\label{cor:C1C2optimalimpliesC2odotC1optimal}
Let $n=rm$ and consider $\cC_1\subseteq\cG_q(m,k)$, with $1<k\leq m$, $\cC_2\subseteq\cG_q(n,m)$ and $\cC_2\odot\cC_1\subseteq\cG_q(n,k)$ as in Theorem \ref{th:nested2}. If $\ell=1$ and $\ell'=1$, i.e. $\cC_1$ and $\cC_2$ have minimum distances $2k-2$ and $2m-2$, respectively, then $\cC_2\odot\cC_1\subseteq \cG_q(n,k)$ is a multi-orbit cyclic subspace code having minimum distance $2k-2$ and size
\begin{equation}
\label{eq:cardinalityoptimal}
\lvert \cC_2 \odot \cC_1 \rvert=\lvert \cC_2 \rvert \lvert \cC_1 \rvert . 
\end{equation}
Also, if $\C_2$ has full-length then $\cC_2 \odot \cC_1$ has full-length, as well.
\end{corollary}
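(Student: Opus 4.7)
The plan is to read off both conclusions directly from \Cref{th:nested2} and \Cref{prop:fulllenghtmulti}, checking that the hypothesis $k>1$ (built into the construction) is exactly what is needed in the two borderline inequalities.

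First I would apply \Cref{th:nested2} with $\ell=\ell'=1$. Since $k>1$, we have $\max\{\ell,\ell'\}=1<k$, so the size statement in \Cref{th:nested2} immediately yields $\lvert\C_2\odot\C_1\rvert=\lvert\C_2\rvert\lvert\C_1\rvert$, proving \eqref{eq:cardinalityoptimal}. Furthermore, $\max\{\ell,\ell'\}=\ell=1$, so the last clause of \Cref{th:nested2} gives $d(\C_2\odot\C_1)=2k-2\ell=2k-2$.

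For the full-length assertion, I would appeal to \Cref{prop:fulllenghtmulti}. Its two hypotheses are that $\C_2$ has full-length (which is our assumption) and that $d(\C_2)>2(m-k)$. The latter needs to be checked: we have $d(\C_2)=2m-2$ and $2(m-k)=2m-2k$, so the required inequality reduces to $2k>2$, i.e.\ $k>1$, which holds by assumption. Hence \Cref{prop:fulllenghtmulti} applies and guarantees that $\C_2\odot\C_1$ is a full-length multi-orbit cyclic subspace code.

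No genuine obstacle arises here, as the statement is essentially a packaging of the two main results of the subsection in the extremal case of best possible minimum distance; the only point worth flagging is the verification of the strict inequality $d(\C_2)>2(m-k)$, which is exactly where the standing hypothesis $k>1$ is used.
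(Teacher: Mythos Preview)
Your proof is correct and follows essentially the same route as the paper: apply \Cref{th:nested2} with $\ell=\ell'=1$ to get the minimum distance and size, then invoke \Cref{prop:fulllenghtmulti} after checking $d(\C_2)=2m-2>2(m-k)$ via $k>1$. The only difference is that you make the role of the hypothesis $k>1$ more explicit in both places.
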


\begin{proof}
   By applying \Cref{th:nested2} for the case $\ell=\ell'=1$, we derive that $\C_2 \odot \C_1$ has minimum distance $2k-2$ and cardinality as in \eqref{eq:cardinalityoptimal}.  Furthermore, if $\C_2$ has full-length and $d(\cC_2)=2m-2>2(m-k)$, then $\C_2 \odot \C_1$ has full-length by \Cref{prop:fulllenghtmulti}, thus proving the assertion.
\end{proof}

We illustrate, in the following example, a construction of a family of cyclic subspace codes in $\cG_q(9k,k)$ with minimum distance $2k-2$ and size asymptotically equal to $q^{12k-2}$, as $k$ or $q$ tends to infinity. We will use the operation $\odot$ acting on codes defined as in Example \ref{ex:RRToneorbits}.

\begin{example} \label{ex:nest2one-orbit}
Consider the codes $\cC_1$ and $\cC_2$ as in Example \ref{ex:RRToneorbits}. As proved in \cite[Theorem 12]{roth2017construction}, $\C_{1}$ is a cyclic subspace code having size $t_1=\lvert \C_{1} \rvert= \frac{q^{3k}-1}{q-1}$ and minimum distance $2k-2$, and $\C_{2}$ is a cyclic subspace code having size $t_2=\lvert \C_{2} \rvert= \frac{q^{9k}-1}{q-1}$ and minimum distance $2(3k)-2$. 
So by \Cref{cor:C1C2optimalimpliesC2odotC1optimal}
\[
\cC_2\odot\cC_1=\bigcup_{\alpha\in\F_{q^{3k}}^*}\bigcup_{\beta\in\F_{q^{9k}}^*}\lbrace \beta\alpha (v+v^q\gamma_1)+\beta\alpha^q (v+v^q\gamma_1)^q\gamma_2\colon v\in\F_{q^{k}}\rbrace\subseteq \cG_q(9k,k)
\]
is a multi-orbit cyclic subspace code in $\mathcal{G}_{q}(n,k)$ having size $\lvert \C_{1} \rvert \lvert \C_{2} \rvert=\frac{(q^{3k}-1)(q^{9k}-1)}{(q-1)^2}$ and minimum distance $2k-2$.

$\hfill \lozenge$
\end{example}

In Example \ref{ex:ex:nest2multi-orbit} we provide an example of cyclic subspace code with the same parameters of \Cref{ex:nest2one-orbit}, with exception of the size of the code. Indeed, the construction provided in Example \ref{ex:ex:nest2multi-orbit} has size  of order asymptotically equal to $q^{16k-2}$, as $k$ (or $q$) tends to infinity, larger than that constructed in Example \ref{ex:nest2one-orbit}. We remark that by the Johnson type bound II \eqref{eq:johnsonboundoptimal}, the maximum possible size for a code with the same parameters is asymptotically equal to $q^{16k}$, as $k$ (or $q$) tends to infinity.

\begin{example} \label{ex:ex:nest2multi-orbit}
Consider the codes $\cC_1$ and $\cC_2$ as in Example \ref{ex:Largeoneorbits}. 
As proved in \cite[Theorem 3.5]{zhang2024large} (without considering the orbits defined by different $\omega^i$'s),
\[
\cC_1=\bigcup_{a_1 \in \F_{q^k}} \Orb(V_{1,a_1})=\bigcup_{\alpha \in \F_{q^{3k}}^*} \bigcup_{a_1 \in \F_{q^k}} (\alpha V_{1,a_1}) \subseteq  \mathcal{G}_q(3k,k)
\]
is a multi-orbit cyclic subspace code having size $t_1=\lvert \C_1 \rvert= \frac{q^k(q^{3k}-1)}{q-1}$ and minimum distance $2k-2$ and 
\[
\cC_2=\bigcup_{a_2 \in \F_{q^{3k}}} \Orb(V_{2,a_2})=\bigcup_{\alpha \in \F_{q^{9k}}^*} \bigcup_{a_2 \in \F_{q^{3k}}} (\alpha V_{2,a_2}) \subseteq \mathcal{G}_q(9k,3k)
\]
is a multi-orbit cyclic subspace code having size $t_2=\lvert \C_2 \rvert= \frac{q^{3k}(q^{9k}-1)}{q-1}$ and minimum distance $2(3k)-2$.
Then by \Cref{cor:C1C2optimalimpliesC2odotC1optimal}, $\C_2 \odot \C_1$ is a multi-orbit cyclic subspace code in $\mathcal{G}_{q}(9k,k)$ having size $\lvert \C_1 \rvert \lvert \C_2 \rvert=\frac{q^{4k}(q^{3k}-1)(q^{9k}-1)}{(q-1)^2}$ and minimum distance $2k-2$.

$\hfill \lozenge$
\end{example}

\subsection{Operation $\odot$ between more than two multi-orbit cyclic subspace codes}
The operation $\odot$ can be extended to more than two multi-orbit cyclic subspace codes as follows. \\
Let $n=r_{e-1}\cdots r_1 m$, where  $r_1,\ldots,r_{e-1},m>1$ are positive integers. Consider
\begin{equation}
\cC_1=\bigcup_{h_1=1}^{s_1} \Orb(V_{1,h_1})=\{U_{1,1},\ldots,U_{1,t_1}\} \subseteq \mathcal{G}_q(m,k), 
\end{equation}
with $k>1$ and let 
\begin{equation}
\label{eq:Ci}
\begin{aligned}
 \C_2&=\bigcup_{h_2=1}^{s_2} \Orb(V_{2,h_2})=\{U_{2,1},\ldots,U_{2,t_2}\} \subseteq \mathcal{G}_q(r_1m ,m)\\
 \C_3&=\bigcup_{h_3=1}^{s_3} \Orb(V_{3,h_3})=\{U_{3,1},\ldots,U_{3,t_3}\} \subseteq \mathcal{G}_q(r_2r_1m ,r_1m)\\
 \vdots\\
 \C_e&=\bigcup_{h_e=1}^{s_e} \Orb(V_{e,h_e})=\{U_{e,1},\ldots,U_{e,t_e}\} \subseteq \mathcal{G}_q(r_{e-1}\cdots r_1m, r_{e-2}\cdots r_1m)
\end{aligned}
\end{equation}
be full-length multi-orbit cyclic subspace codes, 
where, for every $i \in \{2,\ldots,e\}$ and for every $h_i \in \{1,\ldots,s_i\}$, \[V_{i,h_i}=\im(\Phi_{i,h_i}),\]
where 
\[
\Phi_{i,h_i}: \F_{q^{r_{i-2} \cdots r_1m}} \longrightarrow \F_{q^{r_{i-1} \cdots r_1 m}} 
\]
is an injective $\F_q$-linear map, with $r_0:=1$.

Define, recursively $\C_i \odot \C_{i-1} \odot \cdots \odot \C_1$ as follows. Recall 
\[
\cC_2 \odot \cC_1=\bigcup_{h_2=1}^{s_2}\bigcup_{j_1=1}^{t_1}\Orb(V_{2,h_2} \odot U_{1,j_1}) =\bigcup_{j_2=1}^{t_2}\bigcup_{j_1=1}^{t_1}(U_{2,j_2} \odot U_{1,j_1}) \subseteq \mathcal{G}_q(r_1m,k)
\]
and define for any $3\leq i \leq e$
\[
\C_{i} \odot \C_{i-1} \odot \cdots \odot \C_1:=\C_i \odot (\C_{i-1} \odot \cdots \odot \C_1)\subseteq \mathcal{G}_q(r_{i-1}\cdots r_1m,k).
\]
As in the case $e=2$, note that $\C=\C_{e} \odot \C_{e-1} \odot \cdots \odot \C_1$ turns out to be the code
\begin{equation}
\label{eq:multiorbitgeneralized}
\begin{array}{rl}
\cC & =\bigcup\limits_{h_e=1}^{s_e} \bigcup\limits_{j_{e-1}=1}^{t_{e-1}}\cdots \bigcup\limits_{j_1=1}^{t_1}\Orb(V_{e,h_e} \odot U_{e-1,j_{e-1}} \odot \cdots \odot U_{1,j_1}) \\
&=\bigcup\limits_{j_e=1}^{t_e} \bigcup\limits_{j_{e-1}=1}^{t_{e-1}}\cdots \bigcup\limits_{j_1=1}^{t_1}(U_{e,j_e} \odot U_{e-1,j_{e-1}} \odot \cdots \odot U_{1,j_1}).
\end{array}
\end{equation}

To better understand how $\odot$ acts on more than two cyclic subspace codes, we give the following example, which relies on the codes defined in   \Cref{ex:RRToneorbits}.

\begin{example} \label{ex:RRT3one-orbit}
Let $r_2=r_1=3,\,m=3k$ and $n=27k$, we will construct a multi-orbit code in $\mathcal{G}_q(27k,k)$ by using codes in $\mathcal{G}_q(3k,k)$, $\mathcal{G}_q(9k,3k)$ and $\mathcal{G}_q(27k,9k)$. Consider the subspaces
\[
V_{1}=\{v+v^q\gamma_1 \colon v \in \F_{q^k} \} \subseteq \mathcal{G}_q(3k,k)
\]
and 
\[
V_{2}=\{u+u^q\gamma_2 \colon u \in \F_{q^{3k}} \} \subseteq \mathcal{G}_q(9k,3k)
\]
as in \Cref{ex:RRToneorbits}, where $V_2=\im(\Phi_2)$ and \[
\begin{array}{llll}
\Phi_{2}: & \F_{q^{3 k}} & \longrightarrow & \F_{q^{9k}}\\
& u & \longmapsto & u+u^q\gamma_2
\end{array}
\] and let $\C_1$ and $\C_2$ be the corresponding codes. Now, consider $\gamma_3 \in \F_{q^{27k}}$ such that $\F_{q^{9k}}(\gamma_3)=\F_{q^{27k}}$. Define the subspace 
\[
V_{3}=\{w+w^q\gamma_3 \colon w \in \F_{q^{9k}} \} \subseteq \mathcal{G}_q(27k,9k)
\]
and define the code 
\[
\C_3=\Orb(V_{3}) \subseteq \mathcal{G}_q(27k,9k).
\]
The map 
\[
\begin{array}{llll}
\Phi_{3}: & \F_{q^{9 k}} & \longrightarrow & \F_{q^{27k}}\\
& w & \longmapsto & w+w^q\gamma_3
\end{array}
\]
is an injective $\F_q$-linear map
such that $V_{3}=\im(\Phi_{3})$.
Therefore, we can define 
\[
\begin{aligned}
  \cC_3 \odot \cC_2 \odot \cC_{1}&=\bigcup_{\beta \in \F_{q^{9k}}^*} \bigcup_{\alpha \in \F_{q^{3k}}^*} \Orb (V_{3} \odot (\beta V_2) \odot (\alpha V_{1})) \\
  &=\bigcup_{\xi\in\F_{q^{27k}}}\bigcup_{\beta \in \F_{q^{9k}}^*} \bigcup_{\alpha \in \F_{q^{3k}}^*}\{\xi(V_3\odot\lbrace \beta\alpha (v+v^q\gamma_1)+\beta\alpha^q (v+v^q\gamma_1)^q\gamma_2\colon v\in\F_{q^{k}}\rbrace)\}\\
  &=\bigcup_{\xi\in\F_{q^{27k}}}\bigcup_{\beta \in \F_{q^{9k}}^*} \bigcup_{\alpha \in \F_{q^{3k}}^*}\xi \{(\beta\alpha (v+v^q\gamma_1)+\beta\alpha^q (v+v^q\gamma_1)^q\gamma_2)+(\beta^q\alpha^q (v+v^q\gamma_1)^q\\
  &+\beta^q\alpha^{q^2} (v+v^q\gamma_1)^{q^2}\gamma_2^q)\gamma_3\colon v\in\F_{q^{k}}\}\subseteq\cG_q(27k,k).
\end{aligned}
\]

   $\hfill \lozenge$
\end{example}

Once defined the operation $\odot$ for $e$ multi-orbit cyclic subspace codes, with $e\geq 2$, we can generalize Corollary \ref{cor:C1C2optimalimpliesC2odotC1optimal}, allowing us to easily compute the size of the code $\cC_e\odot\dots\odot\cC_1$.

\begin{theorem} \label{th:nested>2}
Let $\C_i$ defined as in \eqref{eq:Ci} for $i=1,\dots,e$, $e\geq 2$, and assume that $d(\C_1)=2k-2$ and $d(\C_i)=2r_{i-2}\cdots r_1m-2$, for every $i\in \{2,\ldots,e\},$ with $r_0:=1$. The code $\C=\C_e \odot \cdots \odot \C_1 \subseteq \mathcal{G}_q( r_{e-1}\cdots r_1m,k)$ defined as in \eqref{eq:multiorbitgeneralized} is a multi-orbit cyclic subspace code having minimum distance $2k-2$ and size 
\[
\lvert \C \rvert = \lvert \C_e \odot \cdots \odot \C_1 \rvert =\prod_{i=1}^e\lvert \C_i \rvert.
\]

\end{theorem}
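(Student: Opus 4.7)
The plan is to proceed by induction on $e \geq 2$. The base case $e = 2$ is exactly Corollary \ref{cor:C1C2optimalimpliesC2odotC1optimal}: the hypotheses $d(\C_1) = 2k-2$ and $d(\C_2) = 2m-2$ are precisely what the theorem assumes at $e=2$, since $r_0 := 1$, and the corollary then delivers the minimum distance $2k-2$ together with the size $|\C_1|\cdot|\C_2|$.

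For the inductive step, assume the statement has been established at level $e-1$, and set $\C' := \C_{e-1} \odot \C_{e-2} \odot \cdots \odot \C_1$. By the recursive definition of the operation $\odot$, one has $\C = \C_e \odot \C'$. The inductive hypothesis applied to $\C_1,\ldots,\C_{e-1}$ (noting that the parameters $r_1,\ldots,r_{e-2}$ and $m$ already satisfy the required divisibility pattern) gives that $\C'$ is a multi-orbit cyclic subspace code in $\mathcal{G}_q(r_{e-2}\cdots r_1 m,\, k)$ with minimum distance $2k-2$ and cardinality $\prod_{i=1}^{e-1}|\C_i|$. I then apply Corollary \ref{cor:C1C2optimalimpliesC2odotC1optimal} with $\C'$ playing the role of ``$\C_1$'' and $\C_e$ playing the role of ``$\C_2$''. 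The ambient dimensions match since $r_{e-1}\cdots r_1 m = r_{e-1} \cdot (r_{e-2}\cdots r_1 m)$; the distance hypotheses are $d(\C') = 2k-2$ from the induction and $d(\C_e) = 2(r_{e-2}\cdots r_1 m) - 2$ from the assumption; and the injectivity requirement on the $\F_q$-linear maps $\Phi_{e,h_e}$ defining the representatives of $\C_e$ is built into its specification. The corollary then yields that $\C = \C_e \odot \C'$ is a multi-orbit cyclic subspace code in $\mathcal{G}_q(r_{e-1}\cdots r_1 m,\, k)$ with minimum distance $2k-2$ and cardinality $|\C_e| \cdot |\C'| = \prod_{i=1}^{e}|\C_i|$, closing the induction.

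I do not anticipate a genuine obstacle here: the definitions in Section \ref{sec:nested} were arranged precisely so that this induction telescopes cleanly through repeated applications of Corollary \ref{cor:C1C2optimalimpliesC2odotC1optimal}. The only delicate point worth verifying at each step is that the intermediate code $\C'$ is of exactly the type needed to serve as the inner argument of $\odot$ at the next level, and this is the content that the inductive hypothesis propagates (a multi-orbit cyclic subspace code in the correct Grassmannian with the optimal minimum distance $2k-2$). Consequently, nothing beyond Corollary \ref{cor:C1C2optimalimpliesC2odotC1optimal} and the recursive definition of $\odot$ in \eqref{eq:multiorbitgeneralized} is required; in particular, since the theorem statement does not claim full-length, one need not invoke the full-length clause of that corollary, although, should one wish, the very same induction preserves full-length provided each $\C_i$ with $i \geq 2$ is itself full-length.
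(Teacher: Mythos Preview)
Your proof is correct and follows essentially the same approach as the paper: induction on the number of factors, with the base case and the inductive step both reducing to Corollary~\ref{cor:C1C2optimalimpliesC2odotC1optimal} applied to $\C_{e-1}\odot\cdots\odot\C_1$ and $\C_e$. Your additional remark about propagating full-length is accurate but, as you note, not needed for the statement as given.
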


\begin{proof}
We proceed by using finite induction on $i$, with $2\leq i\leq e$. When $i=2$, the assertion coincides with \Cref{cor:C1C2optimalimpliesC2odotC1optimal}. So assume that the assertion is true for every $j \leq i-1$ and we prove the assertion for $3\leq i \leq e$. By inductive hypothesis, $\C_{i-1}\odot \cdots \odot \C_1 \subseteq \mathcal{G}_q(r_{i-2}\cdots r_{1}m,k)$ has minimum distance $2k-2$. Moreover, the minimum distance of $\C_i \subseteq \mathcal{G}_q(r_{i-1}\cdots r_{1}m,r_{i-2}\cdots r_{1}m)$ is $2r_{i-2}\cdots r_{1}m-2$. Let $\overline{r}_1=r_{i-2}\cdots r_{1}$. Hence, $\C_{i-1}\odot \cdots \odot \C_1 \subseteq \mathcal{G}_q(\overline{r}_1m,k)$ has minimum distance $2k-2$ and $\C_i \subseteq \mathcal{G}_q(r_{i-1}\overline{r}_1m,\overline{r}_1m)$ has minimum distance $2\overline{r}_1m-2$. The assertion follows by applying \Cref{cor:C1C2optimalimpliesC2odotC1optimal}, to the codes $\C_{i-1}\odot \cdots \odot \C_1$ and $\C_i$. 

\end{proof}

In the following example, we observe that the code constructed in \Cref{ex:RRT3one-orbit} has minimum distance $2k-2$ and size asymptotically equal to $q^{39k-3}$, as $k$ (or $q$) tends to infinity. 

\begin{example} \label{ex:nest3one-orbit}
Let $\cC_1,\cC_2,\cC_3$ as in Example \ref{ex:RRT3one-orbit}. As proved in \cite[Theorem 12]{roth2017construction}, $V_1$ is a Sidon space of $\cG_q(3k,k)$, $V_2$ is a Sidon space of $\cG_q(9k,3k)$ and $V_3$ is a Sidon space of $\cG_q(27k,9k)$. Then, by Theorem \ref{thm:sidoniffoptimal}, $\C_1$ is a cyclic subspace code having size $t_1=\lvert \C_1 \rvert= \frac{q^{3k}-1}{q-1}$ and minimum distance $2k-2$; $\C_2$ is a cyclic subspace code having size $t_2=\lvert \C_2 \rvert= \frac{q^{9k}-1}{q-1}$ and minimum distance $2(3k)-2$ and $\C_3$ is a cyclic subspace code having size $t_3=\lvert \C_3 \rvert= \frac{q^{27k}-1}{q-1}$ and minimum distance $2(9k)-2$.
So, by using \Cref{th:nested>2}, we get that 
$\cC_3 \odot \cC_2 \odot \cC_{1}$
is a multi-orbit cyclic subspace code in $\mathcal{G}_{q}(27k,k)$ having size $\lvert \C_1 \rvert \lvert \C_2 \rvert \lvert \C_3 \rvert =\frac{(q^{3k}-1)(q^{9k}-1)(q^{27k}-1)}{(q-1)^3}$ and minimum distance $2k-2$.

   $\hfill \lozenge$
\end{example}

In Example \ref{ex:nest3one-orbit} we obtain a code in $\cG_q(27k,k)$ whose size is asymptotically equal to $q^{36k-3}$, whereas the Johnson type bound II \eqref{eq:johnsonboundoptimal} is asymptotically equal to $q^{52k}$. In the following section, we will apply \Cref{th:nested>2} in order to get cyclic subspace codes with the same parameters and larger size.

\section{Special Constructions}
\label{sec:specialconstr}

In this section, we combine constructions of cyclic subspace codes 
in a suitable way using the operation $\odot$, in order to get new constructions whose sizes are larger than that of all the known constructions for almost of all the parameters that we are considering.
Due to the definition of the operation $\odot$, we need a tower of field extensions to apply the procedure of the previous section. Therefore, we divide the discussion according to the prime power decomposition of $n/k$. More precisely, we will distinguish the following cases:
\begin{itemize}
    \item $n=rk$, with $r$ power of $2$;
    \item $n=rk$, with $r$ odd;
    \item $n=rk$, with $r$ even and not a power of 2.
\end{itemize}

\subsection{Case $n=rk$, with $r$ power of $2$} \label{sec:fromRRT}
 The starting point is the construction of multi-orbit cyclic subspace codes provided in \cite{roth2017construction}.\\
In what follows, we will use the following notation:
\begin{itemize}
    \item  the pair $(r,k)$ in ${_r}V_{k}$ indicates that the considered subspace ${_r}V_{k}$ is in $\cG_q(rk,k)$ and the $3$-ple $(r,k,h)$ in the notation ${_r}V_{k,h}$ indicates that ${_r}V_{k,h}$ is the $h$-th element of the sequence of $({_r}V_{k})_h$ of subspaces of the type ${_r}V_{k}\subseteq \cG_q(rk,k)$;
     \item in the same spirit, the pair $(r,k)$ in ${_r}\Phi_{k}$ indicates that the considered $\fq$-linear map ${_r}\Phi_{k}$ maps elements of $\fqk$ into elements of $\F_{q^{rk}}$, i.e. ${_r}\Phi_{k}:\F_{q^{k}}  \longrightarrow  \F_{q^{rk}}$. The $3$-ple $(r,k,h)$ in the notation ${_r}\Phi_{k,h}$ denotes that the considered $\fq$-linear map ${_r}\Phi_{k,h}$ is the $h$-th element of the sequence $({_r}\Phi_{k})_h$ of maps of the type ${_r}\Phi_{k}:\F_{q^{k}}  \longrightarrow  \F_{q^{rk}}$;
    \item the pair $(r,k)$ in ${_r}\cC_{k}$ indicates that the considered code ${_r}\cC_{k}$ is a subset of $\cG_q(rk,k)$.
\end{itemize}

\begin{construction} [see \textnormal{\cite[Construction 37]{roth2017construction}}]
\label{constr:multiorbitRothRavivTamo}
For any prime power $q\geq 3$ and positive integer $k$, let $\omega$ be a primitive element of $\fqk$. For $n=2k$, let $\gamma \in\fqn \setminus \F_{q^k}$ such that $\N_{q^n/q^k}(\gamma)=\omega$. For $h=1,\dots, \tau$, where $\tau=\lfloor\frac{q-1}{2}\rfloor$, define
\begin{equation} \label{eq:subspaceroth}
{_2}V_{k,h}:=\lbrace v+v^q\omega^h\gamma\colon v\in\fqk\rbrace\subseteq \F_{q^{2k}}=\fqn
\end{equation}
and let \[
{_2}\C_{k}=\bigcup_{h=1}^{\tau}\mathrm{Orb}({_2}V_{k,h}) \subseteq \mathcal{G}_q(2k,k).\] 
\end{construction}

In \cite{roth2017construction}, it is shown that the code ${_2}\cC_{k}$ as in \Cref{constr:multiorbitRothRavivTamo} is a cyclic subspace code of minimum distance $2k-2$, as follows.

\begin{theorem} [see \textnormal{\cite[Lemma 38]{roth2017construction}}]\label{th:multiorbitRothRavivTamo}
Let $n=2k$, $q\geq 3$ be a prime power and $\tau=\lfloor\frac{q-1}{2}\rfloor$. The code ${_2}\cC_{k}\subseteq\cG_q(n,k)$ as in \Cref{constr:multiorbitRothRavivTamo} is a cyclic subspace code of cardinality $\tau\frac{q^{n}-1}{q-1}$ and minimum distance $2k-2$.
\end{theorem}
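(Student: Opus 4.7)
My plan is to prove the theorem by establishing three claims: (a) each orbit $\Orb({_2}V_{k,h})$ has cardinality $(q^n-1)/(q-1)$ and minimum distance $2k-2$; (b) the $\tau$ orbits are pairwise disjoint; and (c) any two codewords chosen from distinct orbits intersect in an $\fq$-subspace of dimension at most $1$. Combining (a) and (b) yields the cardinality $\tau(q^n-1)/(q-1)$, while (a) together with (c) forces the global minimum distance to be exactly $2k-2$.

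For claim (a), set $\gamma_h=\omega^h\gamma$ so that ${_2}V_{k,h}=\{u+u^q\gamma_h : u\in\fqk\}$ and, since $\omega\in\fqk$, $N_{q^{n}/q^{k}}(\gamma_h)=\gamma_h^{q^k+1}=\omega^{2h+1}$. I would show ${_2}V_{k,h}$ is a Sidon space by expanding a putative identity $(u_1+u_1^q\gamma_h)(u_2+u_2^q\gamma_h)=(u_3+u_3^q\gamma_h)(u_4+u_4^q\gamma_h)$ via $\gamma_h^2=s_h\gamma_h-\omega^{2h+1}$, where $s_h\in\fqk$ is the $\fqk$-trace of $\gamma_h$, and comparing coefficients in the decomposition $\fqn=\fqk\oplus\fqk\gamma_h$. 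Setting $x=u_1u_2-u_3u_4$, the constant part reads $x=\omega^{2h+1}x^q$; the range $1\le h\le\lfloor(q-1)/2\rfloor$ together with primitivity of $\omega$ forces $x=0$, i.e., $u_1u_2=u_3u_4$. The $\gamma_h$-part then reduces to $u_1u_2^q+u_1^qu_2=u_3u_4^q+u_3^qu_4$, and dividing by $u_1u_2=u_3u_4$ gives $\{u_1^{q-1},u_2^{q-1}\}=\{u_3^{q-1},u_4^{q-1}\}$, hence $\{u_1\fq,u_2\fq\}=\{u_3\fq,u_4\fq\}$. Claim (a) then follows from Theorem~\ref{thm:sidoniffoptimal}, and Remark~\ref{rmk:ifd=2k-2thenOrbisfull} further gives that each orbit has full-length.

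For claims (b) and (c), fix $h\ne h'$ in $\{1,\dots,\tau\}$ and $\alpha\in\fqn^*$, and compute ${_2}V_{k,h}\cap\alpha\,{_2}V_{k,h'}$. Write $\alpha=a+b\gamma$ with $a,b\in\fqk$, equate $v+v^q\omega^h\gamma=\alpha(w+w^q\omega^{h'}\gamma)$, and use $\gamma^2=s\gamma-\omega$ (with $s=\gamma+\gamma^{q^k}\in\fqk$) to split the identity into two $\fqk$-equations. The constant component expresses $v$ in terms of $w$ as $v=aw-bw^q\omega^{h'+1}$, and substituting into the $\gamma$-component produces an $\fq$-linearized polynomial $Aw+Bw^q+Cw^{q^2}=0$ in $w$ of $q$-degree at most $2$, whose $\fqk$-kernel is in bijection with ${_2}V_{k,h}\cap\alpha\,{_2}V_{k,h'}$. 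The hypothesis $1\le h,h'\le\lfloor(q-1)/2\rfloor$ keeps the relevant powers $\omega^{h-h'}$, $\omega^{2h+1}$, and $\omega^{2h'+1}$ outside the subgroup $(\fqk^*)^{q-1}$, which forces this kernel to have $\fq$-dimension at most $1$ for every admissible $(a,b,h,h')$. Claim (c) is then exactly this dimension bound; claim (b) follows from (c) because the equality $\Orb({_2}V_{k,h})=\Orb({_2}V_{k,h'})$ would require $\alpha\,{_2}V_{k,h'}={_2}V_{k,h}$ for some $\alpha$, producing an intersection of dimension $k>1$.

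The main obstacle is the kernel analysis in claim (c): one must track, case by case according to whether $b=0$ or $b\ne 0$, how the coefficients $A,B,C$ depend on the parameters and verify that no $(a,b)$ makes the kernel jump to dimension $2$. This is precisely where the bound $\tau=\lfloor(q-1)/2\rfloor$ is essential, since it prevents the exponents appearing in $A,B,C$ from coinciding modulo $q-1$ with those of the $(q-1)$-th power subgroup. With (a)--(c) in hand, the cardinality and distance assertions are immediate.
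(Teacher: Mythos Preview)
The paper does not supply its own proof of this statement; the theorem is quoted directly from \cite[Lemma~38]{roth2017construction} and used only as input for the nesting constructions of Section~\ref{sec:fromRRT}. Your three-part plan is precisely the argument of the original reference: (a) each ${_2}V_{k,h}$ is a Sidon space, so each orbit is full-length with distance $2k-2$ by Theorem~\ref{thm:sidoniffoptimal}; and (b)--(c) the cross-orbit intersections ${_2}V_{k,h}\cap\alpha\,{_2}V_{k,h'}$ are controlled by the $\F_{q^k}$-kernel of a $q$-linearized polynomial of $q$-degree at most $2$, whose $2$-dimensional kernel is excluded by the range restriction $1\le h,h'\le\lfloor(q-1)/2\rfloor$.

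One small caution on your execution of (a): when $q$ is even and $h=\tau=(q-2)/2$ one has $2h+1=q-1$, so $\omega^{-(2h+1)}$ \emph{is} a $(q-1)$-th power in $\F_{q^k}^*$ and the constant-coordinate relation $x=\omega^{2h+1}x^q$ does not by itself force $x=0$. In that boundary case you must use both coordinates of the $\{1,\gamma_h\}$-decomposition simultaneously (equivalently, run the intersection argument as in (c) with $h=h'$); the conclusion is unaffected, but the shortcut you state does not literally go through there.
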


Let $q,\omega,\gamma,\tau$ be as in the hypothesis of \Cref{constr:multiorbitRothRavivTamo}.
Define for every $h \in \{1,\ldots,\tau\}$, the following injective $\F_q$-linear map
\begin{equation} \label{eq:mapRoth}
\begin{array}{llll}
{_2}\Phi_{k,h}: & \F_{q^{k}} & \longrightarrow & \F_{q^{2k}} \\
& v & \longmapsto & v+v^q\omega^h \gamma. 
\end{array}
\end{equation}
Thus
\[
{_2}V_{k,h}=\im({_2}\Phi_{k,h}).
\]

Note that the map ${_2}\Phi_{k,h}$ also depends on $\omega$ and $\gamma$. Therefore, if we change $k$ to a positive integer $k'$, we are implicitly assuming that we are simultaneously considering an element $\omega'$ to be a primitive element of $\F_{q^{k'}}$ and an element $\gamma'\in \F_{q^{2k'}}\setminus\F_{q^{k'}}$ such that $\N_{q^{2k'}/q^{k'}}(\gamma')=\omega'$. Therefore, we can employ the \Cref{constr:multiorbitRothRavivTamo}, with different values of $k$, along with \Cref{th:nested>2}, to obtain new multi-orbit cyclic subspace codes, as shown in what follows.

\begin{construction} 
\label{constr:nestedRRT}
Let $q\geq 3$ be a prime power, $n=2^ek$, with $e\geq 2$. For every $i=1,\dots,e$, let $\omega_i$ be primitive element of $\F_{q^{2^{i-1}k}}$ and let  $\gamma_i\in\F_{q^{2^ik}}\setminus\F_{q^{2^{i-1}k}}$ be such that $\mathrm{N}_{{q^{2^ik}}/q^{2^{i-1}k}}(\gamma_i)=\omega_i$.
So, consider
\[
{_2}V_{2^{i-1}k,h_i}=\lbrace v+v^q\omega_i^{h_i}\gamma_i \colon v\in\F_{q^{2^{i-1}k}} \rbrace \subseteq \F_{q^{2^{i}k}},
\]
for $h_i=1,\dots,\tau$, where $\tau=\lfloor\frac{q-1}{2}\rfloor$, and let
\[
{_2}\C_{2^{i-1}k}=\bigcup_{h_i=1}^{\tau}\mathrm{Orb}({_2}V_{2^{i-1}k,h_i})\subseteq \mathcal{G}_q(2^{i}k,2^{i-1}k).
\]
Define the code
\begin{equation} \label{eq:combiningeRoth}
{_{2^e}}\mathcal{C}_k:=\bigodot_{i=1}^{e}{_2}\cC_{2^{i-1}k}\subseteq\cG_q(2^ek,k).
\end{equation}
\end{construction}

We now show that the code ${_{2^e}}\cC_k\subseteq\cG_q(2^ek,k)$ in Construction \ref{constr:nestedRRT} is a cyclic subspace code of minimum distance $2k-2$.

\begin{theorem}
\label{thm:SpecialRRT}
Let $n=2^ek$, with $e\geq 2$ and let $q\geq 3$ be a prime power. The code ${_{2^e}}\mathcal{C}_k \subseteq \mathcal{G}_q(n,k)$ as in \Cref{constr:nestedRRT}
is a multi-orbit cyclic subspace code in $\mathcal{G}_q(2^ek,k)$ of cardinality 
\[
\lvert {_{2^e}}\mathcal{C}_k \rvert = \prod\limits_{i=1}^e \lvert {_2}\C_{2^{i-1}k} \rvert =\prod\limits_{i=1}^e \left( \left\lfloor\frac{q-1}{2}\right\rfloor\frac{q^{2^ik}-1}{q-1} \right)
\]
and minimum distance $2k-2$.     
\end{theorem}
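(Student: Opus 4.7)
The plan is to recognize the code ${_{2^e}}\mathcal{C}_k$ as an iterated $\odot$-composition that fits directly into the hypotheses of Theorem \ref{th:nested>2}, so that the whole statement falls out from a single invocation of that theorem.

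First, I would observe that for each $i \in \{1, \ldots, e\}$ the code ${_2}\C_{2^{i-1}k} \subseteq \mathcal{G}_q(2^i k, 2^{i-1}k)$ is exactly the Roth--Raviv--Tamo multi-orbit code of Construction \ref{constr:multiorbitRothRavivTamo} applied to the parameter $k' := 2^{i-1}k$, with the chosen $\omega_i$ and $\gamma_i$ playing the roles of $\omega$ and $\gamma$. Since $q \geq 3$ by hypothesis, Theorem \ref{th:multiorbitRothRavivTamo} applies in every case and gives
\[
|{_2}\C_{2^{i-1}k}| \;=\; \left\lfloor \tfrac{q-1}{2} \right\rfloor \cdot \frac{q^{2^i k}-1}{q-1}, \qquad d({_2}\C_{2^{i-1}k}) \;=\; 2 \cdot 2^{i-1}k - 2 \;=\; 2^i k - 2.
\]

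Next, I would apply Theorem \ref{th:nested>2} with the parameter assignment $m := 2k$ and $r_1 = r_2 = \cdots = r_{e-1} := 2$. Under this choice one has $r_{i-1}\cdots r_1 m = 2^{i-1}\cdot 2k = 2^i k$ and $r_{i-2}\cdots r_1 m = 2^{i-1}k$ (using the convention $r_0 = 1$), so the chain of ambients prescribed by Theorem \ref{th:nested>2}, namely $\C_i \subseteq \mathcal{G}_q(r_{i-1}\cdots r_1 m,\, r_{i-2}\cdots r_1 m)$, matches our assignment $\C_i := {_2}\C_{2^{i-1}k} \subseteq \mathcal{G}_q(2^i k, 2^{i-1}k)$. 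The two distance hypotheses demanded by Theorem \ref{th:nested>2}, that is $d(\C_1) = 2k - 2$ and $d(\C_i) = 2 \cdot 2^{i-1} k - 2$ for $i \geq 2$, are precisely what was verified in the previous paragraph.

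Theorem \ref{th:nested>2} then yields immediately that ${_{2^e}}\mathcal{C}_k = \bigodot_{i=1}^e {_2}\C_{2^{i-1}k}$ is a multi-orbit cyclic subspace code of $\mathcal{G}_q(2^e k, k)$ of minimum distance $2k-2$ and cardinality $\prod_{i=1}^e |{_2}\C_{2^{i-1}k}|$, which is exactly the product stated in the theorem. There is essentially no obstacle here beyond aligning indices and confirming that the base-case Roth--Raviv--Tamo inputs are well-defined for every scaled parameter $k' = 2^{i-1}k$; all the substantive work (preservation of minimum distance and multiplicativity of sizes under $\odot$) has already been encapsulated in Theorem \ref{th:nested>2}.
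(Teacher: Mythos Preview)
Your proposal is correct and takes essentially the same approach as the paper's proof: invoke \Cref{th:multiorbitRothRavivTamo} to read off the size and minimum distance of each factor ${_2}\C_{2^{i-1}k}$, then apply \Cref{th:nested>2} with $m=2k$ and $r_1=\cdots=r_{e-1}=2$. The only point the paper makes more explicit is that each representative ${_2}V_{2^{i-1}k,h_i}$ is the image of the injective $\F_q$-linear map ${_2}\Phi_{2^{i-1}k,h_i}$, so that the $\odot$ composition (and hence \Cref{th:nested>2}) is well-defined; you subsume this under ``confirming the inputs are well-defined,'' which is adequate given that the maps were already exhibited in \eqref{eq:mapRoth}.
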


\begin{proof}
We prove the assertion by using \Cref{th:nested>2}. First, we note that for any $i=1,\dots,e$ \[
{_2}V_{2^{i-1}k,h_i}=\im({_2}\Phi_{2^{i-1}k,h_i}),
\] where $\Phi_{2^{i-1}k,h_i}$ denotes the following injective $\F_q$-linear map
\begin{equation}
\begin{array}{llll}
{_2}\Phi_{2^{i-1}k,h_i}: & \F_{q^{2^{i-1}k}} & \longrightarrow & \F_{q^{2^ik}} \\
& v & \longmapsto & v+v^q\omega_i^{h_i} \gamma_i. 
\end{array}
\end{equation}
Therefore, the code ${_{2^e}}\mathcal{C}_k =\bigodot_{i=1}^{e}\left({_2}\C_{2^{i-1}k}\right)$ is well-defined. Moreover, by Theorem \ref{th:multiorbitRothRavivTamo}, we know that for every $i=1,\ldots,e$
\[
\lvert {_2}\cC_{2^{i-1}k}  \rvert = t_i=\tau \frac{q^{2^{i}k}-1}{q-1}
\]
and ${_2}\C_{2^{i-1}k}$ has minimum distance $2(2^{i-1}k)-2$. Thus by Theorem \ref{th:nested>2} we get that ${_{2^e}}\cC_k$ is a cyclic subspace code of cardinality $t_1\cdots t_e=\lfloor\frac{q-1}{2}\rfloor\prod\limits_{i=1}^e \frac{q^{2^ik}-1}{q-1}$ and minimum distance $2k-2$.
\end{proof}

\begin{remark} \label{rk:mapsC2e}
By construction, we have
\[
{_{2^e}}\mathcal{C}_k=\bigodot_{i=1}^{e}\left( \bigcup_{h_i=1}^{\tau}\mathrm{Orb}({_2}V_{2^{i-1}k,h_i}) \right).
\]
Thus, by \eqref{eq:multiorbitgeneralized}, any representative of the orbits of ${_{2^e}}\mathcal{C}_k$ is of the form
\[
{_2}V_{2^{e-1}k,h_e}\odot ((\alpha_{e-1}){_2}V_{2^{e-2}k,h_{e-1}}) \odot \dots \odot  ((\alpha_{1}){_2}V_{k,h_1}),
\]
for some $\alpha_i \in \mathbb{F}_{q^{2^{i}k}}$, with $i= {1, \ldots, e-1}$.
On the other hand, for each $i=1,\dots, e$, we have ${_2}V_{2^{i-1}k,h_i}=\im({_2}\Phi_{2^{i-1}k,h_i})\subseteq \mathbb{F}_{q^{2^ik}}$ and ${_2}\Phi_{2^{i-1}k,h_i}$ is injective. Therefore, any representative of the orbits of ${_{2^e}}\mathcal{C}_k$ is the image of the injective $\F_q$-linear map 
\begin{equation}
\label{eq:mapperoth2e}
{_{2^e}}\Phi_{k}:={_2}\Phi_{2^{e-1}k,h_e}\circ ((\alpha_{e-1}){_2}\Phi_{2^{e-2}k,h_{e-1}}) \circ \dots \circ  ((\alpha_{1}){_2}\Phi_{k,h_1}):\F_{q^k} \longrightarrow \F_{q^{2^ek}}.  
\end{equation}
\end{remark}

The following example explicitly shows the subspaces of the codes ${_{2^e}}\cC_k$ provided in \Cref{th:multiorbitRothRavivTamo} when $e=2$.

\begin{example}
\label{ex:asymptoptimaln=4k}
Let $n=4k$ and $\tau=\lfloor\frac{q-1}{2}\rfloor$. For $i=1,2$, let $\omega_i$ be a primitive element of $\F_{q^{2^{i-1}k}}$. Let $\gamma_i\in\F_{q^{2^ik}}\setminus\F_{q^{2^{i-1}k}}$ such that $\mathrm{N}_{{q^{2^ik}}/q^{2^{i-1}k}}(\gamma_i)=\omega_i$. Then define
\[
{_2}V_{k,h_1}:=\lbrace u+u^q\omega_1^{h_1}\gamma_1\colon u\in\fqk\rbrace\subseteq \F_{q^{2k}},
\]
\[
{_2}V_{2k,h_2}:=\lbrace v+v^q\omega_2^{h_2}\gamma_2\colon v\in\F_{q^{2k}}\rbrace\subseteq \F_{q^{4k}}.
\]
Consider
\[
_{2}\cC_k=\bigcup_{h_1=1}^{\tau}\mathrm{Orb}(_{2}V_{k,h})\subseteq\cG_q(2k,k)
\]
and 
\[
_{2}\cC_{2k}=\bigcup_{h_2=1}^{\tau}\mathrm{Orb}(_{2}V_{2k,h})\subseteq\cG_q(4k,2k).
\]
By Theorem \ref{thm:SpecialRRT}, the code
\[
\begin{aligned}
_{4}\cC_{k} =_{2}\cC_{2k}\odot_{2}\cC_{k}&=\bigcup_{h_2=1}^{\tau} \bigcup_{\alpha \in \F_{q^{2k}}^*}
\bigcup_{h_1=1}^{\tau}\mathrm{Orb}(_{2}V_{2k,h_2}\odot \alpha (_{2}V_{k,h_1}))\\
 &=\bigcup_{h_2=1}^{\tau} \bigcup_{\alpha \in \F_{q^{2k}}^*}
\bigcup_{h_1=1}^{\tau}\Orb(\lbrace v+v^q\omega_2^{h_2}\gamma_2\colon v\in\alpha (_{2}V_{k,h_1}))\rbrace)\\
 &=\bigcup_{h_2=1}^{\tau} \bigcup_{\alpha \in \F_{q^{2k}}^*}
\bigcup_{h_1=1}^{\tau}\Orb(\lbrace (\alpha u+\alpha u^q\omega_1^{h_1}\gamma_1)+(\alpha u+\alpha u^q\omega_1^{h_1}\gamma_1)^q\omega_2^{h_2}\gamma_2\colon u\in\fqk\rbrace)
 \end{aligned}
\]
is a cyclic subspace code in $\mathcal{G}_q(4k,k)$ of cardinality $(\lfloor\frac{q-1}{2}\rfloor)^2\cdot\frac{q^{2k}-1}{q-1}\frac{q^{4k}-1}{q-1}$ and minimum distance $2k-2$.

$\hfill \lozenge$
\end{example}

Note that the size of the code in Example \ref{ex:asymptoptimaln=4k} is asymptotically equal to $\frac{1}{4}q^{6k}$ and so it asymptotically approaches the Johnson type bound II within a factor of $\frac{1}{4}+o_k(1)$ ( or $\frac{1}{4}+o_q(1))$). In Section \ref{sec:comparison}, we will see that this behavior of the size of the code ${_{2^e}\cC_k}\subseteq\cG_q(2^ek,k)$ holds true for any $e\geq 2$.

\subsection{Case $n=rk$, with $r$ odd}

In Section \ref{sec:fromRRT}, we have considered the case $n=rk$, where $r$ is a power of $2$. In this section, we construct multi-orbit cyclic subspace codes with $n=rk$, for $r$ an odd integer, by employing again  \Cref{th:nested>2}. Here, the starting point will be the construction of multi-orbit cyclic subspace codes provided in \cite{zhang2024large}.\\
We use the following notation. For a positive integer $\ell_0$, and an element $\underline{\alpha}=(\alpha_1,\ldots,\alpha_{\ell_0}) \in \F_{q^k}^{\ell_0}$, define 
\[
P_{k,\underline{\alpha}}(\gamma)=1+\alpha_1\gamma+\cdots+\alpha_{\ell_0}\gamma^{\ell_0}.
\]
and we set $P_{k,\underline{\alpha}}(\gamma):=1$ for $\ell_0=0$. We set
\[
S_{k,\ell_0}:=\{(\alpha_1,\ldots,\alpha_{\ell_0}) \in \F_{q^k}^{\ell_0}\, \colon \,\alpha_1,\ldots,\alpha_{\ell_0-1}\in \F_{q^k},\alpha_{\ell_0} \in \F_{q^k}^*\}.
\]

Given a positive integer $\ell$ with $\ell_0\leq \ell$, we define the following set of indices 
\[
I_{k,\ell_0,\ell}:=S_{k,\ell_0} \times \F_{q^k} \times \{0,1,\dots, q-2\}\times \{\ell_0,\ell_0+1,\dots,\ell\}.
\]
Then
    \begin{itemize}
    \item ${_r}V_{k,\underline{h}}$ denotes a subspace in $\cG_q(rk,k)$ which depends on a vector of indices $\underline{h}=(\underline{\alpha_{\ell_0}},a,j,\ell_1)\in I_{k,\ell_0,\ell}$, i.e.
    \[\underline{\alpha_{\ell_0}}\in S_{k,\ell_0},\,  a\in\fqk,\, 0\leq j\leq q-2, \,\ell_0\leq \ell_1 \leq \ell \,\, \text{for}\, \,\ell:=\left\lceil \frac{r}{2} \right\rceil-2.
    \]
    \item ${_r}\Phi_{k,\underline{h}}:  \F_{q^{k}}  \longrightarrow  \F_{q^{rk}}$ denotes an $\fq$-linear map which depends on a vector of indices $\underline{h}=(\underline{\alpha_{\ell_0}},a,j,\ell_1)\in I_{k,\ell_0,\ell}$, where $\ell:=\left\lceil \frac{r}{2} \right\rceil-2.$
    \item ${_r}\cC_{k,\ell_0}$ denotes a multi-orbit code in $\cG_q(rk,k)$ which depends on a positive integer $\ell_0\leq \ell=\lceil \frac{r}{2}\rceil-1$.
    \item ${_r}\cC_{k}$ denotes a multi-orbit code in $\cG_q(rk,k)$ which is union over $\ell_0$ of codes denoted by ${_r}\cC_{k,\ell_0}$.
\end{itemize}

We consider the construction of cyclic subspace codes defined in \textnormal{\cite[Theorem 3.5]{zhang2024large}}. This construction is defined in $\cG_q(rk,k)$ for any integer $r\geq 3$. However, for our goal, we will only consider the case in which $r$ is an odd prime.

\begin{construction} [see \textnormal{\cite[Theorem 3.5]{zhang2024large}}]
\label{constr:constrLarge}
Let $n=pk$, for some positive integers $k \geq 2$ and an odd prime $p$. Let $\omega$ be a primitive element of $\fqk$. Let 
$\ell:=\left\lceil \frac{n}{2k} \right\rceil-2=\frac{p-3}{2}$ and let $\gamma\in\fqn$ be such that $\fqk(\gamma)=\fqn$. Let $0\leq \ell_0\leq \ell$ and define 
\begin{equation}
    \label{eq:subspacesplarge}
{_p}V_{k,\underline{h}}
=\{uP_{k,\underline{\alpha_{\ell_0}}}(\gamma)+\omega^j(u^q+au)\gamma^{\ell_1+1}\colon u \in \F_{q^k}\}\in\cG_q(n,k),
\end{equation}
for any $\underline{h}=(\underline{\alpha_{\ell_0}},a,j,\ell_1) \in I_{k,\ell_0,\ell}$. 
Let
\[
{_p}\cC_{k,\ell_0}:=\bigcup_{\underline{h}\in I_{k,\ell_0,\ell}}\mathrm{Orb}(_{p}V_{k,\underline{h}})
\]
and define the code
\begin{equation} \label{eq:codesrfromlarge}
{_p}\cC_k:=\bigcup_{\ell_0=0}^{\ell}{_p}\cC_{k,\ell_0}\subseteq\cG_q(n,k).
\end{equation}
\end{construction}

The authors of \cite{zhang2024large} proved that the code ${_{p}}\cC_k\subseteq\cG_q(pk,k)$ in Construction \ref{constr:constrLarge} is a cyclic subspace code of minimum distance $2k-2$.

\begin{theorem} [see \textnormal{\cite[Theorem 3.5]{zhang2024large}}]
\label{th:Large}
Let $n=pk$, for some positive integers $k\geq 2$ and $p$ odd prime, and let $\ell=\frac{p-3}{2}$. The code ${_p}\cC_k$ as in \Cref{constr:constrLarge} is a cyclic subspace code in $\cG_q(n,k)$ with size $\frac{q^k(q^{(\ell+1)k}-1)(q^{n}-1)}{q^k-1}$ and minimum distance $2k-2$.
\end{theorem}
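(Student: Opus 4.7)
My plan proceeds in three steps: first, to show that each single subspace ${_p}V_{k,\underline{h}}$ is a Sidon space (so every individual orbit has full length and minimum distance $2k-2$); second, to control the pairwise intersections between representatives of distinct orbits so that $d({_p}\cC_k)$ does not drop; and third, to count the distinct orbits to obtain the claimed size.

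For the first step, fix $\underline{h} = (\underline{\alpha_{\ell_0}}, a, j, \ell_1) \in I_{k,\ell_0,\ell}$ and write a generic element of $V := {_p}V_{k,\underline{h}}$ as
\[
\varphi(u) = u\,P_{k,\underline{\alpha_{\ell_0}}}(\gamma) + \omega^j(u^q + au)\gamma^{\ell_1+1},
\]
which, viewed as a polynomial in $\gamma$ over $\fqk$, has degree at most $\ell_1+1 \leq \ell+1 = (p-1)/2$. By \Cref{thm:sidoniffoptimal} it suffices to verify that whenever $\varphi(u_1)\varphi(u_2) = \varphi(u_3)\varphi(u_4)$ with nonzero $u_i$, one has $\{u_1\fq, u_2\fq\}=\{u_3\fq, u_4\fq\}$. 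Both products are polynomials in $\gamma$ of degree at most $2(\ell+1) = p-1$, strictly less than the degree $p$ of the minimal polynomial of $\gamma$ over $\fqk$, so these representations are unique and coefficient comparison gives an explicit small $\fqk$-linear system on the $u_i$ that forces the required $\fq$-proportionality. Hence each $V$ is a Sidon space and, by \Cref{thm:sidoniffoptimal} and \Cref{rmk:ifd=2k-2thenOrbisfull}, $\Orb(V)$ is full-length, of size $(q^n-1)/(q-1)$, with minimum distance $2k-2$.

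For the second step, to show $d({_p}\cC_k) \geq 2k-2$ I would study $\dim_{\fq}\!\bigl(\beta\,{_p}V_{k,\underline{h}} \cap {_p}V_{k,\underline{h}'}\bigr)$ for $\beta \in \fqn^*$ and tuples $\underline{h}, \underline{h}'$ indexing different orbits. A nonzero vector in this intersection yields $\beta\,\varphi_{\underline{h}}(u) = \varphi_{\underline{h}'}(u')$; expanding $\beta$ in the $\fqk$-basis $1, \gamma, \ldots, \gamma^{p-1}$ and clearing denominators keeps the overall $\gamma$-degree under $p$, so coefficient comparison again yields a linear system forcing $u'$ to be an $\fq$-linear function of $u$. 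This bounds the intersection dimension by $1$, and combined with Step~1 gives $d({_p}\cC_k)=2k-2$. For the third step, since every orbit has the same size $(q^n-1)/(q-1)$, it remains to enumerate the distinct orbits; two tuples $\underline{h}, \underline{h}'$ define the same orbit precisely when $\varphi_{\underline{h}}, \varphi_{\underline{h}'}$ differ by multiplication by an element of $\fqn^*$, a relation made explicit by the analysis in Step~2. After quotienting out the obvious symmetries (notably the action of $\fqk^*$ by $u \mapsto \lambda u$ that can be absorbed into $\omega^j$ and into $\underline{\alpha_{\ell_0}}$), the orbit count simplifies to $\frac{q^k(q^{(\ell+1)k}-1)(q-1)}{q^k-1}$, which multiplied by $(q^n-1)/(q-1)$ gives the stated cardinality.

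\textbf{Main obstacle.} The real work lies in Step~2: the polynomial shapes of $\varphi_{\underline{h}}$ and $\varphi_{\underline{h}'}$ depend on the indices $(\ell_0,\ell_1,\underline{\alpha_{\ell_0}},a,j)$ and the identity $\beta\,\varphi_{\underline{h}}(u)=\varphi_{\underline{h}'}(u')$ splits into a number of cases according to how the leading powers $\gamma^{\ell_1+1}, \gamma^{\ell_1'+1}$ and the $\gamma$-expansion of $\beta$ align; making sure that \emph{every} such case yields either a contradiction with $\beta\notin\fq$ or the sought $\fq$-linearity of $u\mapsto u'$ is the delicate part. Once Step~2 is complete, Step~3 is essentially bookkeeping using the symmetries it identifies.
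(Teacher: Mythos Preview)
The paper does not give its own proof of this theorem: it is quoted verbatim from \cite[Theorem 3.5]{zhang2024large} and used as a black box. So there is nothing in the present paper to compare your argument against.

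That said, your outline is the natural strategy and is presumably close in spirit to the original proof in \cite{zhang2024large}. One technical point in your Step~2 deserves attention: you write that after expanding $\beta$ in the $\fqk$-basis $1,\gamma,\dots,\gamma^{p-1}$ the relation $\beta\,\varphi_{\underline h}(u)=\varphi_{\underline h'}(u')$ ``keeps the overall $\gamma$-degree under $p$''. As stated this is not true: $\beta$ can have $\gamma$-degree up to $p-1$ and $\varphi_{\underline h}(u)$ up to $(p-1)/2$, so their product can reach degree $3(p-1)/2$, and reduction modulo the minimal polynomial of $\gamma$ over $\fqk$ will mix coefficients. The usual way around this is not to expand $\beta$ fully but to exploit that $\varphi_{\underline h'}(u')$ has $\gamma$-degree at most $(p-1)/2$, which constrains $\beta$ itself to have small $\gamma$-degree; this is where the real case analysis you flag as the ``main obstacle'' begins. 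Your Step~1 and Step~3 are routine once Step~2 is settled, so your assessment of where the difficulty lies is accurate, but the argument as written does not yet close the gap.
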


The subspaces $_pV_{k,\underline{h}}$ of the Construction \ref{constr:constrLarge} can be seen as images of injective $\F_q$-linear maps. This will allow us to use the operation $\odot$ to construct new multi-orbit cyclic subspace codes as described in what follows. Let $q,\omega,\gamma,\ell$ as in the hypothesis of \Cref{constr:constrLarge}. Note that $\{1,\gamma,\dots,\gamma^{p-1}\}$ is an $\fqk$-basis of $\fqn$. 
Define for every $\underline{h}=(\underline{\alpha_{\ell_0}},a,j,\ell_1)  \in I_{k,\ell_0,\ell}$, the following injective $\F_q$-linear map
\begin{equation} \label{eq:mapLarge}
\begin{array}{llll}
{_p}\Phi_{k,\underline{h}}: & \F_{q^{k}} & \longrightarrow & \F_{q^{pk}} \\
& u & \longmapsto & uP_{k,\underline{\alpha_{\ell_0}}}(\gamma)+\omega^j(u^q+au)\gamma^{\ell_1+1}
\end{array}
\end{equation}
and note that
\[
{_p}V_{k,\underline{h}}=\im({_p}\Phi_{k,\underline{h}}).
\]

As for the maps ${_2}\Phi_{k,h}$ defined in \eqref{eq:mapRoth}, the maps ${_p}\Phi_{k,\underline{h}}$ also depend on $\omega$ and $\gamma$. Thus, even in this case, if we change $k$ to a positive integer $k'$, we implicitly assume that we are simultaneously considering an element $\omega'$ to be a primitive element of $\F_{q^{k'}}$, an element $\gamma'\in \F_{q^{rk'}}$ such that $\F_{q^{rk'}}=\F_{q^{k'}}(\gamma')$, where $r$ is an odd number. 

Now, we can use \Cref{constr:constrLarge}, with different values of $k$, along with \Cref{th:nested>2}, to obtain new multi-orbit cyclic subspace codes. 
\\
Assume that $n=rk$, with $r$ a composite odd number. We will consider two cases, the first one is $r=p^e$ where $p$ is an odd prime number. The second is the case $r=p_1^{e_1}\cdots p_z^{e_z}$, where $p_1<\ldots<p_z$ are distinct primes and $e_1,\ldots,e_z \geq 1$. The idea is now to consider several multi-orbit codes in $\cG_q(n_i,k_i)$ defined as in \Cref{constr:constrLarge}, where $n_i/k_i=p_i$ is odd, and then we will combine them using \Cref{th:nested>2}. \\
\\
Let start with the case $n=p^ek$, with $p$ and odd prime. 

\begin{construction} \label{constr:usinglargenestedpequal}
Let $n=p^ek$, with $p$ a prime, $p\neq 2$, $k\geq 2$. Let $\ell:=\frac{p-3}{2}$ and for any $i=1,\dots, e$, let $\omega_{i}$ be a primitive element of $\F_{q^{p^{i-1}k}}$ and $\gamma_i\in\F_{q^{p^ik}}$ such that $\F_{q^{p^{i-1}k}}(\gamma_i)=\F_{q^{p^ik}}$. 
For any $i=1,\dots, e$, define 
\[{_p}V_{p^{i-1}k,\underline{h_i}}
=\{uP_{k,\underline{\alpha_{\ell_{0,i}}}}(\gamma_i)+\omega_{i}^{j_i}(v^q+a_iv)\gamma_i^{\ell_{1,i}+1}\colon v \in \F_{q^{p^{i-1}k}}\}\in\cG_q(p^{i}k,p^{i-1}k),
\]
for any $\underline{h}_i=(\underline{\alpha_{\ell_0,i}},a_i,j_i,\ell_{1}) \in I_{p^{i-1}k,\ell_{0},\ell}$,  as in \eqref{eq:subspacesplarge}.
Let
\[
{_p}\cC_{p^{i-1}k,\ell_{0}}=\bigcup_{\underline{h}_i\in I_{p^{i-1}k,\ell_{0},\ell}}\mathrm{Orb}(_{p}V_{p^{i-1}k,\underline{h_i}})
\]
and consider
\[
{_p}\cC_{p^{i-1}k}=\bigcup_{\ell_{0}=0}^{\ell_i}{_p}\cC_{p^{i-1}k,\ell_{0}}
\]
for any $i=1,\dots,e$.\\
Define
\begin{equation} \label{eq:codesnestedlargepe}
{_{p^{e}}}\cC_k:=\bigodot_{i=1}^{e}{_p}\cC_{p^{i-1}k}\subseteq \cG_q(n,k).
\end{equation} 
\end{construction}

We prove that the code ${_{p^e}}\cC_k\subseteq\cG_q(p^ek,k)$ in Construction \ref{constr:usinglargenestedpequal} is a cyclic subspace code of minimum distance $2k-2$.

\begin{theorem}
\label{thm:theoremlargewithn=pek}
Let $n=p^ek$ for some positive integer $k \geq 2$,  with $p$ an odd prime, and let $\ell:=\frac{p-3}{2}$. The code ${_{p^{e}}}\cC_k$ as in   \Cref{constr:usinglargenestedpequal}, is a multi-orbit cyclic subspace code in $\cG_q(n,k)$ of cardinality \[|{_{p^{e}}}\cC_k|=\prod\limits_{i=1}^e \left( \frac{q^{p^{i-1}k}(q^{(\ell+1)p^{i-1}k}-1)(q^{p^ik}-1)}{q^{p^{i-1}k}-1} \right)
\] and minimum distance $2k-2$.
\end{theorem}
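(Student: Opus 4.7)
The plan is to derive this theorem as a direct instance of \Cref{th:nested>2} applied to the tower
\[
\fq < \F_{q^k} < \F_{q^{pk}} < \F_{q^{p^2 k}} < \cdots < \F_{q^{p^e k}},
\]
with $r_1 = r_2 = \cdots = r_{e-1} = p$ and $m = k$. The key point is that each building block ${_p}\cC_{p^{i-1}k}$ is exactly the output of \Cref{constr:constrLarge} applied with base dimension $p^{i-1}k$ instead of $k$, so \Cref{th:Large} gives both its minimum distance and its size in one shot.

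First I would observe that for every $i \in \{1,\ldots,e\}$, the pair $(p^{i-1}k,\, p^i k)$ plays, for the block ${_p}\cC_{p^{i-1}k}$, the role that $(k,n)$ plays in \Cref{constr:constrLarge}: here $\omega_i$ is a primitive element of $\F_{q^{p^{i-1}k}}$ and $\gamma_i$ generates $\F_{q^{p^i k}}$ over $\F_{q^{p^{i-1}k}}$, and the representatives ${_p}V_{p^{i-1}k,\underline{h_i}}$ are defined by the same formula as in \eqref{eq:subspacesplarge}. Consequently, by \Cref{th:Large},
\[
d({_p}\cC_{p^{i-1}k}) = 2 p^{i-1}k - 2 \quad \text{and} \quad |{_p}\cC_{p^{i-1}k}| = \frac{q^{p^{i-1}k}(q^{(\ell+1)p^{i-1}k}-1)(q^{p^i k}-1)}{q^{p^{i-1}k}-1},
\]
with $\ell = (p-3)/2$ (note that $\ell$ depends only on $p$, not on the block index $i$, which is why the same value $\ell$ appears in every factor).

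Next I would check that the operation $\odot$ is well-defined in this iterated setting. For each $i$ and each admissible index $\underline{h_i}$, the map ${_p}\Phi_{p^{i-1}k,\underline{h_i}} : \F_{q^{p^{i-1}k}} \to \F_{q^{p^i k}}$ defined via formula \eqref{eq:mapLarge} is $\fq$-linear and injective (since $P_{p^{i-1}k,\underline{\alpha_{\ell_0}}}(\gamma_i) \neq 0$ and $\{1, \gamma_i, \ldots, \gamma_i^{p-1}\}$ is an $\F_{q^{p^{i-1}k}}$-basis of $\F_{q^{p^i k}}$), and ${_p}V_{p^{i-1}k,\underline{h_i}} = \im({_p}\Phi_{p^{i-1}k,\underline{h_i}})$. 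Moreover, by \Cref{rmk:ifd=2k-2thenOrbisfull}, each orbit $\Orb({_p}V_{p^{i-1}k,\underline{h_i}})$ is full-length, so ${_p}\cC_{p^{i-1}k}$ is a full-length multi-orbit cyclic subspace code. Hence the code ${_{p^e}}\cC_k = \bigodot_{i=1}^{e}{_p}\cC_{p^{i-1}k}$ is defined as in \eqref{eq:multiorbitgeneralized}.

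Finally, I would invoke \Cref{th:nested>2} directly with $\cC_1 = {_p}\cC_k$ of distance $2k-2$ and $\cC_i = {_p}\cC_{p^{i-1}k}$ of distance $2(p^{i-2}\cdot p \cdot k) - 2 = 2 r_{i-2}\cdots r_1 m - 2$ for $i \ge 2$ (with $r_0 := 1$, $m = k$, and $r_j = p$ for $j \ge 1$). All hypotheses of \Cref{th:nested>2} are met, so the conclusion is immediate: ${_{p^e}}\cC_k \subseteq \cG_q(p^e k, k)$ is a multi-orbit cyclic subspace code with minimum distance $2k-2$ and size
\[
|{_{p^e}}\cC_k| = \prod_{i=1}^{e}|{_p}\cC_{p^{i-1}k}| = \prod_{i=1}^{e}\frac{q^{p^{i-1}k}(q^{(\ell+1)p^{i-1}k}-1)(q^{p^i k}-1)}{q^{p^{i-1}k}-1}.
\]
No genuine obstacle is expected: the hardest point is purely notational, namely confirming that the block ${_p}\cC_{p^{i-1}k}$ really is an instance of \Cref{constr:constrLarge} with rescaled base field and that the value of $\ell$ used at every level is the same $(p-3)/2$, since this integer is determined by $n_i/k_i = p$ and does not change along the tower.
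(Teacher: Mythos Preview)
Your proof is correct and mirrors the paper's approach exactly: use \Cref{th:Large} to read off the size and minimum distance of each block ${_p}\cC_{p^{i-1}k}$, note that the defining maps ${_p}\Phi_{p^{i-1}k,\underline{h_i}}$ are injective so the iterated $\odot$ is well-defined, and conclude via \Cref{th:nested>2}. The only slip is notational: in the framework of \Cref{th:nested>2} one must take $m = pk$ (not $m = k$), since $\cC_1 = {_p}\cC_k \subseteq \cG_q(pk,k)$, but your actual distance values $2p^{i-1}k - 2$ are correct regardless.
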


\begin{proof}
We prove the assertion by using \Cref{th:nested>2}. We note that for every $i=1,\dots,e$ \[
{_p}V_{p^{i-1}k,\underline{h_i}}=\im({_p}\Phi_{p^{i-1}k,\underline{h_i}}),
\] where ${_p}\Phi_{p^{i-1}k,\underline{h_i}}$ denotes the following injective $\F_q$-linear map
\begin{equation}
\begin{array}{llll}
{_p}\Phi_{p^{i-1}k,\underline{h_i}}: & \F_{q^{p^{i-1}k}} & \longrightarrow & \F_{q^{p^ik}} \\
& u & \longmapsto & uP_{k,\underline{\alpha_{\ell_{0},i}}}(\gamma_i)+\omega_i^{h_i}(u^q+a_iu)\gamma_i^{\ell_{1}+1}, 
\end{array}
\end{equation}
defined as in \eqref{eq:mapLarge}.
Therefore, the code ${_{p^e}}\mathcal{C}_k:=\bigodot_{i=1}^{e}{_{p}}\cC_{p^{i-1}k}$ is well-defined. \\ Moreover, by Theorem \ref{th:Large}, we know that for every $i=1,\ldots,e$
\[
\left\lvert {_{p}}\cC_{p^{i-1}k}  \right\rvert = t_i=\frac{q^{p^{i-1}k}(q^{(\ell+1)p^{i-1}k}-1)(q^{p^ik}-1)}{q^{p^{i-1}k}-1}
\]
and ${_p}\C_{p^{i-1}k}$ has minimum distance $2(p^{i-1}k)-2$. Thus by Theorem \ref{th:nested>2} we get that ${_{p^e}}\cC_k$ is a cyclic subspace code of cardinality $\prod_{i=1}^e\left(\frac{q^{p^{i-1}k}(q^{(\ell+1)p^{i-1}k}-1)(q^{p^ik}-1)}{q^{p^{i-1}k}-1}\right)$ and minimum distance $2k-2$.
\end{proof}

\begin{remark} \label{rk:mapsCpe}
In the same spirit of \Cref{rk:mapsC2e}, we note that any representative of the orbits of ${_{p^e}}\mathcal{C}_k$ is the image of the injective $\F_q$-linear map 
\begin{equation}
\label{eq:mappepe}
{_p}\Phi_{p^{e-1}k,\underline{h_e}} \circ ((\alpha_{e-1}){_p}\Phi_{p^{e-2}k,\underline{h_{e-1}}}) \circ \dots \circ  ((\alpha_{1}){_p}\Phi_{k,\underline{h_1}}):\F_{q^k} \longrightarrow \F_{q^{p^ek}},  
\end{equation}
for some $\alpha_i \in \mathbb{F}_{q^{p^{i}k}}$, with $i \in {1, \ldots, e-1}$.
\end{remark}

In the following example, we show Construction \ref{constr:usinglargenestedpequal} in the case $n=27k$.

\begin{example}
\label{ex:n=27k}
Let $n=27k$, with $k\geq 2$. Then $p=3$ and $\ell=0$. For $i \in \{1,2,3\}$, let  $\omega_{i}$ be a primitive element of $\F_{q^{3^{i-1}k}}$ and $\gamma_i\in\F_{q^{3^ik}}$ such that $\F_{q^{3^{i-1}k}}(\gamma_i)=\F_{q^{3^ik}}$ . Define 
\[{_3}V_{k,\underline{h_1}}
=\{uP_{k,\underline{\alpha_{\ell_{0,1}}}}(\gamma_1)+\omega_{1}^{j_1}(v^q+a_1v)\gamma_1^{\ell_{1}+1}\colon v \in \F_{q^{k}}\}\in\cG_q(3k,k),
\]
\[
{_3}V_{3k,\underline{h_2}}
=\{uP_{k,\underline{\alpha_{\ell_{0,2}}}}(\gamma_2)+\omega_{2}^{j_2}(v^q+a_2v)\gamma_i^{\ell_{1}+1}\colon v \in \F_{q^{3k}}\}\in\cG_q(9k,3k),
\]
\[
{_3}V_{9k,\underline{h_3}}
=\{uP_{k,\underline{\alpha_{\ell_{0,3}}}}(\gamma_3)+\omega_{3}^{j_3}(v^q+a_3v)\gamma_3^{\ell_{1}+1}\colon v \in \F_{q^{9k}}\}\in\cG_q(27k,9k),
\]
where $\underline{h}_i=(\underline{\alpha_{\ell_0,i}},a_i,j_i,\ell_{1,i}) \in I_{p^{i-1}k,\ell_{0},\ell}$ for any $i=1,2,3$ as in \eqref{eq:subspacesplarge}.
Let
\[
{_3}\cC_{k,\ell_{0}}=\bigcup_{\underline{h}_1\in I_{k,\ell_{0},\ell}}\mathrm{Orb}(_{3}V_{k,\underline{h_1}})\subseteq\cG_q(3k,k)
\]
\[
{_3}\cC_{3k,\ell_{0}}=\bigcup_{\underline{h}_2\in I_{3k,\ell_{0},\ell}}\mathrm{Orb}(_{3}V_{3k,\underline{h_2}})\subseteq\cG_q(9k,3k)
\]
\[
{_3}\cC_{9k,\ell_{0}}=\bigcup_{\underline{h}_3\in I_{9k,\ell_{0},\ell}}\mathrm{Orb}(_{3}V_{9k,\underline{h_3}})\subseteq\cG_q(27k,9k).
\]
Then consider
\[
{_3}\cC_{k}=\bigcup_{\ell_{0}=0}^{\ell}{_3}\cC_{k,\ell_{0}}
\]
\[
{_3}\cC_{9k}=\bigcup_{\ell_{0}=0}^{\ell}{_3}\cC_{3k,\ell_{0}}
\]
\[
{_3}\cC_{9k}=\bigcup_{\ell_{0}=0}^{\ell}{_3}\cC_{9k,\ell_{0}}.
\]
Then by Theorem \ref{thm:theoremlargewithn=pek}, the code
\begin{equation} 
{_{27}}\cC_k:={_3}\cC_{9k}\odot{_3}\cC_{3k}\odot {_3}\cC_{k} \subseteq \cG_q(27k,k)
\end{equation}
has minimum distance $2k-2$ and cardinality $
q^{13k}(q^{3k}-1)(q^{9k}-1)(q^{27k}-1).$
   $\hfill \lozenge$
\end{example}

Note that the size of the code in Example \ref{ex:n=27k} is asymptotically equal to $q^{52k}$ and so it asymptotically reaches the Johnson type bound II. In Section \ref{sec:comparison}, we will see that this behavior of the size of the code ${_{3^e}\cC_k}\subseteq\cG_q(3^ek,k)$ holds true for any $e\geq 2$.

In what follows, we will show that \Cref{constr:usinglargenestedpequal} can be extended to the case in which $n=p_1^{e_1}\dots p_z^{e_z}k$ with $p_1,\dots,p_z$ different odd primes. By using \Cref{th:Large} and \Cref{th:nested>2}, we will obtain, also in this case, large cyclic subspace codes with minimum distance $2k-2$.

As observed in \Cref{rk:mapsCpe}, for every odd prime $p$, the representatives of the orbits of the codes $_{p^e}\mathcal{C}_k$ are defined by injective $\mathbb{F}_q$-linear maps. This will allow us to apply the operation $\odot$ by combining codes of type ${_{p^e}}\mathcal{C}_k$, with different values of the prime $p$. Also, we note that in this case we can consider different towers of field extensions and this choice will affect the size of the resulting code. We start with an example.

\begin{example}
\label{ex:usinglargenesteddifferentprimes}
Let $n=45k$. We can consider the following different towers of field extensions.

\begin{center}
\begin{tikzpicture}[node distance=1.6cm]

\node (Col1) at (0,5.3) {\large\textbf{(1)}};
\node (Col2) at (4,5.3) {\large\textbf{(2)}};
\node (Col3) at (8,5.3) {\large\textbf{(3)}};

\node (A3) at (0,4.5) {$\mathbb{F}_{q^{45k}}$};
\node (B3) at (4,4.5) {$\mathbb{F}_{q^{45k}}$};
\node (C4) at (8,4.5) {$\mathbb{F}_{q^{45k}}$};

\node (A2) at (0,3) {$\mathbb{F}_{q^{5k}}$};
\node (A1) at (0,1.5) {$\mathbb{F}_{q^k}$};
\draw (A1) -- (A2) -- (A3);

\node (B2) at (4,3) {$\mathbb{F}_{q^{9k}}$};
\node (B1) at (4,1.5) {$\mathbb{F}_{q^k}$};
\draw (B1) -- (B2) -- (B3);

\node (C3) at (8,3.5) {$\mathbb{F}_{q^{15k}}$};
\node (C2) at (8,2)   {$\mathbb{F}_{q^{3k}}$};
\node (C1) at (8,0.5) {$\mathbb{F}_{q^k}$};
\draw (C1) -- (C2) -- (C3) -- (C4);

\end{tikzpicture}
\end{center}

Consider the tower $\textbf{(1)}$. To define a code in $\cG_q(45k,k)$ by using Theorem \ref{th:nested>2}, we can apply the operation $\odot$  between the code ${_5}\mathcal{C}_{k} \subseteq \mathcal{G}_q(5k,k)$ defined as in \eqref{eq:codesrfromlarge} and the code ${_9}\mathcal{C}_{5k} \subseteq \mathcal{G}_q(45k,5k)$ defined as in \eqref{eq:codesnestedlargepe}. Indeed, as noted in \Cref{rk:mapsCpe}, any representative of the orbits of ${_{9}}\mathcal{C}_{5k}$ is the image of the injective $\mathbb{F}_{q^5}$-linear map from $\mathbb{F}_{q^{5k}} \rightarrow \mathbb{F}_{q^{45k}}$ as in \eqref{eq:mappepe}. Therefore by Theorem \ref{th:nested>2}, we can define the code
\[
\cC:={_{9}}\cC_{5k}\odot{_{5}}\cC_{k}\subseteq \cG_q(45k,k)
\] 
and, using Theorems~\ref{thm:theoremlargewithn=pek} and~\ref{th:Large}, this code has minimum distance $2k-2$ and cardinality \[
\begin{array}{rl}
\lvert {_{9}}\cC_{5k}\odot{_{5}}\cC_{k} \rvert & = \lvert {_{9}}\cC_{5k}\rvert \lvert {_{5}}\cC_{k} \rvert\\
& = \left( q^{20k}(q^{15k}-1)(q^{45k}-1) \right) \left(\frac{q^k(q^{2k}-1)(q^{5k}-1)}{q^k-1}\right)\sim q^{87k}
\end{array}
\] 
as $k\to+\infty$ (or as $q\to+\infty$).
\\
Similarly, consider the tower $\textbf{(2)}$
and the corresponding codes ${_9}\mathcal{C}_{k} \subseteq \mathcal{G}_q(9k,k)$ and ${_5}\mathcal{C}_{9k} \subseteq \mathcal{G}_q(45k,9k)$ defined as in \Cref{constr:usinglargenestedpequal}.  By Remark \ref{rk:mapsCpe} and by Theorem \ref{th:nested>2}, we can define the code 
\[
\cC:={_5}\cC_{9k} \odot {_9}\cC_{k} \subseteq \cG_q(45k,k)
\]
and, using again Theorems~\ref{thm:theoremlargewithn=pek} and~\ref{th:Large}, it has minimum distance $2k - 2$ and cardinality \[ \begin{array}{rl}
\lvert {_5}\cC_{9k} \odot {_9}\cC_{k}  \rvert & = \lvert {_5}\cC_{9k}\rvert \lvert {_9}\cC_{k} \rvert\\
& =\left(\frac{q^{9k}(q^{18k}-1)(q^{45k}-1)}{q^{9k}-1}\right) \left( q^{4k}(q^{3k}-1)(q^{9k}-1) \right) \\
& = q^{13k} (q^{18k}-1)(q^{45k}-1)(q^{3k}-1)\sim q^{79k}
\end{array} 
\]
as $k\to+\infty$ (or as $q\to+\infty$).\\
Finally, we can consider the tower $\textbf{(3)}$ and the corresponding codes ${_{3}}\cC_{k}\subseteq \cG_q(3k,k),\,{_{5}}\cC_{3k}\subseteq \cG_q(15k,3k), \,{_{3}}\cC_{15k}\subseteq\cG_q(45k,15k)$ defined as in \Cref{constr:constrLarge}. By Remark \ref{rk:mapsCpe} and \Cref{th:nested>2}, we can define the code
\[
\cC={_{3}}\cC_{15k} \odot {_{5}}\cC_{3k} \odot  {_{3}}\cC_{k} \subseteq \cG_q(45k,k)
\]
which has, by applying Theorem~\ref{th:Large}, minimum distance $2k - 2$ and cardinality \[
\begin{array}{rl}
\lvert {_{3}}\cC_{15k} \odot {_{5}}\cC_{3k} \odot  {_{3}}\cC_{k} \rvert & = \lvert {_{3}}\cC_{15k}\rvert \lvert {_{5}}\cC_{3k} \rvert \lvert {_{3}}\cC_{k} \rvert  \\
& =(q^{15k}(q^{45k}-1)) \left(\frac{q^{3k}(q^{6k}-1)(q^{15k}-1)}{q^{3k}-1}\right) \left( q^{k}(q^{3k}-1) \right) \\
& = q^{19k}(q^{45k}-1) (q^{6k}-1)(q^{15k}-1)\sim q^{85k}
\end{array}
\]
as $k\to+\infty$ (or as $q\to+\infty$).\\

We emphasize that among the three cases (towers \textbf{(1)}, \textbf{(2)}, and \textbf{(3)}), the construction from tower~\textbf{(1)} yields the code with the largest size.
$\hfill \lozenge$

\end{example}


We are ready to state our construction for $n=p_z^{e_z}\cdots p_1^{e_1}k$.

\begin{construction}
\label{constr:usinglargenesteddifferentprimes}
Let $n=p_z^{e_z}\cdots p_1^{e_1}k$, with $p_1>\ldots>p_z$ distinct odd primes. For any $j=1,\dots,z$ and for every $i_j \in \{1,\ldots,e_j\}$, 
let $\ell_{j}:=\frac{p_{j}-3}{2}$,  $\omega_{j,i_j}$ be a primitive element of $\F_{q^{p_j^{i_j-1}p_{j-1}^{e_{j-1}}\dots p_1^{e_1}k}}$ and let $\gamma_{j,i_j}\in\F_{q^{p_j^{i_j}p_{j-1}^{e_{j-1}}\dots p_1^{e_1}k}}$ be such that $\F_{q^{p_j^{i_j}p_{j-1}^{e_{j-1}}\dots p_1^{e_1}k}}=\F_{q^{p_j^{i_j-1}p_{j-1}^{e_{j-1}}\dots p_1^{e_1}k}}(\gamma_{j,i_j})$.
Then consider the following codes defined as in \eqref{eq:codesnestedlargepe}
\begin{equation} 
\label{eq:codesnestedlargedifferentprimes}
\begin{aligned}
{_{p_1^{e_1}}}\cC_{k}&:=\bigodot_{i_1=1}^{e_1}{_{p_1}}\cC_{p_1^{i_1-1}k}\subseteq \cG_q(p_1^{e_1}k,k)\\
{_{p_2^{e_2}}}\cC_{p_1^{e_1}k}&:=\bigodot_{i_2=1}^{e_2}{_{p_2}}\cC_{p_2^{i_2-1}p_1^{e_1}k}\subseteq \cG_q(p_2^{e_2}p_1^{e_1}k,p_1^{e_1}k)\\
\vdots\\
{_{p_z^{e_z}}}\cC_{p_{z-1}^{e^{z-1}}\dots p_1^{e_1}k}&:=\bigodot_{i_z=1}^{e_z}{_{p_z}}\cC_{p_z^{i_z-1}\dots p_1^{e_1}k}\subseteq \cG_q(p_z^{e_z}p_{z-1}^{e_{z-1}}\dots p_1^{e_1}k,p_{z-1}^{e_{z-1}}\dots p_1^{z_1}k).
\end{aligned}
\end{equation} Then define
\begin{equation} \label{eq:codesnestedlargedifferentcomplete}
{_{p_z^{e_z}\dots p_1^{e_1}}}\cC_k:={_{p_z^{e_z}}}\cC_{p_{z-1}^{e^{z-1}}\dots p_1^{e_1}k}\odot \dots\odot  {_{p_{2}^{e_{2}}}}\cC_{p_{1}^{e^{1}}k}\odot{_{p_1^{e_1}}}\cC_{k}\subseteq \cG_q(n,k).
\end{equation}
\end{construction}

We prove that the code ${_{p_z^{e_z}\dots p_1^{e_1}}}\cC_k\subseteq\cG_q(p_z^{e_z}\dots p_1^{e_1}k,k)$ as in \Cref{constr:usinglargenesteddifferentprimes} has minimum distance $2k-2$ and large size.\\ 
In the following, we adopt the convention that \(\prod_{i=a}^{b} f(i) = 1\) when \(b < a\).

\begin{theorem}
\label{thm:largediffprimesodd}
Let $n=p_z^{e_z}\cdots p_1^{e_1}k$, with $p_1>\ldots>p_z$ distinct odd primes. The code ${_{p_z^{e_z}\dots p_1^{e_1}}}\cC_k$ as in  \Cref{constr:usinglargenesteddifferentprimes},  is a multi-orbit cyclic subspace code in $\cG_q(n,k)$ of cardinality
\begin{equation}
\label{eq:cardinalitydifferentprimeslarge}
\left\lvert{_{p_z^{e_z}\dots p_1^{e_1}}}\cC_k\right\rvert=\prod\limits_{j=1}^{z}\prod\limits_{i_j=1}^{e_j} \left( \frac{q^{p_j^{{i_j}-1}\prod\limits_{h=1}^{j-1}p_{h}^{e_{h}}k}(q^{(\ell_j+1)p_j^{{i_j}-1}\prod\limits_{h=1}^{j-1}p_{h}^{e_{h}}k}-1)(q^{p_j^{i_j}\prod\limits_{h=1}^{j-1}p_{h}^{e_{h}}k}-1)}{q^{p_j^{i_j-1}\prod\limits_{h=1}^{j-1}p_{h}^{e_{h}}k}-1} \right)
\end{equation}
and minimum distance $2k-2$.
\end{theorem}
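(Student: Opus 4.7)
The plan is to apply Theorem \ref{th:nested>2} with $e$ replaced by $z$, taking the $i$-th ingredient code to be ${_{p_i^{e_i}}}\cC_{p_{i-1}^{e_{i-1}}\cdots p_1^{e_1}k}$ (with the convention that the $i=1$ code is ${_{p_1^{e_1}}}\cC_k$). First I would verify that each of these ingredient codes satisfies the hypotheses of Theorem \ref{th:nested>2}. By Theorem \ref{thm:theoremlargewithn=pek} applied with the base field $\F_{q^{p_{i-1}^{e_{i-1}}\cdots p_1^{e_1}k}}$ in place of $\F_{q^k}$, the code ${_{p_i^{e_i}}}\cC_{p_{i-1}^{e_{i-1}}\cdots p_1^{e_1}k}$ is a multi-orbit cyclic subspace code in $\cG_q(p_i^{e_i}\cdots p_1^{e_1}k,\,p_{i-1}^{e_{i-1}}\cdots p_1^{e_1}k)$ with minimum distance exactly $2(p_{i-1}^{e_{i-1}}\cdots p_1^{e_1}k)-2$ and cardinality
\[
\prod_{i_j=1}^{e_j} \frac{q^{p_j^{i_j-1}\prod_{h=1}^{j-1}p_h^{e_h}k}\bigl(q^{(\ell_j+1)p_j^{i_j-1}\prod_{h=1}^{j-1}p_h^{e_h}k}-1\bigr)\bigl(q^{p_j^{i_j}\prod_{h=1}^{j-1}p_h^{e_h}k}-1\bigr)}{q^{p_j^{i_j-1}\prod_{h=1}^{j-1}p_h^{e_h}k}-1}.
\]
For $i=1$ the minimum distance reduces to $2k-2$, matching the condition $d(\cC_1)=2k-2$ of Theorem \ref{th:nested>2}, while for $i\geq 2$ the distance has the required form $2\,r_{i-2}\cdots r_1 m-2$ with $m=k$ and $r_{i-1}=p_i^{e_i}$.

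Next I would check that each representative of ${_{p_i^{e_i}}}\cC_{p_{i-1}^{e_{i-1}}\cdots p_1^{e_1}k}$ arises as the image of an injective $\F_q$-linear map $\F_{q^{p_{i-1}^{e_{i-1}}\cdots p_1^{e_1}k}}\to \F_{q^{p_i^{e_i}\cdots p_1^{e_1}k}}$. This is exactly the content of Remark \ref{rk:mapsCpe}, where such a representative is produced as a composition (with suitable scalar multiplications) of maps of the form ${_{p_i}}\Phi_{p_i^{t-1}\prod_{h<i}p_h^{e_h}k,\underline{h}}$ from \eqref{eq:mapLarge}; since each factor is injective and $\F_q$-linear, so is the composition. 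Hence the operation $\odot$ between these codes is well defined and produces the iterated nested code ${_{p_z^{e_z}\cdots p_1^{e_1}}}\cC_k$ of \eqref{eq:codesnestedlargedifferentcomplete}.

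With hypotheses verified, Theorem \ref{th:nested>2} applies directly: the resulting code ${_{p_z^{e_z}\cdots p_1^{e_1}}}\cC_k$ is a multi-orbit cyclic subspace code in $\cG_q(n,k)$ with minimum distance $2k-2$ and cardinality equal to the product $\prod_{j=1}^{z}\bigl|{_{p_j^{e_j}}}\cC_{p_{j-1}^{e_{j-1}}\cdots p_1^{e_1}k}\bigr|$. Substituting the explicit cardinality formulas from Theorem \ref{thm:theoremlargewithn=pek} (listed above) and collecting the products yields exactly the expression in \eqref{eq:cardinalitydifferentprimeslarge}.

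The main obstacle is mostly bookkeeping rather than a substantive mathematical difficulty: one has to make sure that the tower of field extensions matches at each step (i.e., that the "base field" of the $i$-th construction coincides with the top field of the $(i-1)$-th construction), and that the asymmetric definition of $\odot$ applied iteratively as in \eqref{eq:codesnestedlargedifferentcomplete} is consistent with the recursive setup in Theorem \ref{th:nested>2}. This is handled by setting $m=k$ and $r_{i-1}=p_i^{e_i}$ in the statement of Theorem \ref{th:nested>2}, which gives $r_{i-1}\cdots r_1 m = p_i^{e_i}\cdots p_1^{e_1}k$, matching the dimensions in \eqref{eq:codesnestedlargedifferentprimes} step by step. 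Once this correspondence is set up, the proof is a direct application of Theorem \ref{th:nested>2}.
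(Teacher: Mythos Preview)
Your proposal is correct and follows essentially the same approach as the paper: invoke Theorem~\ref{thm:theoremlargewithn=pek} to obtain the minimum distance and cardinality of each ingredient code ${_{p_j^{e_j}}}\cC_{p_{j-1}^{e_{j-1}}\cdots p_1^{e_1}k}$, and then apply Theorem~\ref{th:nested>2} to the resulting chain. Your write-up is in fact slightly more careful in explicitly verifying, via Remark~\ref{rk:mapsCpe}, that the representatives are images of injective $\F_q$-linear maps so that $\odot$ is well defined---a step the paper leaves implicit.
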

\begin{proof}
By Theorem \ref{thm:theoremlargewithn=pek}, we know that \[
\left\lvert{_{p_1^{e_1}}}\cC_{k}\right\rvert =\prod\limits_{i_1=1}^{e_1} \left( \frac{q^{p_1^{i_1-1}k}(q^{(\ell_1+1)p^{i_1-1}k}-1)(q^{p_1^{i_1}k}-1)}{q^{p^{i_1-1}k}-1} \right)
\]
and for every $j=2,\dots,z$ 
\[
\left\lvert{_{p_j^{e_j}}}\cC_{p_{j-1}^{e_{j-1}}\dots p_1^{e_1}k}\right\rvert =\prod\limits_{i_j=1}^{e_j} \left( \frac{q^{p_j^{{i_j}-1}\prod\limits_{h=1}^{j-1}p_{h}^{e_{h}}k}(q^{(\ell_j+1)p_j^{{i_j}-1}\prod\limits_{h=1}^{j-1}p_{h}^{e_{h}}k}-1)(q^{p_j^{i_j}\prod\limits_{h=1}^{j-1}p_{h}^{e_{h}}k}-1)}{q^{p_j^{i_j-1}\prod\limits_{h=1}^{j-1}p_{h}^{e_{h}}k}-1} \right).
\]
Also, ${_{p_1^{e_1}}}\cC_{k}$ has minimum distance $2k-2$ and ${_{p_j^{e_j}}}\cC_{p_{j-1}^{e^{j-1}}\dots p_1^{e_1}k}$ has minimum distance $2(p_{j-1}^{e^{j-1}}\dots p_1^{e_1}k)-2$ for $j=2,\dots,z$. Thus by Theorem \ref{th:nested>2} we get that ${_{p_z^{e_z}\dots p_1^{e_1}}}\cC_k$ is a cyclic subspace code of minimum distance $2k-2$ and cardinality $\left\lvert{_{p_z^{e_z}\dots p_1^{e_1}}}\cC_k\right\rvert=\left\lvert{_{p_z^{e_z}}}\cC_{p_{z-1}^{e^{z-1}}\dots p_1^{e_1}k}\right\rvert\dots \left\lvert{_{p_2^{e_2}}}\cC_{ p_1^{e_1}k}\right\rvert\left\lvert{_{p_1^{e_1}}}\cC_{k}\right\rvert,$
thus  \eqref{eq:cardinalitydifferentprimeslarge} follows.
\end{proof}

\begin{remark}
    It is worth pointing out that, using the same strategy as in \Cref{constr:usinglargenesteddifferentprimes}, and as noted in \Cref{ex:usinglargenesteddifferentprimes}, we can also combine the codes $_{p_i^{e_i}}\mathcal{C}_k$ by employing a different ordering of the primes $p_i$. This leads to new codes, which may have different sizes or could potentially be inequivalent. A detailed study of this remains an interesting direction for future work. However, as we will prove in \Cref{sec:comparison}, with the particular ordering we have chosen, the sizes of our codes ${_r}\cC_k\subseteq\cG_1(rk,k)$ exhibit faster asymptotic growth than those of all previously known cyclic subspace codes with minimum distance $2k-2$.
\end{remark}


\subsection{Case $n=rk$, with $r$ even and not a power of 2}
\label{sec:fromlarge}
Finally, we consider the case $n=rk$, where $r$ is even and $r$ is not a power of 2.\\
Let $n=2^ep_z^{e_z}\cdots p_1^{e_1}k$ where $p_1>\dots >p_z$ are distinct odd primes. In this case, we can construct codes in $\cG_q(n,k)$ by combining ${_{p_z^{e_z}\cdots p_1^{e_1}}}\cC_k \subseteq \mathcal{G}_q(p_z^{e_z}\dots p_1^{e_1}k,k)$ as in \eqref{eq:codesnestedlargedifferentcomplete}  with $_{2^e}\mathcal{C}_{p_z^{e_z}\dots p_1^{e_1}k}\subseteq \mathcal{G}_q(2^ep_z^{e_z}\dots p_1^{e_1}k, p_z^{e_z}\dots p_1^{e_1}k)$, as in \eqref{eq:combiningeRoth} when $e \geq 2$ and as in \eqref{th:multiorbitRothRavivTamo}, when $e=1$. Indeed, we know that the representatives of the orbits of the codes as in \eqref{eq:combiningeRoth} are defined by injective $\mathbb{F}_q$-linear maps, see \Cref{rk:mapsC2e}. Thus, we can apply the $\odot$ operation between codes of $_{2^e}\mathcal{C}_{p_z^{e_z}\dots p_1^{e_1}k}$ and ${_{p_z^{e_z}\dots p_1^{e_1}}}\cC_k$, as follows.

\begin{construction}
\label{constr:roth+large}
Let $n=2^ep_z^{e_z}\cdots p_1^{e_1}k$ where $p_1>\dots >p_z$ are distinct odd primes. Consider ${_{p_z^{e_z}\dots p_1^{e_1}}}\cC_k$ defined as in \eqref{eq:codesnestedlargedifferentcomplete} and $_{2^e}\cC_{p_z^{e_z}\dots p_1^{e_1}k}$ defined as in \eqref{eq:combiningeRoth}. Then define
\[
_{2^ep_z^{e_z}\dots p_1^{e_1}}\cC_{k}:=_{2^e}\cC_{p_z^{e_z}\dots p_1^{e_1}k}\odot{_{p_z^{e_z}\dots p_1^{e_1}}}\cC_k\subseteq\mathcal{G}_q(n,k).
\]
\end{construction}

We prove that the code $_{2^ep_z^{e_z}\dots p_1^{e_1}}\cC_{k}\subseteq\cG_q(2^ep_z^{e_z}\dots p_1^{e_1}k,k)$ as in \Cref{constr:roth+large} has minimum distance $2k-2$.

\begin{theorem}
\label{thm:roth+large}
Let $n=2^ep_z^{e_z}\cdots p_1^{e_1}k$, where $p_1>\ldots>p_z$ distinct odd primes. In the hypothesis of \Cref{constr:roth+large}, the code ${_{2^ep_z^{e_z}\dots p_1^{e_1}}}\cC_k$ is a multi-orbit cyclic subspace code in $\cG_q(n,k)$ of cardinality
\begin{equation}
\label{eq:cardinalitydifferentprimeslargeeroth}
\prod\limits_{j=1}^{z}\prod\limits_{i_j=1}^{e_j} \left( \frac{q^{p_j^{{i_j}-1}\prod\limits_{h=1}^{j-1}p_{h}^{e_{h}}k}(q^{(\ell_j+1)p_j^{{i_j}-1}\prod\limits_{h=1}^{j-1}p_{h}^{e_{h}}k}-1)(q^{p_j^{i_j}\prod\limits_{h=1}^{j-1}p_{h}^{e_{h}}k}-1)}{q^{p_j^{i_j-1}\prod\limits_{h=1}^{j-1}p_{h}^{e_{h}}k}-1} \right)\prod\limits_{i=1}^e \left( \left\lfloor\frac{q-1}{2}\right\rfloor\frac{q^{2^ip_{z}^{e^{z}}\dots p_1^{e_1}k}-1}{q-1} \right)
\end{equation}
and minimum distance $2k-2$.
\end{theorem}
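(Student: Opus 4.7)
My plan is to derive Theorem~\ref{thm:roth+large} as a direct application of \Cref{cor:C1C2optimalimpliesC2odotC1optimal} to the two ingredient codes, once I verify that the $\odot$ operation is legitimately defined on them. First I would set $\cC_1 := {_{p_z^{e_z}\dots p_1^{e_1}}}\cC_k \subseteq \cG_q(m,k)$ and $\cC_2 := {_{2^e}}\cC_{m} \subseteq \cG_q(2^e m,m)$, where $m := p_z^{e_z}\cdots p_1^{e_1}k$. The well-definedness of ${_{2^e}}\cC_{m}\odot \cC_1$ requires that every orbit representative of $\cC_2$ arise as the image of an injective $\F_q$-linear map $\F_{q^m}\to\F_{q^{2^em}}$; this is precisely the content of \Cref{rk:mapsC2e}, applied with $k$ replaced by $m$, which exhibits each such representative as the composition \eqref{eq:mapperoth2e} of injective maps.

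Next I would gather the distance and cardinality data. By \Cref{thm:largediffprimesodd}, $\cC_1$ has minimum distance $2k-2$ and cardinality equal to the first (triple) product appearing in \eqref{eq:cardinalitydifferentprimeslargeeroth}. By \Cref{thm:SpecialRRT}, applied with $k$ replaced by $m$, $\cC_2$ has minimum distance $2m-2$ and cardinality equal to the second product appearing in \eqref{eq:cardinalitydifferentprimeslargeeroth}. In particular, if $e=1$ one should invoke \Cref{th:multiorbitRothRavivTamo} directly instead of \Cref{thm:SpecialRRT}; the resulting formula is the $i=1$ factor of the same product and the argument is unchanged.

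With these data in hand, I would apply \Cref{cor:C1C2optimalimpliesC2odotC1optimal} with $\ell=\ell'=1$: the hypotheses $d(\cC_1)=2k-2$ and $d(\cC_2)=2m-2$ are met, and $1<k\leq m$ holds trivially. The conclusion is that ${_{2^ep_z^{e_z}\dots p_1^{e_1}}}\cC_k = \cC_2\odot\cC_1$ is a multi-orbit cyclic subspace code in $\cG_q(n,k)$ with minimum distance $2k-2$ and cardinality $|\cC_2|\cdot|\cC_1|$, which is exactly \eqref{eq:cardinalitydifferentprimeslargeeroth}.

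The whole argument is therefore a bookkeeping composition of earlier results; there is no genuine obstacle to overcome. The only mild subtlety is confirming that the hypothesis $d(\cC_2)>2(m-k)$ of \Cref{cor:C1C2optimalimpliesC2odotC1optimal} follows from $d(\cC_2)=2m-2$, which holds because $k>1$ implies $2m-2>2(m-k)$. Once this is noted, the theorem follows immediately.
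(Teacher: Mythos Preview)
Your proposal is correct and follows essentially the same route as the paper: invoke \Cref{thm:largediffprimesodd} and \Cref{thm:SpecialRRT} (or \Cref{th:multiorbitRothRavivTamo} when $e=1$) for the two ingredient codes, then combine them via the $\odot$ machinery. The paper cites \Cref{th:nested>2} rather than \Cref{cor:C1C2optimalimpliesC2odotC1optimal}, but since only two codes are being composed this is the same thing; your extra remarks on well-definedness via \Cref{rk:mapsC2e} and on the $e=1$ case are accurate refinements, while the inequality $d(\cC_2)>2(m-k)$ you mention is not actually a hypothesis of \Cref{cor:C1C2optimalimpliesC2odotC1optimal} (it is used only for the full-length addendum), so that final ``subtlety'' can simply be dropped.
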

\begin{proof}
By Theorem \ref{thm:largediffprimesodd}, we know that
\[
\left\lvert{_{p_z^{e_z}\dots p_1^{e_1}}}\cC_k\right\rvert=\prod\limits_{j=1}^{z}\prod\limits_{i_j=1}^{e_j} \left( \frac{q^{p_j^{{i_j}-1}\prod\limits_{h=1}^{j-1}p_{h}^{e_{h}}k}(q^{(\ell_j+1)p_j^{{i_j}-1}\prod\limits_{h=1}^{j-1}p_{h}^{e_{h}}k}-1)(q^{p_j^{i_j}\prod\limits_{h=1}^{j-1}p_{h}^{e_{h}}k}-1)}{q^{p_j^{i_j-1}\prod\limits_{h=1}^{j-1}p_{h}^{e_{h}}k}-1} \right)
\]
and by Theorem \ref{thm:SpecialRRT}
\[
\left\lvert{_{2^e}}\cC_{p_{z}^{e^{z}}\dots p_1^{e_1}k}\right\rvert =\prod\limits_{i=1}^e \left( \left\lfloor\frac{q-1}{2}\right\rfloor\frac{q^{2^ip_{z}^{e^{z}}\dots p_1^{e_1}k}-1}{q-1} \right).
\]
Also, ${_{p_z^{e_z}\dots p_1^{e_1}}}\cC_{k}$ has minimum distance $2k-2$ and ${_{2^e}}\cC_{p_{z}^{e^{z}}\dots p_1^{e_1}k}$ has minimum distance $2(p_{z}^{e^{z}}\dots p_1^{e_1})k-2$. Thus by Theorem \ref{th:nested>2} we get that ${_{2^ep_z^{e_z}\dots p_1^{e_1}}}\cC_k$ is a cyclic subspace code of minimum distance $2k-2$ and cardinality $\left\lvert{_{2^ep_z^{e_z}\dots p_1^{e_1}}}\cC_k\right\rvert\left\lvert{_{p_z^{e_z}\dots p_1^{e_1}}}\cC_k\right\rvert$, 
thus \eqref{eq:cardinalitydifferentprimeslargeeroth} follows.
\end{proof}

The following example shows Construction \ref{constr:roth+large} in the case $n=36k$.

\begin{example}
Let $n=36k=3^22^2k$. Consider ${_{9}}\cC_k$ defined as in \eqref{eq:codesnestedlargedifferentcomplete} and $_{4}\cC_{9k}$ defined as in \eqref{eq:combiningeRoth}. By Theorem \ref{thm:roth+large}, the code
\[
_{36}\cC_k:={_{4}}\cC_{9k}\odot{_{9}}\cC_{k}\subseteq\cG_{q}(36k,k)
\]
has cardinality $2k-2$ and size \[
q^{4k}(q^{3k}-1)(q^{9k}-1)\left\lfloor\frac{q-1}{2}\right\rfloor^2\frac{q^{18k}-1}{q-1}\frac{q^{36k}-1}{q-1}.
\]   $\hfill \lozenge$
\end{example}

\section{Comparison with known constructions of constant-dimension subspace codes and the Johnson type bound II}
\label{sec:comparison}

In this section, we compare the size of the codes constructed in \Cref{sec:specialconstr} with those of the constructions of subspace codes resumed in \Cref{table:constructionscyclic}, as well as with the Johnson type bound II for subspace codes. We summarize in Table \ref{table: Our} the constructions presented in Section \ref{sec:specialconstr}. For $n = rk$, depending on the prime factorization of $r$, we use the notation $_r\C_k$ to refer to a code in $\mathcal{G}_q(rk,k)$ obtained from Constructions \ref{constr:nestedRRT}, \ref{constr:usinglargenestedpequal}, \ref{constr:usinglargenesteddifferentprimes} and \ref{constr:roth+large}.

\begin{table}[htp]
	\centering
	\medskip
\small
\begin{tabular}{|c|c|c|c|} 
	\hline
Constructions & Parameters & Size & Reference\\ 
$_r\C_k$ & & & \\
	\hline
    & & & \\
$_{2^e}\C_k$ & $n=2^ek,$ & $\prod_{i=1}^e\left( \left\lfloor\frac{q-1}{2}\right\rfloor\frac{q^{2^ik}-1}{q-1} \right)$  & Theorem \ref{thm:SpecialRRT} \\
 & $q>2$& & \\
\hline 
&  & & \\
 $_{p^e}\C_k$ & $n=p^ek$ & $\prod\limits_{i=1}^e \left( \frac{q^{p^{i-1}k}(q^{\frac{(p-1)}2p^{i-1}k}-1)(q^{p^ik}-1)}{q^{p^{i-1}k}-1} \right)$ & Theorem \ref{thm:theoremlargewithn=pek} \\
& $p$ odd prime & & \\
\hline
& & & \\
$_{p_z^{e_z}\dots p_1^{e_1}}\C_k$& $n=p_z^{e_z}\cdots p_1^{e_1}k$ & $\prod\limits_{j=1}^{z}\prod\limits_{i_j=1}^{e_j} \left( \frac{q^{p_j^{{i_j}-1}\prod\limits_{h=1}^{j-1}p_{h}^{e_{h}}k}(q^{(\ell_j+1)p_j^{{i_j}-1}\prod\limits_{h=1}^{j-1}p_{h}^{e_{h}}k}-1)(q^{p_j^{i_j}\prod\limits_{h=1}^{j-1}p_{h}^{e_{h}}k}-1)}{q^{p_j^{i_j-1}\prod\limits_{h=1}^{j-1}p_{h}^{e_{h}}k}-1} \right)$ & Theorem \ref{thm:largediffprimesodd} \\
& $p_1>\cdots>p_z$  &  & \\
& odd primes & & \\
\hline 
& & & \\
$_{2^ep_z^{e_z}\dots p_1^{e_1}}\C_k$ & $n=2^ep_z^{e_z}\cdots p_1^{e_1}k$ & $\prod\limits_{j=1}^{z}\prod\limits_{i_j=1}^{e_j} \left( \frac{q^{p_j^{{i_j}-1}\prod\limits_{h=1}^{j-1}p_{h}^{e_{h}}k}(q^{(\ell_j+1)p_j^{{i_j}-1}\prod\limits_{h=1}^{j-1}p_{h}^{e_{h}}k}-1)(q^{p_j^{i_j}\prod\limits_{h=1}^{j-1}p_{h}^{e_{h}}k}-1)}{q^{p_j^{i_j-1}\prod\limits_{h=1}^{j-1}p_{h}^{e_{h}}k}-1} \right)$ & \Cref{thm:roth+large} \\
& $p_1>\cdots>p_z$ & & \\
&  odd primes & & \\
& $q>2$  & $\cdot\prod\limits_{i=1}^e \left( \left\lfloor\frac{q-1}{2}\right\rfloor\frac{q^{2^ip_{z}^{e^{z}}\dots p_1^{e_1}k}-1}{q-1} \right)$ & \\
\hline 
\end{tabular}
\caption{Constructions of  cyclic subspace codes in $\cG_q(n,k)$ with minimum distance $2k-2$ presented in this work.}\label{table: Our}
\end{table}

We start our analysis by describing the asymptotic growth of the size of the codes $_r\C_k \subseteq \mathcal{G}_q(rk,k)$, as $k$ or $q$ tends to infinity. 
\begin{itemize}
    \item \textbf{Case $n=2^ek$:} we refer to the codes $_{2^e}\C_k\subseteq \cG_q(2^ek,k)$, defined as in Construction \ref{constr:nestedRRT}. As a consequence of Theorem \ref{thm:SpecialRRT}, we have
    \begin{equation} \label{eq:asymptoticRTT+RTT}
   \lvert {_{2^e}}\mathcal{C}_k \rvert \sim \frac{1}{2^e} q^{2k+4k+\cdots+2^ek}=\frac{1}{2^e}q^{2k(2^e-1)}=\frac{1}{2^e}q^{2(n-k)}, \ \ \ \mbox{as } k \rightarrow + \infty \ \ \ (\mbox{or as } q \rightarrow + \infty).
   \end{equation}
   \item \textbf{Case $n=p^ek$, with $p$ odd prime:} we refer to the codes $_{p^e}\C_k\subseteq\cG_q(p^ek,k)$, defined as in Construction \ref{constr:usinglargenestedpequal}. By using Theorem \ref{thm:theoremlargewithn=pek}, we have
   \begin{equation} \label{eq:asymptoticsamelarge} \lvert _{p^e}\C_k \rvert \sim q^{\frac{1}{2}\sum\limits_{i=1}^e(3p^i-p^{i-1})k}=q^{\frac{(3p-1)(p^e-1)}{2(p-1)}k}, \ \ \ \mbox{ as }k \rightarrow + \infty \ \ \ (\mbox{or as }q \rightarrow + \infty)
    \end{equation}
    \item \textbf{Case $n=p_z^{e_z}\cdots p_1^{e_1}k$, with $p_1>\cdots>p_z$ odd primes:} in this case we refer to the codes $_{p_z^{e_z}\dots p_1^{e_1}}\C_k\subseteq\cG_q(p_z^{e_z}\cdots p_1^{e_1}k,k)$, defined as in Construction \ref{constr:usinglargenesteddifferentprimes}. As a consequence of \eqref{eq:asymptoticsamelarge} and Theorem \ref{thm:largediffprimesodd}, we get
    \begin{equation} \label{eq:asymptoticdifferentlarge}
    \begin{array}{rl}
      \lvert _{p_z^{e_z}\dots p_1^{e_1}}\C_k \rvert    &  =\left\lvert{_{p_z^{e_z}}}\cC_{p_{z-1}^{e^{z-1}}\dots p_1^{e_1}k}\right\rvert\dots \left\lvert{_{p_2^{e_2}}}\cC_{ p_1^{e_1}k}\right\rvert\left\lvert{_{p_1^{e_1}}}\cC_{k}\right\rvert \\
         & \sim q^{\sum\limits_{i=1}^z\left(\frac{(3p_i-1)(p_i^{e_i}-1)}{2(p_i-1)}\prod\limits_{j=1}^{i-1}p_j^{e_j}k\right)}, \ \ \ \mbox{ as }k \rightarrow + \infty \ \ \ (\mbox{or as }q \rightarrow + \infty)
    \end{array}
\end{equation}
    \item \textbf{Case $n=2^ep_z^{e_z}\cdots p_1^{e_1}k$, with $p_1>\cdots>p_z$ odd primes:} we are referring to the codes $_{2^ep_z^{e_z}\dots p_1^{e_1}}\C_k\subseteq\cG_q(2^ep_z^{e_z}\cdots p_1^{e_1}k,k)$, defined as in Construction \ref{constr:roth+large}. In this case, by using both \eqref{eq:asymptoticRTT+RTT} and \eqref{eq:asymptoticdifferentlarge}, together with Theorem \ref{thm:roth+large}, we obtain
    \begin{equation} \label{eq:asymptoticRTT+large}
    \begin{array}{rl}
    \lvert _{2^ep_z^{e_z}\dots p_1^{e_1}}\C_k \rvert & = \left\lvert{_{2^e}}\cC_{p_{z}^{e^{z}}\dots p_1^{e_1}k}\right\rvert \left\lvert{_{p_z^{e_z}\dots p_1^{e_1}}}\cC_{k}\right\rvert
    \\
    & \sim \frac{1}{2^e}q^{2\prod\limits_{i=1}^e p_i^{e_i}k(2^e-1)}q^{\sum\limits_{i=1}^z\left(\frac{(3p_i-1)(p_i^{e_i}-1)}{2(p_i-1)}\prod\limits_{j=1}^{i-1}p_j^{e_j}k\right)}, \ \ \ \mbox{ as }k \rightarrow + \infty \ \ \ (\mbox{or as }q \rightarrow + \infty)
    \end{array}
\end{equation}
\end{itemize}

\subsection{Comparison with known constructions}

Our study focuses on large multi-orbit cyclic subspace codes in $\mathcal{G}_q(n,k)$, where $n=rk$ and $r$ is a composite integer. In the next, we assume that $r \geq 6$ and $k\geq 2$. \\

We start first by comparing the constructions provided in \Cref{table:constructionscyclic}. By looking at Table \ref{table:constructionscyclic}, as $q$ or $k$ increases, $S_3(n,k,q), S_4(n,k,q)$ and $S_5(n,k,q)$ have the same asymptotic behavior.

Now, we compare the size of the new constructions provided in Section \ref{sec:specialconstr} with the sizes of the Constructions in Table \ref{table:constructionscyclic}. We will consider $n=rk$, where $r=p_z^{e_z}\cdots p_1^{e_1}$ for some primes $p_1>\cdots>p_z$. We assume $r\geq 6$ and $k \geq 2$. In the case where $r$ is even, our constructions also require $q\geq 3$.

In the following theorem, we prove that for any composite integer $r\geq 6$, and $n=rk$, the size of the codes $_r\C_k$ exhibits faster asymptotic growth compared to the codes in Construction 3) of Table \ref{table:constructionscyclic} and, therefore, when $r$ is odd, also compared to the codes in Constructions 4) and 5) of the same table. Consequently, when $r \geq 6$ the size of the codes $_r\C_k$ exhibits faster asymptotic growth than that of all known cyclic subspace codes in $\cG_q(n,k)$ with minimum distance $2k-2$.

\begin{theorem} \label{th:confrontoconlealtre}
Let $r\geq 6$ be a composite integer. For any prime power $q$ or, if $r$ is even, for any prime power $q\geq 3$, 
  \[
  S_3(rk,k,q)=o_k(\lvert _r\C_k \rvert) \ \ \ \mbox{ and }\ \ \ S_3(rk,k,q)=o_q(\lvert _r\C_k \rvert).
  \]
  If $r$ is odd, then
  \[
    S_4(rk,k,q),S_5(rk,k,q)=o_k(\lvert _r\cC_k \rvert) \ \ \ \mbox{ and }\ \ \ S_3(rk,k,q), S_5(rk,k,q)=o_q(\lvert _r\cC_k \rvert).
  \] In particular, for $k$ (or $q$) sufficiently large, the size of the code $_r\mathcal{C}_k \subseteq \mathcal{G}_q(rk,k)$, defined as in Table \ref{table: Our}, is larger than the sizes of all the Constructions in Table \ref{table:constructionscyclic}. 
\end{theorem}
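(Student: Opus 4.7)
The plan is to reduce the comparison to a purely numerical one between the exponent of $q$ appearing in the asymptotic expansion of $|_r\C_k|$ (read off from \eqref{eq:asymptoticRTT+RTT}--\eqref{eq:asymptoticRTT+large}) and the exponent $B(r):=\lfloor (r-1)/2\rfloor+r$ appearing in $S_3(rk,k,q)$ (and, when $r$ is odd, also $S_4,S_5$). Write $E(r)$ for the exponent of $q^k$ in $|_r\C_k|$. If we can show $E(r)>B(r)$ whenever $r\ge 6$ is composite, then the ratio of a polynomial leading-term analysis gives $S_3/|_r\C_k|\to 0$ both as $k\to\infty$ with $q$ fixed and as $q\to\infty$ with $k$ fixed, which is exactly the conclusion. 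Since $S_4\sim S_5\sim S_3$ for $r$ odd, the $S_4,S_5$ case is automatic.

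The verification $E(r)>B(r)$ will be carried out by splitting on the prime factorisation $r=2^{e}p_z^{e_z}\cdots p_1^{e_1}$. First, for $r=2^e$ with $e\ge 3$ (the only power-of-two case compatible with $r\ge 6$ composite), one gets $E(r)-B(r)=2r-2-(3r/2-1)=r/2-1\ge 3$. Second, for a single odd prime power $r=p^e$ (necessarily $e\ge 2$), I would use the elementary identity
\[
\frac{(3p-1)(p^e-1)}{2(p-1)}=\frac{3p^e-1}{2}+\frac{p^e-p}{p-1},
\]
which gives $E(r)-B(r)=\frac{p(p^{e-1}-1)}{p-1}\ge p\ge 3$.

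Third, for the general odd composite case $r=p_z^{e_z}\cdots p_1^{e_1}$, I would apply the identity from the previous step to every factor and then telescope. Setting $r_i:=p_i^{e_i}$ and $R_i:=\prod_{j<i}r_j$ (so $r_iR_i=R_{i+1}$ and $R_{z+1}=r$), a direct calculation yields
\[
\sum_{i=1}^{z}\frac{3r_i-1}{2}R_i=\frac{3r-1}{2}+\sum_{i=2}^{z}R_i,
\]
so that
\[
E(r)-B(r)=\sum_{i=2}^{z}R_i+\sum_{i=1}^{z}\frac{p_i^{e_i}-p_i}{p_i-1}R_i,
\]
which is strictly positive since either $z\ge 2$ (first sum $>0$) or $z=1$ with $e_1\ge 2$ (second sum $>0$). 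Finally, for $r=2^e\cdot r_{\mathrm{odd}}$ with $r_{\mathrm{odd}}\ge 3$ odd, $E(r)$ splits as $2(2^e-1)r_{\mathrm{odd}}+E(r_{\mathrm{odd}})$, and substituting the odd-case bound produces
\[
E(r)-B(r)=r_{\mathrm{odd}}\bigl(2^{e-1}-2\bigr)+\frac{3r_{\mathrm{odd}}+1}{2}+\Bigl(E(r_{\mathrm{odd}})-\frac{3r_{\mathrm{odd}}-1}{2}\Bigr),
\]
which is positive for all $e\ge 1$: the case $e=1$ collapses to $\frac12$ plus the non-negative odd-part excess, while $e\ge 2$ makes the first term non-negative and the rest strictly positive.

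The main obstacle I anticipate is the telescoping in the general odd case: the nested $\odot$-operation contributes, for each prime factor, a term scaled by the partial product $R_i$, and one has to bring the resulting weighted sum into a form where positivity is transparent. Once the identity above is secured, all four sub-cases are short arithmetic and the conclusion is immediate from the limit $S_j(rk,k,q)/|_r\C_k|\to 0$.
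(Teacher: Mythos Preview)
Your proof is correct and follows essentially the same strategy as the paper: case-split on the prime factorisation of $r$ and compare the asymptotic exponents of $q^k$, using the same telescoping in the odd-composite case (the paper's Case~4 is dispatched with ``by a similar argument'', whereas you carry out the arithmetic explicitly). One minor slip that does not affect validity: in your $e=1$ sub-case of Case~4, the residual term is $\frac{r_{\mathrm{odd}}+1}{2}$, not $\frac{1}{2}$.
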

\begin{proof}
We divide the analysis according to the following cases.\\
\underline{\textbf{Case 1: $n/k$ power of $2$.}} In this case $n=2^ek$, with $e \geq 3$. We have that:
\begin{equation} \label{eq:asympotyclarge2^e}
S_3(2^ek,k,q)
\sim q^{(2^{e}+2^{e-1}-1)k}, \ \ \  \mbox{ as }k \rightarrow + \infty \ \ \ (\mbox{or as }q \rightarrow + \infty)
\end{equation}

Therefore, comparing \eqref{eq:asymptoticRTT+RTT} with \eqref{eq:asympotyclarge2^e}, we get
\[
S_{3}(2^ek,k,q)=o_k(\lvert {_{2^e}}\mathcal{C}_k \rvert) \ \ \ \mbox{ and } \ \ \ S_{3}(2^ek,k,q)=o_q(\lvert {_{2^e}}\mathcal{C}_k \rvert).
\]

\underline{\textbf{Case 2: $n/k$ power of an odd prime $p$.}} In this case $n=p^ek$, where $p$ is an odd prime and $e \geq 2$. We observe that $\ell+1=\lceil\frac{p^e}{2}\rceil -1$, hence:
\begin{equation} \label{eq:asympotyclargep^e}
S_3(p^ek,k,q)
\sim q^{\frac{3p^e-1}{2}k}, \ \ \  \mbox{ as }k \rightarrow + \infty \ \ \ (\mbox{or as }q \rightarrow + \infty).
\end{equation}

Moreover, since $e \geq 2$, we have $\frac{(3p-1)(p^e-1)}{2(p-1)}>\frac{3p^e-1}{2}$, and therefore, by comparing \eqref{eq:asymptoticsamelarge} with \eqref{eq:asympotyclargep^e}, we also conclude that 
\[
S_{3}(p^ek,k,q)=o_k(\lvert {_{p^e}}\mathcal{C}_k \rvert) \ \ \ \mbox{ and } \ \ \ S_{3}(p^ek,k,q)=o_q(\lvert {_{p^e}}\mathcal{C}_k \rvert).
\]

\underline{\textbf{Case 3: $n/k$ is a composite odd number.}} Here, $r=p_z^{e_z}\cdots p_1^{e_1}$, where $p_1>\cdots>p_z$ are odd primes and $z>1$. Since $\ell+1=\left\lceil \frac{p_z^{e_z}\dots p_1^{e_1}}{2}\right\rceil-1$, we have that:
\begin{equation} \label{eq:asympotyclargep_1^e_1p_2...}
S_3(p_z^{e_z}\cdots p_1^{e_1}k,k,q)
\sim q^{\frac{3p_z^{e_z}\cdots p_1^{e_1}-1}{2}k}, \ \ \  \mbox{ as }k \rightarrow + \infty \ \ \ (\mbox{or as }q \rightarrow + \infty).
\end{equation}

We observe that, since for every $i=1,\dots,z$,
\[
\frac{(3p_i-1)(p_i^{e_i}-1)}{2(p_i-1)} \geq \frac{3p_i^{e_i}-1}{2}
\] it follows that: 
\[
\begin{array}{rl}
     \sum\limits_{i=1}^z\left(\frac{(3p_i-1)(p_i^{e_i}-1)}{2(p_i-1)}\prod\limits_{j=1}^{i-1}p_j^{e_j}k\right) &  \geq   \sum\limits_{i=1}^z\left(\frac{3p_i^{e_i}-1}{2}\prod\limits_{j=1}^{i-1}p_j^{e_j}k\right) \\
      &  =  \frac{1}{2}\sum\limits_{i=1}^z\left(3\prod\limits_{j=1}^{i}p_j^{e_j}-\prod\limits_{j=1}^{i-1}p_j^{e_j}\right)k \\
     & > \frac{3p_z^{e_z}\cdots p_1^{e_1}-1}{2}k.
\end{array}
\]

Therefore, taking into account \eqref{eq:asymptoticdifferentlarge} and \eqref{eq:asympotyclargep_1^e_1p_2...}, we conclude that:  
\[
S_{3}(p_z^{e_z}\cdots p_1^{e_1}k,k,q)=o_k(\lvert _{p_z^{e_z}\dots p_1^{e_1}}\C_k \rvert) \ \ \ \mbox{ and } \ \ \ S_{3}(p_z^{e_z}\cdots p_1^{e_1}k,k,q)=o_q(\lvert _{p_z^{e_z}\dots p_1^{e_1}}\C_k \rvert).
\]
\underline{\textbf{Case 4: $n/k$ is a composite even number.}} Consider the case where $n=rk$, with $r=2^ep_z^{e_z}\cdots p_1^{e_1}$ being even and the product of distinct primes. In this case, by a similar argument of Case 3, we again obtain
\[
S_{3}(2^ep_z^{e_z}\cdots p_1^{e_1},k,q)=o_k(\lvert _{2^ep_z^{e_z}\dots p_1^{e_1}}\C_k \rvert) \ \ \ \mbox{ and } \ \ \ S_{3}(2^ep_z^{e_z}\cdots p_1^{e_1},k,q)=o_q(\lvert _{2^ep_z^{e_z}\dots p_1^{e_1}}\C_k \rvert).
\]
\end{proof}

\begin{remark}
Actually, we have also provided a construction for $r=4$. In this case, the largest known construction is Construction 2) in Table \ref{table:constructionscyclic}. We note that 
\[
S_2(4k,k,q) \sim \frac{1}{2}q^{6k}, \ \ \ \mbox{as } k \rightarrow + \infty \ \ \ (\mbox{or as } q \rightarrow + \infty)
\]
and also, by \eqref{eq:asymptoticRTT+RTT}, that
\[
\lvert {_{2^2}}\mathcal{C}_k \rvert \sim \frac{1}{4}q^{6k}, \ \ \ \mbox{as } k \rightarrow + \infty \ \ \ (\mbox{or as } q \rightarrow + \infty).
\]
Therefore, $_{2^2}\C_k$ has the same asymptotic behavior of $S_2(n,k,q))$, although it should be noted that, in general, $S_2(n,k,q)>\lvert  {_{2^2}}\mathcal{C}_k \rvert $.
\end{remark}

\subsection{Comparison with the Johnson type bound II} 
We compare the size of the codes $_r\C_k$ of Table \ref{table: Our} with the Johnson type bound II as one of the parameters $q$ or $k$ (and so $n=rk$) tends to infinity and the other parameter is treated as constant. The Johnson type bound II for $d=2k-2$ reads as 
\[
\cA_q(n,k,2k-2) \leq J_q(n,2k-2,k):=\left\lfloor\frac{q^n-1}{q^k-1}\left\lfloor\frac{q^{n-1}-1}{q^{k-1}-1}\right\rfloor\right\rfloor.
\]

We also note that when $n=rk$, with $r\geq 2$, we have
\begin{equation} \label{eq:asymtoticanticode}
J_q(rk,2k-2,k) \sim q^{2(r-1)k}, \ \ \ \mbox{as }k \rightarrow + \infty \ \ \ (\mbox{or as }q \rightarrow + \infty ).
\end{equation}

\begin{remark}
Up to now, very few constructions of cyclic subspace codes in $\cG_q(n,k)$ with minimum distance $2k-2$ whose sizes asymptotically reach the Johnson type bound II are known. For $n=2k$, as noted in \cite{roth2017construction}, the size of Construction 1 in \Cref{table:constructionscyclic} approaches the Johnson type bound II within a factor of  $1+o_k(1)$ of the Johnson type bound II as $k$ goes to infinity. Furthermore, it is also easy to verify that the size of Construction 1 also remains within a factor of $1/2+o_q(1)$ of the Johnson type bound II as $q$ goes to infinity. Similarly, for $n=4k$, in \cite{yu2024two} establishes that the size of Construction 2 in \Cref{table:constructionscyclic} remains within a factor of $1/2+o_k(1)$ (of $1/2+o_q(1)$) of the Johnson type bound II as $k$ (as $q$) approaches to infinity.  
Furthermore, the case $n=3k$ is the only one where there exists a family of cyclic subspace codes in $\mathcal{G}_q(3k,k)$ with minimum distance $2k-2$, whose size asymptotically achieves the Johnson type bound II (see \cite[Remark 2]{zhang2024large}).
Also, we known that for $r \geq 6$ the size of the codes from Constructions 3), 4) and 5) presented in \Cref{table:constructionscyclic}, exhibit the best asymptotic behavior among the constructions for codes in $\mathcal{G}_q(rk,k)$ with minimum distance $2k-2$. However, the sizes of these codes are never asymptotically equal to the Johnson type bound II $J_q(rk,2k-2,k)$. Indeed, taking into account Equation \eqref{eq:asymtoticanticode} and Table \ref{table:constructionscyclic}, we observe that the condition
\[2(r-1)k=(\ell+1)k+n, \ \ \ \mbox{ where } \ell=\lceil r/2 \rceil -2\] is satisfied only when $r=3$. Therefore, $S_3(rk,k,q) \sim J_q(rk,2k-2,k)$ if and only if $r=3$ while for all other values of $r$, we have $S_3(rk,k,q)=o_k(J_q(rk,2k-2,k))$ and $S_3(rk,k,q)=o_q(J_q(rk,2k-2,k))$. 
\end{remark}

In what follows, we show that the codes $_{2^e}\C_k$  and $_{3^{e_1}2^e}\C_k$ as in Table \ref{table: Our}, for $n/k\geq 6$ and $q> 2$, have sizes that are within a factor of $1/2^e+o_k(1)$ (and of $1/2^e+o_q(1)$) as $k$ (as $q$) goes to infinity. This proves that the size of the family of codes ${_{2^e}}\mathcal{C}_k$ and $_{3^{e_1}2^e}\C_k$ exhibits the same asymptotic behavior as $J(n,k,q)$ as $q$ or $k$ grows.

\begin{theorem}
    For any $e \geq 2$, the family of codes ${_{2^e}}\mathcal{C}_k \subseteq \mathcal{G}_q(2^ek,k)$ as in \Cref{constr:nestedRRT}, is such that 
    \[
    \lim_{k \rightarrow + \infty}\frac{J(n,k,q)}{\lvert {_{2^e}}\mathcal{C}_k \rvert}=2^e=\lim_{q \rightarrow + \infty}\frac{J(n,k,q)}{\lvert {_{2^e}}\mathcal{C}_k \rvert}.
    \]
\end{theorem}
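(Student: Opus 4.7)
The plan is to combine two asymptotic estimates already recorded earlier in Section \ref{sec:comparison} and take their quotient. On one side I would start from the explicit size formula of Theorem \ref{thm:SpecialRRT},
\[
|{_{2^e}}\mathcal{C}_k| \;=\; \prod_{i=1}^{e}\!\left(\left\lfloor\frac{q-1}{2}\right\rfloor \cdot \frac{q^{2^ik}-1}{q-1}\right),
\]
and apply the asymptotic \eqref{eq:asymptoticRTT+RTT}, which yields $|{_{2^e}}\mathcal{C}_k|\sim \frac{1}{2^e}\,q^{2(n-k)}$ both as $k\to\infty$ and as $q\to\infty$. The key ingredients are that the factor $\lfloor(q-1)/2\rfloor/(q-1)$ contributes $1/2$ at leading order, each ratio $(q^{2^ik}-1)/(q-1)$ contributes $q^{2^ik-1}$ at leading order, and the exponents sum via $\sum_{i=1}^{e}2^i k=2(2^e-1)k=2(n-k)$.

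On the other side I would invoke the Johnson type bound II
\[
J_q(n,2k-2,k)=\left\lfloor \frac{q^n-1}{q^k-1}\left\lfloor\frac{q^{n-1}-1}{q^{k-1}-1}\right\rfloor\right\rfloor,
\]
together with the asymptotic \eqref{eq:asymtoticanticode}, which for $r=2^e$ reads $J_q(n,2k-2,k)\sim q^{2(2^e-1)k}=q^{2(n-k)}$ along both limits. Dividing the two asymptotic equivalents, the factor $q^{2(n-k)}$ cancels and the ratio tends to $2^e$, which delivers the two stated limits simultaneously.

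The one step that requires some care — and the main potential obstacle — is controlling the outer and inner floor functions wrapping $J_q(n,2k-2,k)$, as well as the floor $\lfloor(q-1)/2\rfloor$ inside $|{_{2^e}}\mathcal{C}_k|$, so that they do not contaminate the leading-order cancellation. However, since in both the $k\to\infty$ and $q\to\infty$ regimes the arguments of the floors inside $J_q$ grow without bound and $\lfloor x\rfloor/x\to 1$, while the remaining floor contributes exactly the promised $1/2^e$ factor, no further analysis is needed. Everything else is already in place from the preceding computations in Section \ref{sec:comparison}.
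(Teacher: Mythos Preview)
Your proposal is correct and is essentially identical to the paper's own proof, which simply says the assertion follows immediately by comparing \eqref{eq:asymptoticRTT+RTT} with \eqref{eq:asymtoticanticode}. Your extra paragraph about the floor functions is a reasonable elaboration, but the paper omits it entirely since those estimates are already absorbed into the two cited asymptotic equivalences.
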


\begin{proof}
The assertion immediately follows by comparing \eqref{eq:asymptoticRTT+RTT} with \eqref{eq:asymtoticanticode}.
\end{proof}

Now, we consider the case $n=2^e3^{e_1}k$. 

\begin{theorem}
    For any $e,e_1 \geq 1$, the family of codes ${_{2^{e}3^{e_1}}}\mathcal{C}_k \subseteq \mathcal{G}_q(2^{e}3^{e_1}k,k)$ defined as in \Cref{constr:roth+large}, is such that 
    \[
    \lim_{k \rightarrow + \infty}\frac{J(n,k,q)}{\lvert {_{2^{e}3^{e_1}}}\mathcal{C}_k \rvert}=2^{e}=\lim_{q \rightarrow + \infty}\frac{J(n,k,q)}{\lvert {_{2^{e}3^{e_1}}}\mathcal{C}_k \rvert}.
    \]
\end{theorem}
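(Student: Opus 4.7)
The plan is to reduce this to the asymptotic analysis already carried out in the excerpt, by exploiting the multiplicativity of $|_{2^e 3^{e_1}}\mathcal{C}_k|$ under the operation $\odot$ together with the explicit asymptotic formulas derived for the building blocks.

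First, by \Cref{constr:roth+large} specialised to $z=1$, $p_1=3$, $e_1=e_1$, we have
\[
{_{2^e 3^{e_1}}}\mathcal{C}_k = {_{2^e}}\mathcal{C}_{3^{e_1}k}\odot {_{3^{e_1}}}\mathcal{C}_k,
\]
and by \Cref{thm:roth+large} (which is itself an application of \Cref{th:nested>2})
\[
\bigl|{_{2^e 3^{e_1}}}\mathcal{C}_k\bigr| = \bigl|{_{2^e}}\mathcal{C}_{3^{e_1}k}\bigr|\cdot \bigl|{_{3^{e_1}}}\mathcal{C}_k\bigr|.
\]

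Next, I invoke the two asymptotic estimates already established. Applying \eqref{eq:asymptoticRTT+RTT} with $k$ replaced by $m:=3^{e_1}k$ yields
\[
\bigl|{_{2^e}}\mathcal{C}_{3^{e_1}k}\bigr| \sim \frac{1}{2^e}\, q^{2\cdot 3^{e_1}k\,(2^e-1)},
\]
both as $k\to+\infty$ and as $q\to+\infty$. Likewise, \eqref{eq:asymptoticsamelarge} with $p=3$ and $e=e_1$ gives $\frac{(3p-1)(p^{e_1}-1)}{2(p-1)}=2(3^{e_1}-1)$, so
\[
\bigl|{_{3^{e_1}}}\mathcal{C}_k\bigr| \sim q^{2(3^{e_1}-1)k},
\]
again as $k\to+\infty$ (or $q\to+\infty$). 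Multiplying these two asymptotics and collecting the exponents yields
\[
\bigl|{_{2^e 3^{e_1}}}\mathcal{C}_k\bigr| \sim \frac{1}{2^e}\, q^{\,2k\bigl[3^{e_1}(2^e-1)+(3^{e_1}-1)\bigr]} = \frac{1}{2^e}\, q^{\,2k(2^e 3^{e_1}-1)} = \frac{1}{2^e}\, q^{2(n-k)},
\]
where $n = 2^e 3^{e_1} k$.

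Finally, by \eqref{eq:asymtoticanticode} applied with $r = 2^e 3^{e_1}$, one has $J_q(n,2k-2,k) \sim q^{2(r-1)k} = q^{2(n-k)}$ in both asymptotic regimes. Dividing, we conclude
\[
\lim_{k\to+\infty}\frac{J_q(n,2k-2,k)}{\bigl|{_{2^e 3^{e_1}}}\mathcal{C}_k\bigr|}
= 2^e
= \lim_{q\to+\infty}\frac{J_q(n,2k-2,k)}{\bigl|{_{2^e 3^{e_1}}}\mathcal{C}_k\bigr|},
\]
as required. There is no real obstacle here: the proof is a bookkeeping exercise that hinges on the identity $3^{e_1}(2^e-1)+(3^{e_1}-1) = 2^e 3^{e_1}-1$, which captures precisely why combining the $2^e$-tower of Roth–Raviv–Tamo–type codes with the $3^{e_1}$-tower of large-Sidon–type codes fills up the Johnson exponent $2(n-k)$ up to the factor $1/2^e$ inherited exclusively from the Roth–Raviv–Tamo part (the $3^{e_1}$-tower being itself asymptotically optimal).
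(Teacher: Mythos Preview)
Your proof is correct and follows essentially the same approach as the paper: the paper invokes \eqref{eq:asymptoticRTT+large} directly (which was itself obtained by multiplying the asymptotics \eqref{eq:asymptoticRTT+RTT} and \eqref{eq:asymptoticdifferentlarge}) to get $\lvert {_{2^e3^{e_1}}}\mathcal{C}_k\rvert \sim \frac{1}{2^e}q^{2(n-k)}$, and then applies \eqref{eq:asymtoticanticode}, whereas you simply unpack that multiplication explicitly for the special case $z=1$, $p_1=3$. The key identity $3^{e_1}(2^e-1)+(3^{e_1}-1)=2^e3^{e_1}-1$ you highlight is exactly the computation hidden in the paper's one-line simplification $2\cdot 3^{e_1}k(2^e-1)+2(3^{e_1}-1)k=2(n-k)$.
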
 

\begin{proof}
    By \eqref{eq:asymptoticRTT+large}, we know
\[
\lvert _{2^e3^{e_1}}\C_k \rvert \sim \frac{1}{2^e}q^{2\cdot 3^{e_1}k(2^e-1)+2(3^{e_1}-1)k}=\frac{1}{2^e}q^{2(n-k)}, \ \ \ \mbox{ as }k \rightarrow + \infty \ \ \ (\mbox{or as }q \rightarrow + \infty).
\]
Therefore, from \eqref{eq:asymtoticanticode}, the result follows directly.
\end{proof}

Now, we consider the case where $n=3^ek$, with $e\geq 2$. We prove that the size of the code $_{3^e}\C_k$ as in Table \ref{table: Our} asymptotically approaches the Johnson type bound II.

\begin{theorem}
    For any $e \geq 2$, the family of codes ${_{3^e}}\mathcal{C}_k \subseteq \mathcal{G}_q(3^ek,k)$, as defined in \Cref{constr:usinglargenestedpequal} is asymptotically optimal in $k$ and in $q$. 
\end{theorem}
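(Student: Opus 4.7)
The plan is to reduce the statement to a direct asymptotic comparison between the explicit size formula for ${_{3^e}}\mathcal{C}_k$ given by Theorem \ref{thm:theoremlargewithn=pek} and the Johnson type bound II expressed in \eqref{eq:asymtoticanticode}. Specialising Theorem \ref{thm:theoremlargewithn=pek} to the prime $p=3$, we have $\ell=(p-3)/2=0$, so that $(\ell+1)\,3^{i-1}k=3^{i-1}k$ and the size simplifies to
\[
|{_{3^e}}\mathcal{C}_k|=\prod_{i=1}^e\frac{q^{3^{i-1}k}(q^{3^{i-1}k}-1)(q^{3^ik}-1)}{q^{3^{i-1}k}-1}=\prod_{i=1}^e q^{3^{i-1}k}(q^{3^ik}-1).
\]

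Next, I would extract the dominant term in each factor. As $k\to+\infty$ (or $q\to+\infty$), each factor behaves like $q^{3^{i-1}k}(q^{3^ik}-1)\sim q^{(3^{i-1}+3^i)k}=q^{4\cdot 3^{i-1}k}$, hence using the geometric sum $\sum_{i=1}^e 3^{i-1}=\frac{3^e-1}{2}$ we obtain
\[
|{_{3^e}}\mathcal{C}_k|\sim\prod_{i=1}^e q^{4\cdot 3^{i-1}k}=q^{4k\cdot\frac{3^e-1}{2}}=q^{2(3^e-1)k}.
\]

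Finally, applying \eqref{eq:asymtoticanticode} with $r=3^e$ gives $J_q(3^ek,2k-2,k)\sim q^{2(3^e-1)k}$, and combining with the previous estimate yields
\[
\lim_{k\rightarrow+\infty}\frac{|{_{3^e}}\mathcal{C}_k|}{J_q(3^ek,2k-2,k)}=1=\lim_{q\rightarrow+\infty}\frac{|{_{3^e}}\mathcal{C}_k|}{J_q(3^ek,2k-2,k)},
\]
so ${_{3^e}}\mathcal{C}_k$ is asymptotically optimal both in $k$ and in $q$. The proof is essentially a bookkeeping computation once Theorem \ref{thm:theoremlargewithn=pek} is available; the only step requiring care is the exponent sum, where the particular value $p=3$ (and hence $\ell=0$) is crucial, since it is precisely this choice that makes $(3p-1)(p^e-1)/(2(p-1))=2(3^e-1)$ match the Johnson exponent $2(r-1)=2(3^e-1)$. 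There is no genuine obstacle: the hard work has already been done in proving Theorem \ref{thm:theoremlargewithn=pek}, and the present statement is the arithmetic consequence of specialising $p=3$ in that formula.
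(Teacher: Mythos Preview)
Your proof is correct and follows essentially the same approach as the paper: specialise the size formula from Theorem~\ref{thm:theoremlargewithn=pek} to $p=3$, obtain the asymptotic $\lvert {_{3^e}}\mathcal{C}_k\rvert\sim q^{2(3^e-1)k}$, and compare with the Johnson asymptotic \eqref{eq:asymtoticanticode}. The paper simply cites the precomputed asymptotic \eqref{eq:asymptoticsamelarge} rather than re-deriving it, but your direct computation of the exponent sum is equally valid and arrives at the same conclusion.
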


\begin{proof}
From \eqref{eq:asymptoticsamelarge}, we have
\[
\lvert {}_{3^e}\mathcal{C}_k \rvert \sim q^{2(3^e - 1)k} = q^{2(n - k)}, \quad \text{as } k \to +\infty \quad (\text{or as } q \to +\infty).
\]
On the other hand, from \eqref{eq:asymtoticanticode}, we know that \( J(n, k, q) \sim q^{2(n - k)} \). Therefore, the codes \( {}_{3^e}\mathcal{C}_k \) satisfy
\[
\lim_{k \to +\infty} \frac{J(n, k, q)}{\lvert {}_{3^e}\mathcal{C}_k \rvert} = 1 = \lim_{q \to +\infty} \frac{J(n, k, q)}{\lvert {}_{3^e}\mathcal{C}_k \rvert},
\]
from which the result follows directly.

\end{proof}

The above result shows that the family of codes $_{3^e}\C_k$ is asymptotically optimal, achieving the Johnson bound both in the limit as $k\rightarrow +\infty$ and $q\rightarrow +\infty$. Consequently, these codes provide explicit constructions of optimal cyclic subspace codes in $\mathcal{G}_q(n,k)$ , which -for infinitely many values of the ratio $n/k$- asymptotically attain the Johnson bound.

In conclusion, for any integer $r=n/k$, where $r$ is either a power of 2 or a product of positive powers of 2 and 3, we have provided constructions of codes in $\mathcal{G}_q(rk,k)$, for $q> 2$, whose sizes are within a factor of $1/2^e+o_k(1)$ (and of $1/2^e+o_q(1)$) of the Johnson type bound II when $r=2^e$ or $r=2^e3^{e_1}$. In addition, in the case $r=3^e$, the codes $_{3^e}\C_k$ form the first family of cyclic subspace codes in $\mathcal{G}_q(rk,k)$ with minimum distance $2k-2$, whose sizes -for infinitely many values of the integer $r$- asymptotically attain the Johnson type bound II.

\section{Conclusion and Open Problems}
\label{sec:conclusions}
In this article, we introduced a new method for constructing large cyclic subspace codes in $\mathcal{G}_q(rk,k)$ with minimum distance $2k-2$, when $r$ is a composite integer. This operation not only provides new codes with minimum distance $2k-2$ for all parameters considered, but also yields codes with larger size compared to the ones of existing constructions. In addition, we provide the first infinite families of cyclic subspace codes in $\cG_q(rk,k)$ with minimum distance $2k-2$ and sizes that asymptotically approach the Johnson type bound II.
We observe that for values of $r$ different from either $r=2^e$, $e\geq 2$ and $q>2$ or $r=2^e3^{e_1}$, with $e,e_1\geq 1$, the size of the codes $_r\mathcal{C}_k\subseteq\cG_q(rk,k)$ remains asymptotically far from the Johnson type bound II. However, a promising direction for future research is to investigate the existence of cyclic subspace codes in $\mathcal{G}_q(pk, k)$ with minimum distance $2k-2$, for any prime $p \neq 3$, whose sizes asymptotically approach the Johnson type bound II. With the $\odot$ operation introduced in this paper, then it would be possible to obtain, for any composite integer $r$, cyclic subspace codes in $\mathcal{G}_q(rk, k)$ with minimum distance $2k-2$, such that their sizes asymptotically achieve the Johnson type bound II.
Moreover, in this work we have primarily focused on the case where $k \mid n$, which enabled our algebraic constructions through towers of field extensions. However, the case where $k$ does not divide $n$ remains largely unexplored. Finally, we point out that our new constructions of multi-orbit cyclic subspace codes yield large optical orthogonal codes, as shown in \cite{ozbudak2025using}. As a consequence, we immediately obtain new families of optical orthogonal codes with new parameters.

\section*{Acknowledgements}
The research was partially supported by the Italian National Group for Algebraic and Geometric Structures and their Applications (GNSAGA - INdAM).

\bibliographystyle{abbrv}
\bibliography{biblio}

\begin{thebibliography}{10}

\bibitem{ahlswede2001perfect}
R.~Ahlswede, H.~K. Aydinian, and L.~H. Khachatrian.
\newblock On perfect codes and related concepts.
\newblock {\em Designs, Codes and Cryptography}, 22(3):221--237, 2001.

\bibitem{ben2016subspace}
E.~Ben-Sasson, T.~Etzion, A.~Gabizon, and N.~Raviv.
\newblock Subspace polynomials and cyclic subspace codes.
\newblock {\em IEEE Transactions on Information Theory}, 62(3):1157--1165, 2016.

\bibitem{castello2025generalized}
C.~Castello.
\newblock On generalized sidon spaces.
\newblock {\em Linear Algebra and its Applications}, 704:270--308, 2025.

\bibitem{castello2025quasi}
C.~Castello, H.~Gluesing-Luerssen, O.~Polverino, and F.~Zullo.
\newblock Quasi-optimal cyclic orbit codes.
\newblock {\em arXiv preprint arXiv:2501.03802}, 2025.

\bibitem{castello2023constructions}
C.~Castello, O.~Polverino, P.~Santonastaso, and F.~Zullo.
\newblock Constructions and equivalence of sidon spaces.
\newblock {\em Journal of Algebraic Combinatorics}, 58(4):1299--1329, 2023.

\bibitem{cossidente2023combining}
A.~Cossidente, S.~Kurz, G.~Marino, F.~Pavese, et~al.
\newblock Combining subspace codes.
\newblock {\em Advances in mathematics of communications}, 17(3):536--550, 2023.

\bibitem{etzion2011error}
T.~Etzion and A.~Vardy.
\newblock Error-correcting codes in projective space.
\newblock {\em IEEE Transactions on Information Theory}, 57(2):1165--1173, 2011.

\bibitem{feng2021new}
T.~Feng and Y.~Wang.
\newblock New constructions of large cyclic subspace codes and sidon spaces.
\newblock {\em Discrete Mathematics}, 344(4):112273, 2021.

\bibitem{greferath2018network}
M.~Greferath, M.~O. Pav{\v{c}}evi{\'c}, N.~Silberstein, and M.~{\'A}. V{\'a}zquez-Castro.
\newblock {\em Network coding and subspace designs}.
\newblock Springer, 2018.

\bibitem{han2024new}
Y.~Han and X.~Cao.
\newblock A new construction of cyclic subspace codes.
\newblock {\em Cryptography and Communications}, pages 1--11, 2024.

\bibitem{heinlein2017tables}
D.~Heinlein, M.~Kiermaier, S.~Kurz, and A.~Wassermann.
\newblock Tables of subspace codes.
\newblock {\em arXiv preprint arXiv:1601.02864}, 2016.

\bibitem{huffman2021concise}
W.~C. Huffman, J.-L. Kim, and P.~Sol{\'e}.
\newblock {\em Concise encyclopedia of coding theory}.
\newblock Chapman and Hall/CRC, 2021.

\bibitem{khaleghi2009subspace}
A.~Khaleghi, D.~Silva, and F.~R. Kschischang.
\newblock Subspace codes.
\newblock In {\em IMA International Conference on Cryptography and Coding}, pages 1--21. Springer, 2009.

\bibitem{koetter2008coding}
R.~Koetter and F.~R. Kschischang.
\newblock Coding for errors and erasures in random network coding.
\newblock {\em IEEE Transactions on Information theory}, 54(8):3579--3591, 2008.

\bibitem{kschischang2021network}
F.~R. Kschischang.
\newblock Network codes.
\newblock In {\em Concise Encyclopedia of Coding Theory}, pages 685--714. Chapman and Hall/CRC, 2021.

\bibitem{kurz2024constructions}
S.~Kurz.
\newblock Constructions and bounds for subspace codes.
\newblock 2024.
\newblock \url{https://epub.uni-bayreuth.de/id/eprint/7398/}.

\bibitem{li2023cyclic}
Y.~Li and H.~Liu.
\newblock Cyclic constant dimension subspace codes via the sum of sidon spaces.
\newblock {\em Designs, Codes and Cryptography}, 91(4):1193--1207, 2023.

\bibitem{liu2023new}
X.-M. Liu, T.~Shi, M.-Y. Niu, L.-Z. SHEN, and Y.~Gao.
\newblock New constructions of sidon spaces and cyclic subspace codes.
\newblock {\em IEICE Transactions on Fundamentals of Electronics, Communications and Computer Sciences}, 106(8):1062--1066, 2023.

\bibitem{manganiello2008spread}
F.~Manganiello, E.~Gorla, and J.~Rosenthal.
\newblock Spread codes and spread decoding in network coding.
\newblock In {\em 2008 IEEE International Symposium on Information Theory}, pages 881--885. IEEE, 2008.

\bibitem{niu2024new}
M.~Niu, J.~Xiao, and Y.~Gao.
\newblock New constructions of large cyclic subspace codes via sidon spaces.
\newblock {\em Advances in Mathematics of Communications}, 18(4):1123--1137, 2024.

\bibitem{niu2020several}
Y.~Niu, Q.~Yue, and Y.~Wu.
\newblock Several kinds of large cyclic subspace codes via sidon spaces.
\newblock {\em Discrete Mathematics}, 343(5):111788, 2020.

\bibitem{otal2017cyclic}
K.~Otal and F.~{\"O}zbudak.
\newblock Cyclic subspace codes via subspace polynomials.
\newblock {\em Designs, Codes and Cryptography}, 85:191--204, 2017.

\bibitem{ozbudak2025using}
F.~{\"O}zbudak, P.~Santonastaso, and F.~Zullo.
\newblock Using multi-orbit cyclic subspace codes for constructing optical orthogonal codes.
\newblock {\em Cryptography and Communications}, pages 1--14, 2025.

\bibitem{roth2017construction}
R.~M. Roth, N.~Raviv, and I.~Tamo.
\newblock Construction of {S}idon spaces with applications to coding.
\newblock {\em IEEE Transactions on Information Theory}, 64(6):4412--4422, 2017.

\bibitem{santonastaso2022linearized}
P.~Santonastaso and F.~Zullo.
\newblock Linearized trinomials with maximum kernel.
\newblock {\em Journal of Pure and Applied Algebra}, 226(3):106842, 2022.

\bibitem{silva2008rank}
D.~Silva, F.~R. Kschischang, and R.~Koetter.
\newblock A rank-metric approach to error control in random network coding.
\newblock {\em IEEE transactions on information theory}, 54(9):3951--3967, 2008.

\bibitem{trautmann2013cyclic}
A.-L. Trautmann, F.~Manganiello, M.~Braun, and J.~Rosenthal.
\newblock Cyclic orbit codes.
\newblock {\em IEEE Transactions on Information Theory}, 59(11):7386--7404, 2013.

\bibitem{wang2003linear}
H.~Wang, C.~Xing, and R.~Safavi-Naini.
\newblock Linear authentication codes: bounds and constructions.
\newblock {\em IEEE Transactions on Information Theory}, 49(4):866--872, 2003.

\bibitem{wu2025new}
L.~Wu, Y.~Li, L.~Fang, and Y.~Niu.
\newblock New extended sidon spaces with dimension k+1.
\newblock In {\em Journal of Physics: Conference Series}, volume 2964, page 012071. IOP Publishing, 2025.

\bibitem{xia2009johnson}
S.-T. Xia and F.-W. Fu.
\newblock Johnson type bounds on constant dimension codes.
\newblock {\em Designs, Codes and Cryptography}, 50:163--172, 2009.

\bibitem{yu2024two}
S.~Yu and L.~Ji.
\newblock Two new constructions of cyclic subspace codes via sidon spaces.
\newblock {\em Designs, Codes and Cryptography}, 92(11):3799--3811, 2024.

\bibitem{zhang2022constructions}
H.~Zhang and X.~Cao.
\newblock Constructions of {S}idon spaces and cyclic subspace codes.
\newblock {\em Frontiers of Mathematics in China}, 17(2):275--288, 2022.

\bibitem{zhang2023constructions}
H.~Zhang and C.~Tang.
\newblock Constructions of large cyclic constant dimension codes via {S}idon spaces.
\newblock {\em Designs, Codes and Cryptography}, 91(1):29--44, 2023.

\bibitem{zhang2023further}
H.~Zhang and C.~Tang.
\newblock Further constructions of large cyclic subspace codes via {S}idon spaces.
\newblock {\em Linear Algebra and its Applications}, 661:106--115, 2023.

\bibitem{zhang2024large}
H.~Zhang, C.~Tang, and X.~Cao.
\newblock Large optimal cyclic subspace codes.
\newblock {\em Discrete Mathematics}, 347(7):114007, 2024.

\bibitem{zhang2023new}
H.~Zhang, C.~Tang, and X.~Hu.
\newblock New constructions of sidon spaces and large cyclic constant dimension codes.
\newblock {\em Computational and Applied Mathematics}, 42(5):230, 2023.

\bibitem{zhang2025three}
H.~Zhang, C.~Tang, and X.~Hu.
\newblock Three families of large cyclic subspace codes.
\newblock {\em Cryptography and Communications}, 17(1):139--163, 2025.

\bibitem{zhang2022new}
T.~Zhang and G.~Ge.
\newblock New constructions of {S}idon spaces.
\newblock {\em Journal of Algebraic Combinatorics}, pages 1--14, 2022.

\bibitem{zhao2019characterization}
W.~Zhao and X.~Tang.
\newblock A characterization of cyclic subspace codes via subspace polynomials.
\newblock {\em Finite Fields and Their Applications}, 57:1--12, 2019.

\bibitem{zullo2023multi}
F.~Zullo.
\newblock Multi-orbit cyclic subspace codes and linear sets.
\newblock {\em Finite Fields and Their Applications}, 87:102153, 2023.

\end{thebibliography}

{\small
\noindent Chiara Castello,\\
Dipartimento di Matematica e Fisica,\\ 
Universit\`a degli Studi della Campania ``Luigi Vanvitelli'',\\ 
Viale Lincoln, 5,\\ 
I--\,81100 Caserta, Italy\\
E-mail: chiara.castello@unicampania.it

\noindent Paolo Santonastaso,\\
Dipartimento di Matematica e Fisica,\\ 
Universit\`a degli Studi della Campania ``Luigi Vanvitelli'',\\ 
Viale Lincoln, 5,\\ 
I--\,81100 Caserta, Italy\\
E-mail: paolo.santonastaso@unicampania.it \\
Dipartimento di Meccanica, Matematica e Management, \\
Politecnico di Bari, \\
Via Orabona 4, \\
70125 Bari, Italy \\
E-mail: paolo.santonastaso@poliba.it
}
\end{document}